\documentclass[runningheads]{llncs}

\usepackage[bookmarksdepth=3]{hyperref}

\usepackage[T1]{fontenc}
\usepackage{graphicx}

\usepackage{framed}
\usepackage{syntax}
\usepackage{mathtools}
\usepackage{amsmath}
\usepackage{amssymb}
\usepackage{amsfonts}
\usepackage{galois}
\usepackage{multirow}
\usepackage{stmaryrd}
\usepackage{listings}
\usepackage{cleveref}	
\usepackage[inline]{enumitem}
\newlist{inlinelist}{enumerate*}{1}
\setlist[inlinelist]{label=(\roman*)}

\usepackage{bussproofs}

\newcounter{csemantics}[section]
\makeatletter
\newcommand{\inflabel}[1]{
	\stepcounter{csemantics}
	\def\@currentlabel{(\arabic{section}.\arabic{csemantics})}
	\label{#1}
	\LeftLabel{(\arabic{section}.\arabic{csemantics}) }
}
\makeatother

\DeclarePairedDelimiterX\Set[1]{\lbrace}{\rbrace}{#1}
\newcommand{\ints}{\mathbb{Z}}
\newcommand{\nats}{\mathbb{N}}

\newcommand{\defn}{\triangleq}

\newcommand{\st}{\mathrel{.}}
\newcommand{\dom}[1]{dom(#1)}
\newcommand{\codom}[1]{codom(#1)}
\newcommand{\powerset}[1]{\wp(#1)}

\newcommand{\rarg}[1]{{\scalebox{0.6}{$#1$}}}
\DeclareRobustCommand{\dblupharpoon}{\mathrel{\ooalign{$\restriction$\cr\hidewidth\raisebox{0.5ex}{$\restriction$}\hidewidth\cr}}}
\DeclareRobustCommand{\doublerestriction}{\dblupharpoon}
\newcommand{\restrdom}[1]{\restriction_{\rarg{#1}}}               
\newcommand{\restrcodom}[1]{\mathord{\restriction^{\rarg{#1}}}}  
\newcommand{\restr}[2]{\restriction_{\rarg{#1}}^{\rarg{#2}}}      
\newcommand{\drestrdom}[1]{\doublerestriction_{\rarg{#1}}}         
\newcommand{\drestrcodom}[1]{\doublerestriction^{\rarg{#1}}}       
\newcommand{\drestr}[2]{\doublerestriction_{\rarg{#1}}^{\rarg{#2}}} 

\newcommand{\cupdot}{\mathbin{\dot{\cup}}\allowbreak}
\newcommand{\cupddot}{\mathbin{\ddot{\cup}}\allowbreak}
\newcommand{\bigcupdot}{\mathop{\dot{\bigcup}}}

\newcommand{\idfun}{\mathit{id}}

\newcommand{\code}[1]{\mbox{\lstinline@#1@}}
\newcommand{\freshCheap}{\functionWithSup{fresh}{\Cstate}{\Cheap}}
\newcommand{\new}[1][~]{\texttt{new}#1}
\usepackage{xcolor}

\newcommand{\Cpp}{{C\nolinebreak[4]\hspace{-.05em}\raisebox{.4ex}{\tiny\bf ++}}} 

\newcommand{\Cstate}{\mbox{\ensuremath{\mathcal{C}}}}
\newcommand{\Cheap}{\mbox{\ensuremath{\mathcal{H}}}}
\newcommand{\Cstack}{\mbox{\ensuremath{\mathcal{K}}}}
\newcommand{\cstate}{\mbox{\ensuremath{\chi}}}
\newcommand{\cheap}{\mbox{\ensuremath{\eta}}}
\newcommand{\cstack}{\mbox{\ensuremath{\kappa}}}
\newcommand{\Cset}{\mbox{\ensuremath{X}}} 

\newcommand{\Sval}{\mbox{\ensuremath{\mathcal{V}}}}    
\newcommand{\Smem}{\mbox{\ensuremath{\mathcal{M}}}}
\newcommand{\Sstate}{\mbox{\ensuremath{\mathcal{S}}}}
\newcommand{\sval}{\mbox{\ensuremath{\nu}}}     
\newcommand{\smem}{\mbox{\ensuremath{\mu}}}
\newcommand{\sstate}{\mbox{\ensuremath{\sigma}}}
\newcommand{\Sset}{\mbox{\ensuremath{\Sigma}}}

\newcommand{\decomposefun}{\mathsf{decompose}}
\newcommand{\composefun}{\mathsf{compose}}
\newcommand{\explodefun}{\mathsf{explode}}
\newcommand{\implodefun}{\mathsf{implode}}

\newcommand{\alphasplit}{\alpha_{{\scalebox{0.6}{$\Sstate$}}}}
\newcommand{\gammasplit}{\gamma_{{\scalebox{0.6}{$\Sstate$}}}}
\newcommand{\alphasplitdot}{\dot{\alphasplit}}
\newcommand{\gammasplitdot}{\dot{\gammasplit}}

\newcommand{\absdom}[1]{\ensuremath{#1}^{\sharp}}   

\newcommand{\Aval}{\absdom{\Sval}}            
\newcommand{\Amem}{\absdom{\Smem}}
\newcommand{\Astate}{\absdom{\Sstate}}
\newcommand{\aval}{\absdom{\sval}}         
\newcommand{\amem}{\absdom{\smem}}
\newcommand{\astate}{\absdom{\sstate}}

\newcommand{\MemID}{\mbox{\ensuremath{\mathsf{MemID}}}}
\newcommand{\MId}{\mbox{\ensuremath{\mathsf{I}}}}
\newcommand{\mId}{\iota}

\newcommand{\Sub}{\mbox{\ensuremath{\mathsf{Sub}}}}
\newcommand{\sub}{\mbox{\ensuremath{\mathsf{sub}}}}

\newcommand{\applysubfun}{\mathsf{applySub}}

\newcommand{\function}[2]{\mathsf{#1}_{\scalebox{0.6}{$#2$}}}   
\newcommand{\functionWithSup}[3]{\function{#1}{#2}^{\scalebox{0.6}{$#3$}}}   

\newcommand{\assumefunAV}{\function{assume}{\Aval}}
\newcommand{\assumefunAM}{\function{assume}{\Amem}}
\newcommand{\assignfunAV}{\function{assign}{\Aval}}
\newcommand{\assignfunAM}{\function{assign}{\Amem}}
\newcommand{\memidfunAM}{\function{memId}{\Amem}}
\newcommand{\isSumfun}{\function{isSum}{\Amem}}
\newcommand{\evalfun}{\function{eval}{\Aval}}
\newcommand{\pointedbyfun}{\function{addrOf}{\Amem}}
\newcommand{\pointstofun}{\function{pointsTo}{\Amem}}
\newcommand{\accessfun}{\function{access}{\Amem}}
\newcommand{\Rfun}{\function{R}{\Amem}}
\newcommand{\freshSsplit}{\functionWithSup{fresh}{\Amem}{\Cstack}}
\newcommand{\freshHsplit}{\functionWithSup{fresh}{\Amem}{\Cheap}}
\newcommand{\gammaA}{\function{\gamma}{\Astate}}
\newcommand{\gammaAV}{\function{\gamma}{\Aval}}
\newcommand{\gammaAM}{\function{\gamma}{\Amem}}
\newcommand{\gammaID}{\function{\gamma}{\MemID}}
\newcommand{\gammaIDF}{\function{\gamma}{\MemID f}}
\newcommand{\rhobase}{\rho_b}   
\newcommand{\rhofield}{\rho_f}  
\newcommand{\Avalmemid}{\Sval^{\scalebox{0.6}{\MemID}}}
\newcommand{\Avalcid}{\Sval^{\scalebox{0.6}{\cid}}}

\newcommand{\csem}[1]{\llbracket \hspace{1pt} #1 \hspace{1pt} \rrbracket}
\newcommand{\ccsem}[1]{\llbracket \hspace{1pt} #1 \hspace{1pt} \rrbracket^{\wp}}
\newcommand{\ssem}[1]{\llparenthesis \hspace{1pt} #1 \hspace{1pt} \rrparenthesis}
\newcommand{\cssem}[1]{\llparenthesis \hspace{1pt} #1 \hspace{1pt} \rrparenthesis^{\wp}}
\newcommand{\asem}[1]{\llparenthesis \hspace{1pt} #1 \hspace{1pt} \rrparenthesis^{\sharp}}

\newcommand{\sctext}[1]{\mbox{\normalfont\textsc{#1}}}

\newcommand{\mull}{\mu\mathsf{LL}}
\newcommand{\pexps}{\sctext{Pe}}
\newcommand{\pexp}{\mathsf{p}}
\newcommand{\vexps}{\sctext{Ve}}
\newcommand{\vexp}{\mathsf{v}}
\newcommand{\fexps}{\sctext{Fe}}
\newcommand{\fexp}{\mathsf{f}}
\newcommand{\hexps}{\sctext{He}}
\newcommand{\hexp}{\mathsf{h}}
\newcommand{\lhss}{\sctext{Lhs}}
\newcommand{\lhs}{\mathsf{lhs}}
\newcommand{\rhss}{\sctext{Rhs}}
\newcommand{\rhs}{\mathsf{rhs}}

\newcommand{\deref}{\texttt{*}}
\newcommand{\addrof}{\texttt{\&}}
\newcommand{\objupdate}[3]{#1[#2 \mapsto #3]}

\newcommand{\mujs}{\mu\mathsf{JS}}
\newcommand{\exps}{\sctext{e}}
\newcommand{\ski}{\texttt{skip}}
\newcommand{\oexps}{\sctext{Oe}}
\newcommand{\aexps}{\sctext{Ae}}
\newcommand{\bexps}{\sctext{Be}}
\newcommand{\sexps}{\sctext{Se}}

\newcommand{\stmts}{\sctext{Stmt}}

\renewcommand{\exp}{\mathsf{e}}
\newcommand{\oexp}{\mathsf{o}}
\newcommand{\aexp}{\mathsf{a}}
\newcommand{\bexp}{\mathsf{b}}
\newcommand{\sexp}{\mathsf{s}}

\newcommand{\stmt}{\mathsf{st}}

\newcommand{\true}{\texttt{true}}
\newcommand{\false}{\texttt{false}}

\newcommand{\str}[1]{\texttt{{"}#1\texttt{"}}}

\newcommand{\concat}[2]{\texttt{concat(}#1,#2\texttt{)}}

\newcommand{\ifc}[3]{\mbox{\texttt{if} #1 }\{\,#2\,\} \mbox{\texttt{else}} \{\,#3\,\}}
\newcommand{\while}[2]{\mbox{\texttt{while #1}} \{\,#2\,\}}

\newcommand{\cval}{\sctext{Val}\ifmmode\else\scriptsize\fi}
\newcommand{\cvalbar}{\overline{\sctext{Val}}}
\newcommand{\cobj}{\sctext{Obj}}
\newcommand{\caddr}{\sctext{Addr}}
\newcommand{\caddrs}{\sctext{Addr}\textsubscript{S}}
\newcommand{\caddrh}{\sctext{Addr}\textsubscript{H}}
\newcommand{\caddrr}{\sctext{Addr}\textsubscript{R}}
\newcommand{\caddrp}{\sctext{Addr}\textsuperscript{+}}
\newcommand{\cid}{\sctext{Id}}

\newcommand{\objV}{o}

\newcommand{\emptyobj}{\{\ \}}

\usepackage[dvipsnames]{xcolor}

\newif\ifdraft\drafttrue

\newcommand{\rremark}[2]{{\color{red}(#1: #2)}}
\newcommand{\gremark}[2]{{\color{green}(#1: #2)}}
\newcommand{\cremark}[2]{{\color{cyan}(#1: #2)}}

\newcommand{\oremark}[2]{{\color{orange}(#1: #2)}}
\newcommand{\bremark}[2]{{\color{blue}(#1: #2)}}

\ifdraft
	\newcommand{\gb}[1]{\cremark{GB}{#1}}
	\newcommand{\luc}[1]{\rremark{LN}{#1}}
	
	\newcommand{\todo}[1]{\oremark{TODO}{#1}}
	\newcommand{\future}[1]{\bremark{FUTURE}{#1}}
	\newcommand{\pf}[1]{\gremark{PF}{#1}}
\else
	\newcommand{\gb}[1]{}
	\newcommand{\va}[1]{}
	\newcommand{\luc}[1]{}
	\newcommand{\todo}[1]{}
	\newcommand{\future}[1]{}
	\newcommand{\pf}[1]{}
\fi

\newcommand{\tuple}[1]{\langle #1 \rangle}

\usepackage[mathlines]{lineno} 
\allowdisplaybreaks

\begin{document}

\title{A Modular Framework for Stack-Heap and Value Abstractions (Extended Version)}
\titlerunning{A Modular Framework for Stack-Heap and Value Abstractions (Ext. Ver.)}

\author{Giacomo Boldini\orcidID{0009-0006-4741-0033} \and
	Luca Negrini$^{*}$\orcidID{0000-0001-9930-8854} \and
	Luca Olivieri\orcidID{0000-0001-8074-8980} \and
	Pietro Ferrara\orcidID{0000-0002-4678-933X}}

\authorrunning{G. Boldini et al.}

\institute{
	Ca' Foscari University of Venice, Venice, Italy\\
	\email{\{giacomo.boldini, luca.negrini, luca.olivieri, pietro.ferrara\}@unive.it}\\
	$^{*}$ Corresponding author
}

\maketitle

\begin{abstract}
	Advanced static program analysis requires reasoning on the semantics of
	non-trivial program behaviors (e.g., pointers and complex data structures
	such as lists and sets, functions, and objects) and how they affect the memory. In
	most programming languages, static and dynamic allocations are typically
	managed by the stack and the heap, respectively. However, how allocations behave
	and how the memory is managed at runtime can vary significantly depending on the
	programming language being analyzed. Proper handling of these aspects is
	essential, as an accurate memory model enables the detection of critical
	issues such as buffer overflows and underflows, use-after-free errors, and
	null pointer exceptions prior to execution, that is, before such erroneous
	behaviors occur.

	In this paper, we propose and formalize a generic memory framework to
	handle stack and heap memory during the analysis, that is able to support
	various behaviors from different programming languages (e.g., C, \Cpp, Java,
	and Python), while remaining parametric, allowing different memory and
	value analyses to be independently chosen and combined. It relies on the
	Abstract Interpretation theory and enables sound approximation of different
	memory models and program behaviors. We introduce a \emph{split state}
	abstraction that separates value and memory analyses into two modular
	abstract domains. These domains interact through a set of \emph{memory identifiers},
	along with a set of operations defined by the domains to
	manipulate them, allowing the framework to capture both value information
	and structural memory relationships.

	\keywords{
		Abstract interpretation
		\and Static analysis
		\and Memory handling
		\and Heap analysis
		\and Value analysis
		\and Program semantics.}
\end{abstract}

\section{Introduction}

During execution, a program uses memory to store and retrieve data. This memory
is typically split into the \emph{stack} and \emph{heap}. Stack memory is
used for \emph{static} allocation with a bounded size, such as local
variables within functions. In contrast, heap memory handles \emph{dynamic}
allocation, making it suitable for data and objects whose size can change at
runtime or that need to persist beyond the lifetime of a single function call.
Both kinds of memory can contain several types of values: from so-called
primitive types such as integers or floating point numbers, to arbitrarily
complex data structures such as lists, sets, functions, and objects.
Depending on the programming language, memory management can be completely
different. In languages like C or \Cpp, developers have complete control over
memory: allocations and deallocations are manual, and it is the developer's
responsibility to ensure that memory is used correctly (e.g., that a buffer is
only accessed within its allocated bounds, or that memory is deallocated when
it is no longer needed). In contrast, languages like Java or Python use
automatic memory management, where the runtime environment handles memory
allocation and deallocation, often through a garbage collection mechanism.

In static program analysis, precisely approximating the memory is crucial. An
accurate memory model enables detection of issues such as buffer overflows or
underflows, use-after-free, or null pointer dereferences. Some of these issues
also require a precise analysis of the values stored in memory, such as the
size of a buffer or the address of a pointer. Conversely, properties targeting
the values handled by the program, such as the absence of overflows or
arithmetic errors, require a precise analysis of the memory layout of the
program to retrieve the necessary information about values stored in memory.

Despite this close relation, the fields of memory analyses and value
analyses have developed orthogonally to one another, with several options to
pick for both~\cite{Andersen94,shape,Kanvar16HeapSurvey,MineLN,octagons,Cousot78}.
Any value analysis can be combined with any memory analysis, and
it is well-known that such a combination is required for any meaningful program
analysis~\cite{valueheap}.

Over the past decade, two main approaches have dominated the static analysis of programs:
\begin{inlinelist}
  \item analyzers that focus on value information, which first preprocess the program using a dedicated heap analysis and then replace heap accesses with symbolic variables for example Clousot~\cite{fahndrichclousot}, and
  \item heap abstraction techniques for example TVLA~\cite{lev2000tvla} that either do not track value information or require manual extensions such as adding specific predicates to capture particular value properties.
\end{inlinelist}
While recent work (see \Cref{sec:related}) has advanced the integration of heap and value analyses, truly generic and fully automatic analyzers, that is, approaches not bound to a specific heap or value domain and not requiring user provided annotations, are still limited.

\paragraph{Motivating Example.}
Before introducing the formal framework, we present a motivating example that highlights common challenges in heap memory analysis, including dynamic allocations, pointer aliasing, and complex structures.
Consider the code snippet written in C of \Cref{lst:dancing-c}, coming from the SV-COMP benchmark\footnote{\url{https://github.com/sosy-lab/sv-benchmarks/blob/master/c/heap-manipulation/dancing.c} Accessed (05/2026)}.
The code repeatedly allocates heap nodes around a stack-allocated object \texttt{list}. Two pointers, \texttt{x} and \texttt{tail}, are both initialised to \texttt{\&list} (lines $7$ and $8$), suggesting that both are intended to track the evolving structure.
One would expect both to be updated as the data structure grows. In practice, however, only \texttt{x} is ever updated (line $21$), while \texttt{tail} is never reassigned, \texttt{tail->R = n} always overwrites
\texttt{list.R} with the address of the latest node (line $18$).
Specifically,
after $k$ iterations \texttt{list.R} points to the $k$-th node,
while all earlier nodes (except possibly the one pointed to by \texttt{x}) are unreachable.
Similarly, \texttt{n->L = tail} sets every node's \texttt{L} field to \texttt{\&list} (line $16$),
so each heap node independently points back to \texttt{list}.
The pointer \texttt{x} may therefore alias \texttt{list} or any heap node
allocated during any previous iteration.

Verifying this kind of issues is non-trivial in program static analysis. Capturing such behaviours demands accurate stack and heap management, and without it an analysis can easily miss lost references or unintended aliasing. At the same time, it requires abstractions, but excessive over-approximation may render the analysis impractically imprecise, obscuring the very behaviours it aims to capture and thereby weakening its usefulness in practice.

\begin{lstlisting}[
  language=C,
  keywordstyle=\color{blue}\bfseries,
  numbers=left,
  caption={C slice from \texttt{dancing.c}.},
  label={lst:dancing-c}]
struct node { struct node *L; struct node *R; };

struct node list;
list.L=0;
list.R=0;

struct node *x=&list;
struct node *tail=&list;

while(__VERIFIER_nondet_bool())
  {
    struct node *n=malloc(sizeof(struct node));
    if(n==0)
      break;

    n->L=tail;
    n->R=0;
    tail->R=n;

    if(__VERIFIER_nondet_bool())
      x=n;
  }
\end{lstlisting}

\paragraph{Contributions.}
This work presents a modular framework for the combined static analysis of stack
and heap memory, formalised over $\mull$,
a small imperative language extending $\mujs$~\cite{Arceri2019Static} with
explicit pointers, address-of, dereference, field access, and heap allocation.
Our main technical contributions are as follows:
\begin{itemize}
	\item A formal operational semantics for $\mull$ modelling both
	      stack and heap, with support to novel language features introduced.
	\item A split state adapting the construction of whole value analysis proposed in Negrini~\cite{whole-value}. We reformulate the concrete state as a
	      Galois isomorphism separating value and address information with no loss of precision.
	\item A parametric abstract framework in which a value domain and a memory domain cooperate via a lightweight
	      interface and a substitution mechanism, inspired by generic object-oriented memory frameworks~\cite{Ferrara16Framework,ferrara2014generic}. Whereas those works target object-oriented languages with a single heap, our framework supports a broader range of memory models, including languages such as C, where diverse forms of memory manipulation are possible and execution is not constrained by object-oriented abstractions.
	\item A framework instantiation with non-relational numerical
	      domains~\cite{MineLN} and Andersen-style points-to analysis~\cite{Andersen94}.
\end{itemize}

\paragraph{Paper Structure.}
The structure of the paper is as follows. \Cref{sec:related} deals with related work. \Cref{sec:background} provides
the necessary background information. \Cref{sec:mull-language} introduces
the $\mull$ language and its semantics.
\Cref{sec:split-state} and \Cref{sec:abstract-state} present our main contribution defining and formalizing the split state and abstract state. \Cref{sec:instantiations} demonstrates the modularity and practical applicability of the proposed framework by instantiating it with concrete abstract domains. \Cref{sec:conclusions} concludes the paper.

\section{Related Work}\label{sec:related}

In this section, we provide a brief overview of existing frameworks and tools
for memory management. The literature can be divided into approaches tailored
to specific memory problems and more general-purpose frameworks. For instance,
Gopan et al.~\cite{Gopan1,Gopan2} propose a solution specialized for numerical
analysis and array operations, which inherently require reasoning about both
heap and value domains, since arrays combine memory structure with indexes and
content-dependent properties. However, these kinds of problems are orthogonal
to our study, since we focused on how to combine heap and value analyses for
more general purposes, data structures, and memory models.

The broader difficulty of reasoning about memory and pointers has long been
recognized. As early as 2001, Hind~\cite{Hind01} observed that, despite the
substantial body of research available at the time, memory management and
pointer-related issues remained complex and challenging problems. This
observation continues to motivate subsequent work in the area. In particular,
Kanvar et al.~\cite{Kanvar16HeapSurvey} provide a systematic taxonomy of heap
memory models used in static analysis, classifying them as storeless,
store-based, and hybrid, and surveying summarization techniques such as
k-limiting, allocation-site based, pattern-based, variable-based, and other
instrumentation predicates, as well as higher-order logic approaches.

About abstract interpretation approaches, to the best of our knowledge, only a
limited number of prior works have addressed a general integration of heap and
value management. Chang and Rival~\cite{Chang1} proposed an abstract domain
that integrates data abstraction with user-provided specifications of data
structures. In contrast, our approach is based on heap identifiers and does not
require explicit specifications of data structures. They also introduced a
modular integration of shape and numerical abstract domains~\cite{Chang2},
where shape analysis is based on points-to predicates, and the numerical domain
tracks information using a symbolic representation of data stored in the heap.
This differs slightly from our notion of memory identifiers, which are designed
to abstract memory locations. Moreover, their approach focuses on shape
analyses based on summarization and materialization of nodes. As a result, when
a node is materialized, the analysis requires to maintain a disjunctive
abstraction represented by a set of possible shapes. Blazy et al.~\cite{Blazy}
present an \emph{abstract memory functor} inspired by the memory model of C
programs. Basically, they design an abstract memory domain that is able to
handle points-to information and it is parameterized by a numerical abstract
domain. Then, they implemented it into a formally verified static analyzer that
is not dealing precisely with memory~\cite{blazy2013formal}. Their evaluation
considers both non-relational (intervals~\cite{CousotCousot77}) and relational
(polyhedra~\cite{Cousot78}) numerical abstract domains. Miné~\cite{celldomain}
proposes a memory abstraction that translates the semantics of all memory
accesses into a semantics over a set of scalar variables. In this abstraction,
a cell represents a memory region holding a scalar value together with its
type. The approach is parameterized by a numerical domain, but it does not
support summary nodes or dynamic allocation. It was initially implemented in
ASTRÉE~\cite{astree}, and has more recently been adopted and extended in
MOPSA~\cite{mopsa}. To handle pointer dereferences, MOPSA builds on the cell
domain, and relies on recency abstractions~\cite{Balakrishnan06} to model
dynamic memory allocation. For each allocation site, recency abstraction
distinguishes the most recently allocated block from all previous ones, which
are summarized into a single weak memory block. Allocation sites are
configurable~\cite{Monat20} and are typically defined by program locations.
However, this summarization can reduce precision, then alternative abstractions
are required to preserve separation between memory blocks during loop
unrolling, as in the analysis of C programs~\cite{mopsac24}. Milanese and
Miné~\cite{milanese2024under} also proposed an under-approximation backward
analysis for the memory model of MOPSA to enhance both precision and efficiency
in the analysis of C programs. Recency abstraction has also been applied for
other languages and specific contexts. For instance, Calzavara et
al.~\cite{Calzavara17} leverage ideas from recency abstraction to design a
sound, flow-sensitive heap abstraction for the static analysis of Android
applications written in Java language.

Finally, our work builds upon Ferrara~\cite{Ferrara16Framework,ferrara2014generic}
and Negrini~\cite{whole-value}. The main idea of~\cite{Ferrara16Framework,ferrara2014generic} is to
introduce an intermediate abstraction for heap and value analyses, by defining
a split state, which rewrites the concrete state (i.e., program memory and
concrete semantics) to separate values of different data types into distinct
maps. The heap analysis aims to abstract the heap structure, while the value
domain abstracts the values of non-reference variables, as well as heap
locations. Intuitively, the heap analysis approximates concrete locations
through heap identifiers, while the value analysis tracks information on these
identifiers. This provides several advantages: \begin{inlinelist} \item a more
	convenient state representation that simplifies definitions and proofs, \item
	a modular design where existing abstract domains can be applied independently
	to each sub-memory, reusing standard operations with minimal changes, and
	\item semantic preservation, since proving soundness/completeness is equivalent
	between the split and concrete states. \end{inlinelist} We extend this
setting by widening the set of supported operations to arbitrary references
and dereferences, and to dynamic field accesses. This broadens the scope
of the framework beyond object-oriented languages.
Moreover, we improve on its formalization and soundness proofs
borrowing the isomorphism approach from~\cite{whole-value}. In particular,
instead of building a Galois Connection between the concrete and split states,
followed by a second Galois Connection between the split and abstract states,
we enforce a Galois Isomorphism between the concrete and split states, ensuring
that no loss in precision occurs when moving from the concrete to the split state.
This is a key weakness of the original framework, that instead has two possible
sources of imprecision.

\section{Background}
\label{sec:background}

\paragraph{Sets.}
A set $X$ is a (possibly infinite) collection of elements.
We write $x \in X$ for membership, $\emptyset$ for the empty set,
$|X|$ for cardinality, and $\wp(X)$ for the powerset.
Given sets $X$ and $Y$, we use standard notation:
$X \subseteq Y$ for subset, $X \cup Y$ for union, $X \cap Y$ for intersection,
$X \setminus Y$ for set difference, and
$X \times Y = \{(x, y) \mid x \in X \land y \in Y\}$ for the Cartesian product.
Sets can be defined extensionally, e.g., $\{x_0, x_1, \dots\}$,
or in terms of a predicate, e.g., $\{x \mid \phi(x)\}$, meaning all $x$ such
that a predicate $\phi(x)$ holds.

\paragraph{Functions.}
A function $f: X \to Y$ is a subset of the Cartesian product $X \times Y$
such that $\nexists (x, y), (z, w) \in f : x = z \land y \neq w$, meaning
that each $x \in X$ is associated with at most one $y \in Y$.
The set $X$ is called \emph{domain}, denoted as $\dom{f}$, the set $Y$ is
called \emph{co-domain}, denoted as $\codom{f}$. The identity function
is denoted by $\idfun$.
Similarly to sets, functions can be defined extensionally, e.g.,
$\{(x_0, y_0), (x_1, y_1), \dots\}$, or in terms of a predicate,
e.g., $\{(x, y) \mid \phi(x, y)\}$, meaning all pairs $(x, y)$ such
that a predicate $\phi(x, y)$ holds.
Given a function $f$ and two elements $x$ and $y$, the update operator
$f[x \mapsto y]$ denotes $f \setminus \{(x, f(x))\} \cup \{(x,y)\}$
when $x \in \dom{f}$, and $f \cup \{(x,y)\}$ otherwise.
Abusing notation, we also use chained updates
$f[x_1 \mapsto y_1, \dots, x_n \mapsto y_n]$, which denotes successive
application of the function update, i.e.,
$f^0 = f, f^i = f^{i-1}[x_i \mapsto y_i]$, whose result is $f^n$.

\paragraph{Ordered Structures.}
A set $X$ with a partial ordering relation $\sqsubseteq_X\, \subseteq X \times X$
is a poset, denoted by $\tuple{X,\sqsubseteq_X}$. If a poset has a
bottom element $\bot_X$ and is closed under finitary applications of the least
upper bound (lub, $\sqcup_X$) operator of $X$, it is called a complete partial
order (cpo), denoted as $\tuple{X,\sqsubseteq_X, \sqcup_X , \bot_X}$. Moreover,
a lattice $\tuple{X, \sqsubseteq_X, \sqcup_X, \sqcap_X}$ is a poset having a
minimum element (bottom, $\bot_X \in X$), a maximum element (top, $\top_X \in X$)
and closed under finitary applications of the least upper bound (lub,
$\sqcup_X$) and the greatest lower bound (glb, $\sqcap_X$) operators. A
\emph{complete} lattice is closed under arbitrary lub and glb, so that
$\bigsqcup\,Y \in X$ and $\bigsqcap\,Y \in X$, for all $Y \subseteq X$, and it
is denoted as $\tuple{X,\sqsubseteq_X, \sqcup_X, \sqcap_X, \top_X, \bot_X}$.
Provided there is no ambiguity, we will omit subscripts of each operator for
clarity. Complete lattices can be derived from other structures. For instance,
given a set $X$, $\tuple{\wp(X), \subseteq, \cup, \cap, X, \emptyset}$ is a
complete lattice since $\subseteq$ is a partial ordering relation, $\cup$ and
$\cap$ are closed w.r.t. $\wp(X)$, and
$\forall Y \in \wp(X) \;:\; \emptyset \subseteq Y \subseteq X$.
By duality, $\tuple{\wp(X), \supseteq, \cap, \cup, \emptyset, X}$ is also
complete. Moreover, given $\tuple{X, \sqsubseteq, \sqcup, \sqcap, \top, \bot}$
and a set $Y$, the \emph{functional lift}~\cite{CousotCousot79-1} of $X$
w.r.t. $Y$ is the complete lattice
$\tuple{Y \to X, \dot{\sqsubseteq}, \dot{\sqcup}, \dot{\sqcap}, \dot{\bot}, \dot{\top}}$
of total functions $Y \to X$, that is, of functions defined on all
elements of $Y$. Lattice operators are defined as point-wise applications of
operators over $X$ on all $y \in Y$. Lastly, given a finite set of complete
lattices $\tuple{Y_i, \sqsubseteq_{Y_i}, \sqcup_{Y_i}, \sqcap_{Y_i}, \bot_{Y_i}, \top_{Y_i}}, i \in \Delta \subset \nats$,
their Cartesian product~\cite{Cousot21} is the complete lattice
$\tuple{\bigtimes_{i \in \Delta} Y_i, \sqsubseteq, \sqcup, \sqcap, \bot, \top}$,
where lattice operators
are component-wise applications of the operators over each $Y_i$. Given
a poset $\tuple{X,\sqsubseteq_X}$, an increasing chain $C \subseteq X$ is a
possibly infinite sequence of elements $x_0, x_1,\dots$ of $X$ such that
$x_0 \sqsubseteq_X x_1 \sqsubseteq_X \dots$.

\paragraph{Abstract Interpretation.}
Abstract Interpretation~\cite{CousotCousot77,Cousot21} is
a theoretical framework for sound reasoning on semantic properties of a
program, establishing a correspondence between the semantics of a program,
called concrete semantics, and an approximation of it, called abstract
semantics. Let $C$ and $A$ be complete lattices, a pair of functions
$\alpha: C \rightarrow A$ and $\gamma: A \rightarrow C$ forms a \emph{Galois Connection}
between $C$ and $A$, written
$\tuple{C, \sqsubseteq_C} \galois{\alpha}{\gamma} \tuple{A, \sqsubseteq_A}$,
if $\forall\, c \in C, a \in A : \alpha(c) \sqsubseteq_A a \Leftrightarrow c \sqsubseteq_C \gamma(a)$.
Equivalently, the Galois Connection exists if $\alpha \circ \gamma$ is reductive (i.e., if
$\forall\, a \in A \;:\; \alpha \circ \gamma(a) \sqsubseteq_A a$), and
$\gamma \circ \alpha$ is extensive (i.e., if
$\forall\, c \in C \;:\; c \sqsubseteq_C \gamma \circ \alpha(c)$).
In addition, if $\alpha \circ \gamma = id$, then $\alpha$ and $\gamma$ form a
\emph{Galois Embedding}, written
$\tuple{C, \sqsubseteq_C} \galoiS{\alpha}{\gamma} \tuple{A, \sqsubseteq_A}$,
where no two abstract elements have the same concretization. Furthermore, if
also $\gamma \circ \alpha = id$, then $\alpha$ and $\gamma$ form a
\emph{Galois Isomorphism}, written
$\tuple{C, \sqsubseteq_C} \GaloiS{\alpha}{\gamma} \tuple{A, \sqsubseteq_A}$,
where $A$ is simply a
reshaping of $C$ and no abstraction (i.e., loss of precision) happens. Note
that Abstract Interpretation can be employed also when a Galois Connection does
not exist: in fact, it is sufficient that $C$ and $A$ are complete partial
orders, and that a monotone concretization $\gamma$ exists.

\paragraph{Soundness.}
Given $\tuple{C, \sqsubseteq_C} \galois{\alpha}{\gamma} \tuple{A, \sqsubseteq_A}$,
a concrete function $f: C \rightarrow C$ is, in general, not
computable. Hence, a function $f^\sharp : A \rightarrow A$ that must
\emph{correctly} approximate the function $f$ is needed. If so, we say that
the function
$f^\sharp$ is \emph{sound}. Given $\tuple{C, \sqsubseteq_C} \galois{\alpha}{\gamma} \tuple{A, \sqsubseteq_A}$
and a concrete function $f : C \rightarrow C$, an abstract function
$f^\sharp : A \rightarrow A$ is sound w.r.t. $f$ if
$\forall c \in C\;:\;\alpha(f(c)) \sqsubseteq_A f^\sharp(\alpha(c))$,
or equivalently $\forall a \in A\;:\; f(\gamma(a)) \sqsubseteq_C \gamma(f^\sharp(a))$.
Note that the latter relation can be used
to prove soundness even when a Galois Connection does not exist.

\paragraph{Abstract Domains.}
In the Abstract Interpretation framework, abstractions are defined through
so-called \emph{abstract domains}. These are composed by a partial
order $\tuple{X,\sqsubseteq_X}$, possibly extended to a complete partial order,
a lattice, or a complete lattice, an upper bound operator $\sqcup_X$ on $X$, a
bottom element $\bot_X \in X$, a widening operator $\nabla_X$ that
over-approximates $\sqcup_X$ and ensures the convergence on increasing chains,
an \emph{abstract transformer} $\asem{\stmt} \colon X \to X$ for evaluating
statements, and an \emph{abstract transformer} $\asem{\texttt{ASM}(\bexp)} \colon X \to X$
for traversing conditions. The purpose of the transformers is to evolve an
instance of the domain according to the semantics of the statement $\stmt$ to
be executed, and to refine an instance of the domain assuming that a condition
$\bexp$ holds.

\paragraph{Disjoint-Domain Unions.}
For functions $f\colon A\to Y$ and $g\colon B\to Z$ with
$A\cap B=\emptyset$, we write $f\cupdot g: A\cup B \to Y\cup Z$ for
their disjoint-domain union, defined by
$(f\cupdot g)(x)=f(x)$ if $x\in A$, and $(f\cupdot g)(x)=g(x)$ if $x\in B$.

For second-order maps $f\colon A \to (B\to C)$ and
$g\colon D \to (E\to F)$, we write
$f\cupddot g: A \cup D \to (B \cup E) \to (C \cup F)$ for
their second-order disjoint-domain union, defined when inner maps are
domain-disjoint at overlapping outer keys:
$\forall x\in A\cap D:\ f(x) \cap g(x)=\emptyset$.
For each outer key $x\in A\cup D$,
$(f\cupddot g)(x)=f(x)$ if $x\in A\setminus D$,
$(f\cupddot g)(x)=g(x)$ if $x\in D\setminus A$,
and $(f\cupddot g)(x)=f(x)\cupdot g(x)$ if $x\in A\cap D$.

\paragraph{Restrictions on Maps.}
\label{par:restriction-properties}
For $f: K \to V$, $X \subseteq K$, and $Y \subseteq V$, we
define the restriction on the domain as
$f\restrdom{X} = \{(x,f(x)) \mid x\in X\}$,
the restriction on the co-domain as
$f\restrcodom{Y} = \{(x,f(x)) \mid x\in K,\ f(x)\in Y\}$, and the general restriction as
$f\restr{X}{Y} = \{(x,f(x)) \mid x\in X,\ f(x)\in Y\}$.
For second-order maps $f: K \to (L \to V)$, with $X \subseteq K$ and $Y \subseteq V$,
we define the outer-domain restriction as
$f\drestrdom{X} = \{(k, f(k)) \mid k \in X\}$,
the inner-co-domain restriction as
$f\drestrcodom{Y} = \{(k, f(k)\restrcodom{Y}) \mid k \in K\}$,
and the combined double restriction as
$f\drestr{X}{Y} = \{(k, f(k)\restrcodom{Y}) \mid k \in X\}$.
In both cases, the combined form equals any composition of the two:
$f\restr{X}{Y} = f\restrdom{X}\restrcodom{Y} = f\restrcodom{Y}\restrdom{X}$,
and $f\drestr{X}{Y} = f\drestrdom{X}\drestrcodom{Y} = f\drestrcodom{Y}\drestrdom{X}$,
which follow trivially from the definitions.

Let $f\colon K\to V$, $g\colon A\to B$ with $K\cap A=\emptyset$,
$X_1,X_2\subseteq K$, $Y_1,Y_2\subseteq V$ with $Y_1\cap Y_2=\emptyset$, and
second-order maps $M_1\colon D_1\to(L_1\to V_1)$, $M_2\colon D_2\to(L_2\to V_2)$
with $\forall k\in D_1\cap D_2: \dom{M_1(k)}\cap\dom{M_2(k)}=\emptyset$.
In this setting, the following properties hold:
\begin{inlinelist}
	\item $f\restrdom{X_1}\cup f\restrdom{X_2} = f\restrdom{X_1\cup X_2}$ (\emph{domain additivity});
	\item $f\restrcodom{Y_1}\cup f\restrcodom{Y_2} = f\restrcodom{Y_1\cup Y_2}$ (\emph{co-domain additivity});
	\item $(f\cupdot g)\restr{X}{Z} = (f\restr{X\cap K}{Z})\cupdot(g\restr{X\cap A}{Z})$
	for any $X\subseteq K\cup A$, $Z\subseteq V\cup B$ (\emph{distribution of $\restriction$ over $\cupdot$});
	\item $(M_1\cupddot M_2)\drestrcodom{Y} = M_1\drestrcodom{Y}\cupddot M_2\drestrcodom{Y}$,
	for any $Y \subseteq V_1 \cup V_2$ (\emph{distribution of $\doublerestriction$ over $\cupddot$}).
\end{inlinelist}
Properties \emph{(i)} and \emph{(iii)} also hold with $\doublerestriction$ in place of $\restriction$.
Property \emph{(ii)} holds for $\doublerestriction$ as
$f\drestrcodom{Y_1}\cupddot f\drestrcodom{Y_2} = f\drestrcodom{Y_1\cup Y_2}$
(requiring $Y_1\cap Y_2=\emptyset$).
Proofs in \Cref{sec:proof-restriction-properties}.

\section{The \texorpdfstring{$\mull$}{muLL} Language}
\label{sec:mull-language}

The $\mull$ language is a small, imperative language for studying memory
and pointer abstractions. It extends the core of $\mujs$~\cite{Arceri2019Static}
with a new set of features for explicit memory management and pointer
manipulation, yielding a simple model that is closer to a \Cpp-like memory
model (chosen because it includes pointers, clearly defines when memory is
created and accessed, and remains expressive enough to cover a wide range of
programming languages), while remaining compact for formal treatment.

The syntax of $\mull$ is defined in \Cref{fig:mull-syntax}.
The core expressions of $\mujs$ for arithmetic ($\aexps$), boolean
($\bexps$), string ($\sexps$), and object ($\oexps$) expressions are
mostly retained, and the language is extended with memory- and
pointer-specific constructs. In particular, we introduce
\begin{inlinelist}
	\item
	dynamic memory allocation through the \new construct, used to allocate
	memory on the heap for primitive values and objects,
	\item
	the address-of operation $\addrof x$, which retrieves the memory
	address of a variable $x$,
	\item
	the dereference operation $\deref\pexp$, which accesses
	the value stored at the memory address denoted by $\pexp$,
	\item
	static field access $\pexp.f$, which allows direct access to a field
	of an object, and
	\item
	dynamic field access $\pexp[\sexp]$, which allows access to a field
	whose name is determined by evaluating a string expression.
\end{inlinelist}
The last four additions form the class of \emph{pointer expressions} ($\pexps$).

For presentation purposes, we introduce some syntactic subgroups of
expressions. When objects are first defined, their fields can only contain
field expressions ($\fexps$), which may include both primitive expressions and pointer
expressions. Nested objects can then be added through field assignments.
Heap allocation via $\new$ is restricted to
heap-allocable expressions ($\hexps$), namely primitive expressions
and objects. The non-terminal $\lhss$ denotes the set of assignable expressions.
Although pointer expressions are recursive, only a syntactically restricted
subset of them may occur as left-hand sides. In particular, $\lhss$ admits variables,
field accesses, and dereferences, but requires the base of any field access
to be an identifier, so write-mode targets are always directly addressable.
Read-mode field access, by contrast, admits arbitrary pointer expressions as base,
covering chains of intermediate dereferences.
Since $\mull$ has no \Cpp-style arrow operator (\texttt{->}), reading a field through a
pointer requires an explicit dereference, as in \texttt{(*p).f}.
Writing a field of a heap object is not directly expressible with dot notation; it requires
an explicit load, field update, and write-back via dereference.
This distinction will be formalized in the semantics.
The non-terminal $\vexp$ groups expressions of primitive types,
while $\exp$ denotes the set of all expressions of the language, namely
those that may occur on the right-hand side of a standard (non-\new) assignment.
We assume $\mull$ programs to be \emph{statically typed}.

As discussed in the assignment semantics (\Cref{sec:concrete-assign}),
the \new operation allocates memory on the heap, while standard
assignment stores values on the stack.

\begin{figure}[t]
	\small
	\begin{framed}
		\vbox{
			\setlength{\grammarparsep}{3pt plus 1pt minus 1pt}
			\setlength{\grammarindent}{4em}
			\renewcommand{\syntleft}{}  \renewcommand{\syntright}{}
			\begin{grammar}

				<$\aexp \in$ \aexps> $\Coloneqq$
				$x$
				~|~ $n$
				~|~ $\aexp$ \texttt{+} $\aexp$
				~|~ $\aexp$ \texttt{-} $\aexp$
				~|~ $\aexp$ \texttt{*} $\aexp$
				~|~ $\aexp$ \texttt{/} $\aexp$

				<$\bexp \in$ \bexps> $\Coloneqq$
				$x$
				~|~ \true
				~|~ \false
				~|~ $\bexp$ \texttt{\&\&} $\bexp$
				~|~ $\bexp$ \texttt{||} $\bexp$
				~|~ \texttt{!} $\bexp$
				~|~ $\aexp$ \texttt{\textless} $\aexp$
				~|~ $\aexp$ \texttt{==} $\aexp$
				~|~ $\sexp$ \texttt{==} $\sexp$

				<$\sexp \in$ \sexps> $\Coloneqq$
				$x$
				~|~ \str{$s$}
				~|~ \concat{$\sexp_1$}{$\sexp_2$}

				<$\oexp \in$ \oexps> $\Coloneqq$
				$x$
				~|~ \texttt{\{} \texttt{\}}
				~|~ \texttt{\{} $f_0$ \texttt{:} $\fexp_0$, $f_1$ \texttt{:} $\fexp_1$, \dots, $f_n$ \texttt{:} $\fexp_n$  \texttt{\}}

				<$\pexp \in$ \pexps> $\Coloneqq$
				$x$
				~|~ $\pexp[\sexp]$
				~|~ $\pexp.f$
				~|~ $\addrof x$
				~|~ $\deref\pexp$

				<$\vexp \in$ \vexps> $\Coloneqq$
				$\aexp$ ~|~ $\bexp$ ~|~ $\sexp$

				<$\fexp \in$ \fexps> $\Coloneqq$
				$\vexp$ ~|~ $\pexp$

				<$\hexp \in$ \hexps> $\Coloneqq$
				$\vexp$ ~|~ $\oexp$

				<$\exp \in$ \exps> $\Coloneqq$
				$\vexp$ ~|~ $\pexp$ ~|~ $\oexp$

				<$\lhs \in$ \lhss> $\Coloneqq$
				$x$
				~|~ $x[\sexp]$
				~|~ $x.f$
				~|~ $\deref\pexp$

				<$\stmt \in$ \stmts> $\Coloneqq$
				$\stmt$ $\stmt$
				~|~ \ski \texttt{;}
				~|~ $\lhs$ = $\exp$\texttt{;}	
				~|~ $\lhs$ = \new $\hexp$\texttt{;} 
				~|~ \ifc{$\bexp$}{$\stmt$}{$\stmt$}
				~|~ \while{$\bexp$}{$\stmt$}
			\end{grammar}
			where $x \in \cid$ (identifiers), $n \in \ints$,
			$s \in \Sigma^*$, $f_0, f_1, \dots, f_n \in \cid$.
		}
	\end{framed}
	\normalsize
	\vspace*{-10pt}
	\caption{$\mull$ syntax.}
	\label{fig:mull-syntax}
\end{figure}

\subsection{Concrete Semantics}
\label{sec:concrete-semantics}

In $\mull$, expressions evaluate to a value $v$ in the set
$\cvalbar \defn \cval \cup \caddr \cup \cobj \cup \{\bot\}$, containing
values ($\cval \defn \mathbb{Z} \cup \{\true, \false\} \cup \Sigma^*$),
memory addresses ($\caddr$), objects ($\cobj$), and a special value
$\bot$ to denote the result of invalid computations (e.g., an access to
a non-existing field of an object, the dereference of an invalid memory
address, or the division by zero). $\mathbb{Z}$ denotes the set of integers,
while $\Sigma^*$ denotes the set of strings over the alphabet $\Sigma$.
The $\bot$ value is propagated during expression
evaluation: if one of the arguments of an expression is $\bot$, then the
result of the whole expression is $\bot$.
Identifiers are elements of the set $\cid$ and serve as both variable
names and object field names; they follow the \Cpp~naming convention
(strings of letters, digits, and underscores, starting with a letter or underscore).
Addresses distinguish between memory locations in the stack and in
the heap, namely
$\caddr \defn \caddrs \cup \caddrh$, such that $\caddrs \cap \caddrh = \varnothing$.
$\caddrs$ denotes the set of stack memory addresses, where each
variable $x \in \cid$ is associated with a unique stack address
$a_x$, thus $\caddrs = \{ a_x \mid x \in \cid \}$.
$\caddrh$ are memory locations in the
heap, represented as $h_n$ for each $n \in \mathbb{N}$,
thus $\caddrh = \{ h_n \mid n \in \mathbb{N} \}$.
The objects $\cobj$ are structured values, represented as partial
maps from field names to values, addresses, or other objects,
$\cobj : \cid \rightarrow \cvalbar$. Thus, we can represent an object literal as
$\{f_0:v_0, f_1:v_1, \dots, f_n:v_n\}$, where for $i \in [0,n]$, $f_i \in \cid$ are field
names and $v_i \in \cvalbar$ are the values associated with the fields.
The notation $\emptyobj$ represents the empty object.
We write $\objupdate{\objV}{f}{v}$ for the standard partial map override of $\objV$
at field $f$ with value $v$, leaving all other fields unchanged.
We define the field access operator as $\objV(f) \defn v$ if $f \in \dom{\objV}$
and $\bot$ otherwise, that retrieves the value of field $f$ in object $\objV$.

Program memories $\cstate = \langle \cstack, \cheap \rangle \in \Cstate \defn \Cstack \times \Cheap$
are tuples composed of a stack $\cstack$ (i.e., a map from identifiers to values, memory
addresses, or objects, namely $\Cstack \defn \cid \rightarrow \cvalbar$)
and a heap $\cheap$ (i.e., a map from
heap memory addresses to values, addresses, or objects, namely
$\Cheap \defn \caddrh \rightarrow \cvalbar$).
The lattice structure is given by $\langle \powerset{\Cstate}, \subseteq \rangle$.

By static typing, each entry of $\cstack$ and $\cheap$ consistently holds values of
a single type throughout execution (a primitive value $\cval$, an address
$\caddr$, or an object $\cobj$) and never switches between them.
Abusing notation, we denote as
$\bot \in \Cstate$ an invalid state resulting from the assignment of
a variable to $\bot$, that is propagated from one statement to
the next one.

The concrete semantics of $\mull$ statements is formalized by
function $\csem{\stmt}: \Cstate \rightarrow \Cstate$.
Abusing notation, we use
$\csem{\exp}: \Cstate \rightarrow \cvalbar$ to capture
the semantics of expressions, which are evaluated without side effects on the
state (i.e., expression semantics does not modify the stack or heap).
Note that, for all $\stmt \in \stmts$, $\csem{\stmt}\bot = \bot$
and $\csem{\stmt}(\cstack, \cheap) = \bot$ if there exists
at least one sub-expression $\exp \in \exps$ in $\stmt$ such that
$\csem{\exp}(\cstack, \cheap) = \bot$. To avoid cluttering
the notation, we only define $\csem{~}$ on well-formed
expressions and statements, that is, ones whose evaluation does not produce $\bot$ (i.e.,
where none of the arguments are $\bot$ and that produce a valid result).

\begin{figure}[t]
	\begin{equation}
		\label{eq:c-obj}
		\frac{
			\forall i \in [0,n]\st \csem{\fexp_i} (\cstack, \cheap) = v_i
		}{
			\csem{\{f_0:\fexp_0, \dots, f_n:\fexp_n\}} (\cstack, \cheap) = \{f_0:v_0, \dots, f_n : v_n\}
		}
	\end{equation}

	\begin{multicols}{2}
		\begin{equation}
			\label{eq:c-empty-obj}
			\frac{}{
				\csem{\emptyobj} (\cstack, \cheap) = \emptyobj
			}
		\end{equation}

		\begin{equation}
			\label{eq:c-var}
			\frac{
				x \in \dom{\cstack}
			}{
				\csem{x} (\cstack, \cheap) = \cstack(x)
			}
		\end{equation}

		\begin{equation}
			\label{eq:c-addr}
			\frac{
				x \in \dom{\cstack}
			}{
				\csem{\addrof x} (\cstack, \cheap) = a_x
			}
		\end{equation}

		\begin{equation}
			\label{eq:c-stack-deref}
			\frac{
				\csem{\pexp} (\cstack, \cheap) = a_y \in \caddrs
			}{
				\csem{\deref\pexp} (\cstack, \cheap) = \cstack(y)
			}
		\end{equation}

		\begin{equation}
			\label{eq:c-heap-deref}
			\frac{
				\csem{\pexp} (\cstack, \cheap) = a \in \caddrh
			}{
				\csem{\deref \pexp} (\cstack, \cheap) = \cheap(a)
			}
		\end{equation}

		\begin{equation}
			\label{eq:c-static-access}
			\frac{
				\csem{\pexp} (\cstack, \cheap) = o \in \cobj
			}{
				\csem{\pexp.f} (\cstack, \cheap) = o(f)
			}
		\end{equation}

		\begin{equation}
			\label{eq:c-dynamic-access}
			\frac{
				\csem{\pexp} (\cstack, \cheap) = o \in \cobj
				\quad \csem{\sexp} (\cstack, \cheap) = f
			}{
				\csem{\pexp[\sexp]} (\cstack, \cheap) = o(f)
			}
		\end{equation}
	\end{multicols}

	\begin{equation}
		\label{eq:c-div}
		\frac{
			\csem{\aexp_1} (\cstack, \cheap) = n_1
			\quad \csem{\aexp_2} (\cstack, \cheap) = n_2
			\quad n_2 \neq 0
			\quad n_1\ \mathrm{mod}\ n_2 = 0
		}{
			\csem{\aexp_1 / \aexp_2} (\cstack, \cheap) = n_1 / n_2
		}
	\end{equation}
	\caption{Semantics of relevant expressions in $\Cstate$.}
	\label{fig:concrete-expr}
\end{figure}

\subsubsection{Relevant Expressions.}
\label{sec:concrete-expr}

The semantic rules for relevant expressions are summarized in \Cref{fig:concrete-expr}.
The evaluation of object expressions involves evaluating the fields of
the object, associating them with their corresponding string keys. The semantics of
a non-empty object \eqref{eq:c-obj} evaluates each expression
$\vexp_i$ to its concrete value $v_i$ and assigns it to their respective key
$f_i$ in the map, while the semantics of
an empty object \eqref{eq:c-empty-obj} evaluates to an empty map.
The semantics of an identifier $x$ \eqref{eq:c-var} retrieves the value stored in the
variable $x$ in the stack. The address-of operator \eqref{eq:c-addr}
returns the stack memory address of the variable $x$, which is $a_x$
by definition. The dereference operator \eqref{eq:c-stack-deref} and \eqref{eq:c-heap-deref}
retrieves the value stored at the memory address (both on
stack or heap) given by the evaluation of $\pexp$. If $\pexp$ does not
evaluate to a valid address, the result is $\bot$.
Field accesses, both static \eqref{eq:c-static-access} and dynamic \eqref{eq:c-dynamic-access},
retrieve the value of the field in the object, with dynamic access
first evaluating the expression $\sexp$ used to compute the field name.
If $\pexp$ does not evaluate to a valid object or if the field does not
exist, the result is $\bot$.
The arithmetic division computes the integer division of two
arithmetic expressions. If the division does not yield an integer
result or if the divisor is zero, the result is $\bot$.

\subsubsection{Assignments.}
\label{sec:concrete-assign}

Since most statements are standard, we omit their semantics from this paper, focusing
only on assignments.
\Cref{fig:concrete-assign-stack,fig:concrete-assign-heap}
summarize the semantics of assignment statements for two different cases:
the stack assignment $\lhs = \exp$, and
the heap assignment $\lhs = \new \hexp$,
with the latter dynamically allocating memory
(on the heap) for a variable using a fresh address. This address is
obtained from the function $\freshCheap:\Cheap \rightarrow \caddrh$, which
returns a new heap address $a_f \in \caddrh$ that is not already in use
in the heap.

Each figure outlines the possible left-hand side targets of
the assignment: to a variable \eqref{eq:c-sasg-var} and \eqref{eq:c-hasg-var}, to a pointer
dereference \eqref{eq:c-sasg-deref-stack}, \eqref{eq:c-sasg-deref-heap}, \eqref{eq:c-hasg-deref-stack} and \eqref{eq:c-hasg-deref-heap},
to a static field of an object \eqref{eq:c-sasg-static-field-stack} and \eqref{eq:c-hasg-static-field-stack},
and to a dynamic field of an object \eqref{eq:c-sasg-dynamic-field-stack} and \eqref{eq:c-hasg-dynamic-field-stack}.
In general, every assignment operation evaluates the
expression on the right-hand side, which results in a value $v \in \cvalbar$.
In the case of heap allocations, a fresh address $a_f \in \caddrh$
is computed, and the value produced by the evaluation of the right-hand side
is stored as image of that address in $\cheap$. In the following, we refer
to the result of the right-hand side evaluation in stack assignments and to the
freshly generated address in heap assignments as the \emph{assigned value}, to
unify the presentation of the semantics of the two types of assignments.
Variable assignments store the assigned value as the variable's image in the stack.
Dereference assignments store the assigned value at the variable (on the stack)
or address (on the heap) obtained by evaluating the left-hand side dereference.
Field assignments require the base identifier to hold an object directly on the stack:
the object is retrieved, updated to map the desired field to the assigned value, and stored back.
To assign a field of a heap-allocated object (or a stack-allocated object pointed to by a pointer),
an explicit load/store sequence is required: load the object via dereference,
update the field, and write it back.
Note that objects stored at heap addresses are always flat, i.e., their field
values lie in $\cval \cup \caddr$: this is enforced jointly by
rule~\eqref{eq:c-sasg-deref-heap}, which restricts the right-hand side of a
heap dereference assignment to $\cval \cup \caddr$, and by the grammar, which
limits the fields of object literals to $\fexps$ (primitive values and pointer
expressions).
\begin{figure}[t]
	\begin{equation}
		\label{eq:c-sasg-var}
		\frac{
			\csem{\exp} (\cstack, \cheap) = v
		}{
			\csem{x = \exp} (\cstack, \cheap) = (\cstack[x \mapsto v], \cheap)
		}
	\end{equation}

	\begin{equation}
		\label{eq:c-sasg-deref-stack}
		\frac{
			\csem{\pexp} (\cstack, \cheap) = a_y \in \caddrs
			\quad \csem{\exp} (\cstack, \cheap) = v
		}{
			\csem{\deref\pexp = \exp} (\cstack, \cheap) = (\cstack[y \mapsto v], \cheap)
		}
	\end{equation}

	\begin{equation}
		\label{eq:c-sasg-deref-heap}
		\frac{
			\csem{\pexp} (\cstack, \cheap) = a \in \caddrh
			\quad \csem{\exp} (\cstack, \cheap) = v \in \cval \cup \caddr
		}{
			\csem{\deref\pexp = \exp} (\cstack, \cheap) = (\cstack, \cheap[a \mapsto v])
		}
	\end{equation}

	\begin{equation}
		\label{eq:c-sasg-static-field-stack}
		\frac{
			\cstack(x) = o \in \cobj
			\quad \csem{\exp} (\cstack, \cheap) = v
		}{
			\csem{x.f = \exp} (\cstack, \cheap)
			= (\cstack[x \mapsto \objupdate{o}{f}{v}], \cheap)
		}
	\end{equation}

	\begin{equation}
		\label{eq:c-sasg-dynamic-field-stack}
		\frac{
			\cstack(x) = o \in \cobj
			\quad \csem{\exp} (\cstack, \cheap) = v
			\quad \csem{\sexp} (\cstack, \cheap) = f \in \cid
		}{
			\csem{x[\sexp] = \exp} (\cstack, \cheap)
			= (\cstack[x \mapsto \objupdate{o}{f}{v}], \cheap)
		}
	\end{equation}

	\caption{Semantics of expression assignments $\lhs = \exp$ in $\Cstate$.}
	\label{fig:concrete-assign-stack}
\end{figure}

\begin{figure}[t]
	\begin{equation}
		\label{eq:c-hasg-var}
		\frac{
			a_f = \freshCheap(\cheap)
			\quad \csem{\hexp} (\cstack, \cheap) = v
		}{
			\csem{x = \new \hexp} (\cstack, \cheap)
			= (\cstack[x \mapsto a_f], \cheap[a_f \mapsto v])
		}
	\end{equation}

	\begin{equation}
		\label{eq:c-hasg-deref-stack}
		\frac{
			\csem{\pexp} (\cstack, \cheap) = a_y \in \caddrs
			\quad a_f = \freshCheap(\cheap)
			\quad \csem{\hexp} (\cstack, \cheap) = v
		}{
			\csem{\deref\pexp= \new \hexp} (\cstack, \cheap)
			= (\cstack[y \mapsto a_f], \cheap[a_f \mapsto v])
		}
	\end{equation}

	\begin{equation}
		\label{eq:c-hasg-deref-heap}
		\frac{
			\csem{\pexp} (\cstack, \cheap) = a \in \caddrh
			\quad a_f = \freshCheap(\cheap)
			\quad \csem{\hexp} (\cstack, \cheap) = v
		}{
			\csem{\deref\pexp = \new \hexp} (\cstack, \cheap)
			= (\cstack, \cheap[a \mapsto a_f, a_f \mapsto v])
		}
	\end{equation}

	\begin{equation}
		\label{eq:c-hasg-static-field-stack}
		\frac{
			a_f = \freshCheap(\cheap)
			\quad \csem{\hexp} (\cstack, \cheap) = v
			\quad \cstack(x) = o \in \cobj
		}{
			\csem{x.f = \new \hexp} (\cstack, \cheap)
			= (\cstack[x \mapsto \objupdate{o}{f}{a_f}], \cheap[a_f \mapsto v])
		}
	\end{equation}

	\begin{equation}
		\label{eq:c-hasg-dynamic-field-stack}
		\frac{
			a_f = \freshCheap(\cheap)
			\quad \csem{\hexp} (\cstack, \cheap) = v
			\quad \cstack(x) = o \in \cobj
			\quad \csem{\sexp} (\cstack, \cheap) = f
		}{
			\csem{x[\sexp] = \new \hexp} (\cstack, \cheap)
			= (\cstack[x \mapsto \objupdate{o}{f}{a_f}], \cheap[a_f \mapsto v])
		}
	\end{equation}

	\caption{Semantics of \new assignments $\lhs=\new \hexp$ in $\Cstate$}
	\label{fig:concrete-assign-heap}
\end{figure}

\subsubsection{Assume Statement.}
\label{sec:assume}
The $\texttt{ASM}(\bexp)$ statement is an auxiliary construct, not part of the $\mull$ syntax,
used when analyzing conditionals and loops to restrict the program state to traces satisfying a guard.
Its concrete semantics is:
\begin{equation*} \text{(assume)} \quad
	\frac{
		\csem{\bexp} (\cstack, \cheap) = \true
	}{
		\csem{\texttt{ASM}(\bexp)} (\cstack, \cheap) = (\cstack, \cheap)
	}
\end{equation*}
States for which the guard evaluates to $\false$ are mapped to $\bot$.

\begin{example}[$\mull$ Running Example]
	We now encode the motivating C snippet from \Cref{lst:dancing-c} into $\mull$.
	The translation follows a few systematic choices.
	First, the C struct initialisation \texttt{list.L=0; list.R=0;} becomes simply
	\texttt{list = \{\}}: since $\mull$ is statically typed and \texttt{L} and \texttt{R}
	are address-typed fields, they cannot hold the integer \texttt{0} as a null
	placeholder without violating the typing invariant.
	Instead, a null-initialised pointer field is represented as absent from
	the object.
	Second, pointer declarations (\texttt{struct node *x = \&list}) map directly to
	\texttt{x = \&list}, using the address-of operator already in the language.
	Third, the \texttt{malloc} call combined with field initialisation
	(\texttt{n->L=tail; n->R=0}) is folded into a single heap-allocation statement
	\texttt{n = new \{ L:tail \}},\footnote{The $\mull$
	language does not support uninitialized data structures, so we perform
	explicit initialisation of all fields at allocation time. However, this is not a limitation of the
	framework, but rather a stylistic choice to simplify the presentation.} which atomically allocates a fresh heap address
	and stores the object; the \texttt{R} field is omitted for the same reason as above.
	Fourth, mutation through a pointer -- the C idiom \texttt{tail->R = n} --
	requires an explicit load/store sequence: \texttt{tmp = *tail} loads the object,
	\texttt{tmp.R = n} updates the field, and \texttt{*tail = tmp} writes it back.
	Finally, the two non-deterministic booleans \texttt{\_\_VERIFIER\_nondet\_bool()} are
	replaced by the predicates \texttt{nd\_loop()} and \texttt{nd\_pick()} that yield non-deterministic
	values.\footnote{While $\mull$ does not support predicates, we rely on an intuitive definition
	to faithfully replicate the running example. As these do not represent operations
	of interest for our framework, we omit their definition.}

	\begin{table}[htbp]
		\centering
		\caption{Concrete state annotations for \Cref{lst:dancing-mull}. $\cstack_l$ and $\cheap_l$ denote the stack and heap at line $l$. Inside the loop body, $a_f^{\,i}=\freshCheap(\cheap_6^{\,i})$ and $i\geq 1$.}
		\label{tab:dancing-states}
		\renewcommand{\arraystretch}{1.3}
		\begin{tabular}{|c|l|l|}
			\hline
			\textbf{Line} & \multicolumn{1}{c|}{$\cstack$}                                                                         & \multicolumn{1}{c|}{$\cheap$} \\
			\hline
			1             & $[\texttt{list}\mapsto \emptyobj]$                                                                     & \multirow{3}{*}{$\emptyset$}  \\
			\cline{1-2}
			2             & $\cstack_1[\texttt{x}\mapsto a_{\texttt{list}}]$                                                       &                               \\
			\cline{1-2}
			3             & $\cstack_2[\texttt{tail}\mapsto a_{\texttt{list}}]$                                                    &                               \\
			\hline
			6             & $\cstack_{17}^{\,i-1}$                                                                                 & $\cheap_{17}^{\,i-1}$         \\
			\hline
			7             & $\cstack_6^{\,i}[\texttt{n}\mapsto a_f^{\,i}]$
			              & \multirow{7}{*}{$\cheap_6^{\,i}[a_f^{\,i}\mapsto\{\texttt{L}\!:\!a_{\texttt{list}}\}]$}                                                \\
			\cline{1-2}
			8             & $\cstack_7^{\,i}[\texttt{tmp}\mapsto \cstack_7^{\,i}(\texttt{list})]$                                  &                               \\
			\cline{1-2}
			9             & $\cstack_8^{\,i}[\texttt{tmp}\mapsto\objupdate{\cstack_8^{\,i}(\texttt{tmp})}{\texttt{R}}{a_f^{\,i}}]$ &                               \\
			\cline{1-2}
			10            & $\cstack_9^{\,i}[\texttt{list}\mapsto \cstack_9^{\,i}(\texttt{tmp})]$                                  &                               \\
			\cline{1-2}
			13            & $\cstack_{10}^{\,i}[\texttt{x}\mapsto a_f^{\,i}]$                                                      &                               \\
			\cline{1-2}
			15            & $\cstack_{10}^{\,i}$                                                                                   &                               \\
			\cline{1-2}
			17            & $\cstack_{13}^{\,i} \cup \cstack_{15}^{\,i}$                                                           &                               \\
			\hline
		\end{tabular}
	\end{table}

	\begin{lstlisting}[
	language=c,
	numbers=left,
	tabsize=4,
	morekeywords={new},
	keywordstyle=\color{blue}\bfseries,
	caption={$\mull$ translation of \Cref{lst:dancing-c} with concrete state annotations.},
	label={lst:dancing-mull}]
list = {};
x = &list;
tail = &list;

/* nd_loop() and nd_pick() denote nondeterministic booleans */
while nd_loop() {
	n = new { L:tail };
	tmp = *tail;
	tmp.R = n;
	*tail = tmp;

	if nd_pick() {
		x = n;
	} else {
		skip;
	}
}
\end{lstlisting}

	The state annotations in \Cref{tab:dancing-states} track the concrete state at
	each program point.
	Before the loop (lines 1–3), the stack maps \texttt{list} to an empty object
	and both \texttt{x} and \texttt{tail} to $a_\texttt{list}$, with the heap empty.
	At each iteration $i$, a fresh heap address $a_f^i$ is allocated for the new node
	(line~7), \texttt{list.R} is updated to $a_f^i$ via the load/store sequence on
	lines~8--10, and \texttt{x} either advances to $a_f^i$ (line~13) or retains its
	previous value (line~15), so after the loop \texttt{x} may hold $a_\texttt{list}$
	or any of $a_f^1, \ldots, a_f^k$.
\end{example}

\subsection{Collecting Semantics}
The concrete semantics is lifted pointwise to sets of states $\Cset \subseteq \Cstate$.
For a statement $\stmt$,
$\ccsem{\stmt} : \powerset{\Cstate} \to \powerset{\Cstate}$
is defined as $\ccsem{\stmt}(\Cset) = \{ \csem{\stmt}(\cstate) \mid \cstate \in \Cset \}$;
for the auxiliary construct $\texttt{ASM}(\bexp)$,
$\ccsem{\texttt{ASM}(\bexp)} : \powerset{\Cstate} \to \powerset{\Cstate}$
is defined as $\ccsem{\texttt{ASM}(\bexp)}(\Cset) = \{ \cstate' \mid \cstate \in \Cset \land \csem{\texttt{ASM}(\bexp)}(\cstate) = \cstate' \land \cstate' \neq \bot \}$;
and for an expression $\exp$,
$\ccsem{\exp} : \powerset{\Cstate} \to \powerset{\cval}$
is defined as $\ccsem{\exp}(\Cset) = \{ \csem{\exp}(\cstate) \mid \cstate \in \Cset \}$.

\section{Split State}
\label{sec:split-state}

A state $\sstate \in \Sstate$ is a tuple $(\sval, \smem) \in \Sval \times \Smem$
where locations (\cid\ or \caddrh) are partitioned by the kind of
information they contain:
\begin{inlinelist}
	\item if the location contains a concrete value, it is stored in $\Sval$,
	\item otherwise, if it contains an address, it is stored in $\Smem$.
\end{inlinelist}
Moreover, if a location contains an object, the latter is \emph{exploded}
into entries identified by the location and the field name, each stored
in $\Sval$ or $\Smem$ using the same reasoning (each object is flattened
to per-field mappings via our $\explodefun$ function, see
\Cref{sec:explode-implode-obj}).
Here, $\caddrr = \{r_n \mid n \in \mathbb{N}\}$ is a set of addresses disjoint from
all concrete addresses and identifiers ($\caddrr \cap \caddr = \emptyset$,
$\caddrr \cap \cid = \emptyset$), generated during object decomposition to
represent the locations of subobjects within stack-allocated objects.
We define $\caddrp \defn \caddr \cup \caddrr$ as the extended address set of the split state.
The $\Sstate$ subcomponents are formalized as:
\[
	\begin{matrix*}
		\Sval = \Sval_{ih} \times \Sval_{f} \\
		\Sval_{ih} : (\cid \cup \caddrh) \rightarrow \cval \\
		\Sval_{f} : (\cid \cup \caddrh \cup \caddrr) \rightarrow (\cid \rightarrow \cval) \\
	\end{matrix*}
\]
\[
	\begin{matrix*}
		\Smem = \Smem_{ih} \times \Smem_{f} \\
		\Smem_{ih} : (\cid \cup \caddrh) \rightarrow \caddr \\
		\Smem_{f} : (\cid \cup \caddrh \cup \caddrr) \rightarrow (\cid \rightarrow \caddrp) \\
	\end{matrix*}
\]
where the subscript $ih$ represents storage for variables and heap locations,
and $f$ for object fields.
A split state can be equivalently referred to as $\sstate$, as the pair
$(\sval, \smem)$, or as the expanded form
$((\sval_{ih}, \sval_{f}), (\smem_{ih}, \smem_{f}))$.
The lattice structure is given by $\langle \powerset{\Sstate}, \subseteq \rangle$.

Throughout, all states are assumed to satisfy the following well-formedness condition.
\begin{definition}[Well-Formed Split State]
	Consider an arbitrary split state $\sstate = ((\sval_{ih}, \sval_{f}),$ $(\smem_{ih}, \smem_{f}))$.
	We say that $\sstate$ is \emph{well-formed} iff:
	\begin{itemize}
		\item
		      the pairs of functions $(\sval_{ih},\smem_{ih}), (\sval_{ih},\sval_{f}),
			      (\sval_{ih}, \smem_{f}), (\smem_{ih}, \sval_{f}), (\smem_{ih},\smem_{f})$
		      have disjoint domains, and
		\item
		      $\forall l \in \dom{\sval_{f}} \cup \dom{\smem_{f}},$ the pair of functions
		      $(\sval_{f}(l),\smem_{f}(l))$ have disjoint domains.
	\end{itemize}
\end{definition}
Note that this is a reasonable
assumption, since in a concrete state, a location can only be mapped to a value,
an address, or an object. Thus, when partitioning this state, each location can
only end up in one of the split components.
Since $\drestrcodom{}$ retains all keys (as used in $\alphasplitdot$,
\Cref{sec:concretization-abstraction}), every object-holding location $l$
appears in both $\dom{\sval_f}$ and $\dom{\smem_f}$, while each
\emph{object–field pair} $(l,f)$ occurs in exactly one of the two inner maps.
Concretely:
\begin{itemize}
	\item \emph{Empty object} $\emptyobj$:
	      $\sval_f(l) = \emptyset$ and $\smem_f(l) = \emptyset$.
	\item \emph{Only value-typed fields} (e.g., $\{f\!:\!v\}$, $v \in \cval$):
	      $\sval_f(l) \neq \emptyset$ and $\smem_f(l) = \emptyset$.
	\item \emph{Only reference-typed fields} (e.g., $\{f\!:\!a\}$, $a \in \caddrp$):
	      $\sval_f(l) = \emptyset$ and $\smem_f(l) \neq \emptyset$.
	\item \emph{Mixed fields}: both $\sval_f(l) \neq \emptyset$ and $\smem_f(l) \neq \emptyset$.
\end{itemize}

\begin{example}[Split State Benefit]
	\label{sec:split-state-example}
	Consider the program \texttt{p = new 5; tmp = *p; y = tmp + 1}.
	After all three statements, the concrete state is the pair $(\cstack, \cheap)$ with
	$\cstack = \{\texttt{p} \mapsto a,\; \texttt{tmp} \mapsto 5,\; \texttt{y} \mapsto 6\}$
	and $\cheap = \{a \mapsto 5\}$;
	within $\cstack$, an address and two integer values coexist in the same map.
	The split state separates these by type into two components:
	$\smem_{ih} = \{\texttt{p} \mapsto a\}$ captures the memory structure,
	while $\sval_{ih} = \{a \mapsto 5,\; \texttt{tmp} \mapsto 5,\; \texttt{y} \mapsto 6\}$
	captures the numerical values; the fields components $\smem_f$ and $\sval_f$ are empty here.
	This separation is not merely convenient but necessary from a formalization
	standpoint: each abstraction step can then focus exclusively on the relevant
	component of the state, i.e., memory structure~$\smem$ or numerical
	values~$\sval$, without needing to inspect the other.
\end{example}

\subsection{Concrete-Split Equivalence}
\label{sec:concrete-split-equivalence}
To ensure that the split state and semantics are just a reformulation of the
concrete state and semantics, it is necessary to show that there is no loss
of precision in the abstraction/concretization process, nor in the split semantics.
This reduces to showing:
\begin{inlinelist}
	\item there exist a pair of monotone functions $\alphasplit$ and $\gammasplit$
	such that
	$\langle \powerset{\Cstate}, \subseteq \rangle
		\GaloiS{\gammasplit}{\alphasplit}
		\langle \powerset{\Sstate}, \subseteq \rangle$, and
	\item for all statements $\stmt$ and split states $(\sval, \smem)$,
	$\csem{\stmt} (\gammasplit(\sval, \smem)) = \gammasplit(\ssem{\stmt}(\sval,\smem))$.
\end{inlinelist}

\subsubsection{Exploding and Imploding Objects.}
\label{sec:explode-implode-obj}
In order to define the relationship between concrete and split states, we introduce
two mutually inverse helper functions
$
	\explodefun: ((\cid \cup \caddrh) \to \cobj)
	\to ((\cid \cup \caddrh \cup \caddrr) \to \tilde{\cobj})
$ and
$
	\implodefun: ((\cid \cup \caddrh \cup \caddrr) \to \tilde{\cobj})
	\to ((\cid \cup \caddrh) \to \cobj)
$,
where
$\tilde{\cobj} \defn \cid \rightarrow (\cval \cup \caddrp \cup \bot)$
denotes an \emph{exploded object} -- a flat field map with no nested objects.
Intuitively, $\explodefun$ flattens nested objects replacing each sub-object with a fresh
address $r \in \caddrr$, while $\implodefun$ reassembles them back.
More precisely, $\explodefun$ introduces $\caddrr$ addresses only for \emph{stack} objects
(indexed by $\cid$), and acts as the identity on \emph{heap} objects (indexed by $\caddrh$):
the latter are already flat, since the $\new$ construct allocates each nested object
at a fresh heap address in $\caddrh$ (see \Cref{sec:concrete-assign}).
Formal definitions and proofs are in \Cref{sec:app-explode-implode}.

\subsubsection{Concretization and Abstraction.}
\label{sec:concretization-abstraction}
In the following definitions we use disjoint-domain union $f\cupdot g$,
second-order disjoint-domain union $f\cupddot g$, restriction
$f\restrdom{X}$, $f\restrcodom{Y}$, $f\restr{X}{Y}$, and double
restriction $f\drestr{X}{Y}$, as introduced in \Cref{sec:background}.

\begin{definition}[Concretization of Split States]
	The function $\gammasplit: \powerset{\Sstate} \to \powerset{\Cstate}$
	maps a set of split states $\Sset$ to a set of concrete states. It is defined as:
	\[
		\gammasplit(\Sset) = \bigcup_{\sstate \in \Sset} \{ \gammasplitdot(\sstate) \},
	\]
	where \(\gammasplitdot: \Sstate \to \Cstate\) is the concretization function for a single split state.
	Given a split state \(\sstate = (\sval, \smem)\), the corresponding concrete state is:
	\[
		\gammasplitdot(\sstate) =
		\left(
		\sval_{i} \cupdot \smem_{i} \cupdot \implodefun(\sval_{sf} \cupddot \smem_{sf}),
		\sval_{h} \cupdot \smem_{h} \cupdot \implodefun(\sval_{hf} \cupddot \smem_{hf})
		\right).
	\]
	where
	\[
		\sval_{i} = \sval_{ih}\restrdom{\cid}, \quad
		\sval_{h} = \sval_{ih}\restrdom{\caddrh}, \quad
		\sval_{sf} = \sval_{f}\restrdom{(\cid \cup \caddrr)}, \quad
		\sval_{hf} = \sval_{f}\restrdom{\caddrh},
	\]
	\[
		\smem_{i} = \smem_{ih}\restrdom{\cid}, \quad
		\smem_{h} = \smem_{ih}\restrdom{\caddrh}, \quad
		\smem_{sf} = \smem_{f}\restrdom{(\cid \cup \caddrr)}, \quad
		\smem_{hf} = \smem_{f}\restrdom{\caddrh}.
	\]
\end{definition}

\begin{definition}[Abstraction of Concrete States]
	The function $\alphasplit: \powerset{\Cstate} \to \powerset{\Sstate}$
	maps a set of concrete states $\Cset$ to a set of split states. It is defined as:
	\[
		\alphasplit(\Cset) = \bigcup_{\cstate \in \Cset} \{ \alphasplitdot(\cstate) \},
	\]
	where \(\alphasplitdot: \Cstate \to \Sstate\) is the
	abstraction function for a single concrete state.
	Given a concrete state \(\cstate = (\cstack, \cheap)\), the corresponding
	split state is:
	\[
		\alphasplitdot(\cstate) = ((\sval_{ih}, \sval_{f}), (\smem_{ih}, \smem_{f})),
	\]
	where, letting $\tilde{O} \defn \explodefun(\cstack\restrcodom{\cobj} \cupdot \cheap\restrcodom{\cobj})$:
	\[
		\sval_{ih} = \cstack\restrcodom{\cval} \cupdot \cheap\restrcodom{\cval},
		\quad
		\sval_f = \tilde{O}\drestrcodom{\cval},
	\]
	\[
		\smem_{ih} = \cstack\restrcodom{\caddr} \cupdot \cheap\restrcodom{\caddr},
		\quad
		\smem_f = \tilde{O}\drestrcodom{\caddrp}.
	\]
\end{definition}

\begin{theorem}[Concrete-Split Galois Isomorphism]
	\label{thm:concrete-split-galois-isomorphism}
	The functions $\alphasplit$ and $\gammasplit$ form a
	\emph{Galois isomorphism} between the concrete state lattice
	and the split state lattice:
	$\langle \powerset{\Cstate}, \subseteq \rangle \GaloiS{\alphasplit}{\gammasplit}
		\langle \powerset{\Sstate}, \subseteq \rangle$.
	The proof is provided in \Cref{sec:proof-concrete-split-galois-isomorphism}.
\end{theorem}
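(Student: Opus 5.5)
The plan is to reduce the statement to the pointwise maps $\alphasplitdot$ and $\gammasplitdot$. Both $\alphasplit$ and $\gammasplit$ are defined as direct images of these maps lifted to powersets, so they are additive and hence monotone w.r.t.\ $\subseteq$. Suppose $\alphasplitdot$ and $\gammasplitdot$ are mutually inverse bijections between $\Cstate$ and the set of well-formed split states. Then for every $\Cset$ we get $\gammasplit(\alphasplit(\Cset)) = \{\gammasplitdot(\alphasplitdot(\cstate)) \mid \cstate\in\Cset\} = \Cset$, and symmetrically $\alphasplit\circ\gammasplit = \idfun$. These two identities give the Galois connection (the adjunction $\alphasplit(\Cset)\subseteq\Sset \Leftrightarrow \Cset\subseteq\gammasplit(\Sset)$ follows from bijectivity), and they give the isomorphism directly. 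The invalid state $\bot$ is handled by sending it to a distinguished split $\bot$ in both directions.

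\textbf{Step 1: $\gammasplitdot\circ\alphasplitdot=\idfun$.} Fix $(\cstack,\cheap)$. By static typing, $\cstack$ splits as $\cstack\restrcodom{\cval}\cupdot\cstack\restrcodom{\caddr}\cupdot\cstack\restrcodom{\cobj}$, and $\cheap$ splits the same way. The stack component of $\gammasplitdot(\alphasplitdot(\cstate))$ is $\sval_{ih}\restrdom{\cid}\cupdot\smem_{ih}\restrdom{\cid}\cupdot\implodefun(\ldots)$. Using distribution of $\restriction$ over $\cupdot$ (property (iii) of \Cref{sec:background}) together with $\cid\cap\caddrh=\emptyset$, the first two summands equal $\cstack\restrcodom{\cval}$ and $\cstack\restrcodom{\caddr}$.

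For the object part, write $\tilde O=\explodefun(\cstack\restrcodom{\cobj}\cupdot\cheap\restrcodom{\cobj})$. First observe that $\tilde O\drestrcodom{\cval}\cupddot\tilde O\drestrcodom{\caddrp}=\tilde O$. This is the $\doublerestriction$ version of co-domain additivity (property (ii)), since $\cval\cap\caddrp=\emptyset$ and exploded objects have no $\bot$ fields on well-formed inputs. Next, restricting the outer domain to $\cid\cup\caddrr$, respectively $\caddrh$, commutes with $\cupddot$ by property (iii) for $\doublerestriction$. Finally, $\implodefun$ of the $(\cid\cup\caddrr)$-part of $\tilde O$ returns $\cstack\restrcodom{\cobj}$, and $\implodefun$ of the $\caddrh$-part returns $\cheap\restrcodom{\cobj}$. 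Both follow from the explode/implode inverse lemma of \Cref{sec:app-explode-implode}, provided that lemma is stated to respect the stack/heap partition: $\explodefun$ is the identity on heap objects, and the $\caddrr$ addresses only originate from $\cid$-indexed objects. Reassembling the pieces with domain additivity yields $(\cstack,\cheap)$.

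\textbf{Step 2: $\alphasplitdot\circ\gammasplitdot=\idfun$ on well-formed split states.} Let $(\cstack,\cheap)=\gammasplitdot(\sstate)$. Well-formedness makes every $\cupdot$ and $\cupddot$ defined. We then show the following.
\begin{itemize}
\item $\cstack\restrcodom{\cval}=\sval_{ih}\restrdom{\cid}$, because the $\smem$ part has codomain $\caddr$ and the imploded part has codomain $\cobj$, which are disjoint from $\cval$. The analogous identities hold for the heap and for addresses, so domain additivity recovers $\sval_{ih}$ and $\smem_{ih}$.
\item $\cstack\restrcodom{\cobj}\cupdot\cheap\restrcodom{\cobj}=\implodefun(\sval_{f}\cupddot\smem_{f})$, and applying $\explodefun$ then gives back $\sval_f\cupddot\smem_f$ by the other inverse direction in \Cref{sec:app-explode-implode}.
\item The double co-domain restrictions to $\cval$ and $\caddrp$ separate this union back into $\sval_f$ and $\smem_f$, by property (iv) and the disjointness of their inner co-domains.
\end{itemize}

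\textbf{Main obstacle.} I expect the obstacle to be the $\explodefun$/$\implodefun$ interplay in Step 2. Concretely, $\explodefun\circ\implodefun=\idfun$ can hold only up to the choice of fresh names $r\in\caddrr$ and only for ``tree-shaped'' field maps: every $r$ must be referenced exactly once, from a stack-rooted object, and must have no cycles. Heap objects must also be flat. I would first try to make this precise by restricting the split domain to the image of $\alphasplitdot$, folding these conditions into well-formedness together with a canonical naming scheme for $\caddrr$ (for example, derived from the access path). The appendix lemma can then be invoked verbatim. If the appendix lemma instead only proves the two inverse laws modulo renaming, I would quotient $\Sstate$ by $\caddrr$-renaming, and check that $\gammasplitdot$ is invariant under renaming so the isomorphism passes to the quotient.
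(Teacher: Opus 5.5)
Your proposal is correct and takes essentially the same route as the paper's proof. You reduce to $\alphasplitdot$ and $\gammasplitdot$, get monotonicity from additivity of the direct image, and verify both composites with the same chain of restriction-distribution, key-generation, $\explodefun$/$\implodefun$ distributivity and invertibility lemmas.

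The caveats you raise are genuine: deterministic naming of the $\caddrr$ addresses, tree-shaped exploded maps, flat heap objects, and no $\bot$ fields. The paper handles them only informally, through its ``global fresh address generator'' and its ``no unreachable recursive addresses'' side condition. Restricting the split domain to the image of $\alphasplitdot$ is therefore the right repair. It also guarantees $\dom{\sval_f}=\dom{\smem_f}$, which both you and the paper use implicitly when invoking property (iv) to recover $\sval_f$ and $\smem_f$.
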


\subsubsection{Split Semantics.}

Rather than defining new semantics directly on the split state, we
follow~\cite{whole-value} and leverage
the fact that $\alphasplit$ and $\gammasplit$ form a Galois isomorphism
between concrete and split states (\Cref{thm:concrete-split-galois-isomorphism}).
This property ensures that applying $\gammasplit$ to a split state, computing
semantics in the concrete domain, and then lifting the result back via
$\alphasplit$ yields the most precise abstraction of the concrete
outcome~\cite{CousotCousot79-1}.
Note that this does not make our semantics computable, but allows direct reuse
of the concrete semantics as a specification for the split semantics without creating a new one.
Successive abstractions can thus compare their abstract semantics to the concrete one by applying
$\alphasplit$ and $\gammasplit$.
Formally, the split semantics of a statement $\stmt$, auxiliary construct
$\texttt{ASM}(\bexp)$, and expression $\exp$ are respectively:
$
	\ssem{\stmt} = \alphasplit \circ \csem{\stmt} \circ \gammasplit,
$
$
	\ssem{\texttt{ASM}(\bexp)} = \alphasplit \circ \csem{\texttt{ASM}(\bexp)} \circ \gammasplit,
$
and
$
	\ssem{\exp} = \csem{\exp} \circ \gammasplit\,.
$
Note that for expressions, $\alphasplit$ is unnecessary since
$\csem{\exp}$ already produces a concrete value.
Since $\gammasplit$ and $\alphasplit$ are inverses of each other
(as shown in \Cref{sec:proof-concrete-split-galois-isomorphism}),
the following equivalence holds trivially.
\begin{theorem}[Split and Concrete Semantics Equivalence]
	\label{thm:split-concrete-semantics-equivalence}
	For any split state $\sstate \in \Sstate$, the following holds:
	\begin{inlinelist}
		\item for any $\stmt \in \stmts$:
		$\csem{\stmt} \gammasplit(\sstate) = \gammasplit(\ssem{\stmt}(\sstate))$;
		\item for any $\bexp \in \bexps$:
		$\csem{\texttt{ASM}(\bexp)} \gammasplit(\sstate) = \gammasplit(\ssem{\texttt{ASM}(\bexp)}(\sstate))$;
		\item for any $\exp \in \exps$:
		$\csem{\exp}(\gammasplit(\sstate)) = \ssem{\exp}(\sstate)$.
	\end{inlinelist} (Proof in \Cref{sec:proof-split-concrete-semantics-equivalence}).
\end{theorem}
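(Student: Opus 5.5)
The proof is a direct unfolding of the definitions of the split semantics, followed by \Cref{thm:concrete-split-galois-isomorphism}. The fact we need is that $\alphasplit$ and $\gammasplit$ are mutually inverse: $\gammasplit \circ \alphasplit = \idfun$ on $\powerset{\Cstate}$ and $\alphasplit \circ \gammasplit = \idfun$ on $\powerset{\Sstate}$. Since both maps are defined as pointwise liftings of $\alphasplitdot$ and $\gammasplitdot$, we also get $\gammasplitdot \circ \alphasplitdot = \idfun$ on $\Cstate$ and $\alphasplitdot \circ \gammasplitdot = \idfun$ on $\Sstate$. We will freely identify a single state $\sstate$ with the singleton $\{\sstate\}$, and $\csem{\stmt}$ with its pointwise lift $\ccsem{\stmt}$, as the statement implicitly does.

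For item (i) we compute
\[
\gammasplit(\ssem{\stmt}(\sstate)) = \gammasplit(\alphasplit(\csem{\stmt}(\gammasplit(\sstate)))) = \csem{\stmt}(\gammasplit(\sstate)).
\]
The first equality is the definition $\ssem{\stmt} = \alphasplit \circ \csem{\stmt} \circ \gammasplit$. The second follows from $\gammasplit \circ \alphasplit = \idfun$, applied to the concrete set $\csem{\stmt}(\gammasplit(\sstate))$.

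Item (ii) uses the identical calculation with $\csem{\texttt{ASM}(\bexp)}$ in place of $\csem{\stmt}$. The only extra point is that states on which the guard fails are mapped to $\bot$ and discarded in the collecting version. So $\csem{\texttt{ASM}(\bexp)}(\gammasplit(\sstate))$ is either a singleton or empty, and the identity $\gammasplit \circ \alphasplit = \idfun$ holds on the empty set as well. Item (iii) is immediate from the definition $\ssem{\exp} = \csem{\exp} \circ \gammasplit$ and needs no isomorphism at all.

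The only step requiring care is the treatment of the invalid state $\bot \in \Cstate$. The concrete semantics may return $\bot$, so $\alphasplit$ must be applied to $\bot$ and $\gammasplit$ must map its image back to $\bot$. The plan is to extend $\alphasplitdot$ and $\gammasplitdot$ with a distinguished bottom split state, mapping each to the other's bottom, which keeps both maps mutually inverse. Alternatively, we can rely on the convention that $\csem{\cdot}$ is only defined on well-formed inputs, so that $\bot$ never reaches $\alphasplit$. With either choice the rest of the argument is purely equational, and no case analysis on the form of $\stmt$ is needed.
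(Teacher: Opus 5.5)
Your proof is correct and follows the paper's own: unfold $\ssem{\stmt} = \alphasplit \circ \csem{\stmt} \circ \gammasplit$ (and likewise for $\texttt{ASM}(\bexp)$), cancel with $\gammasplit \circ \alphasplit = \idfun$ from \Cref{thm:concrete-split-galois-isomorphism}, and note that item (iii) holds by the definition of $\ssem{\exp}$. Your remarks on identifying a state with its singleton and on handling $\bot$ are points the paper leaves implicit, and either of the conventions you propose is enough.
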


Thus, the split state is a rephrasing of the concrete state, with no loss of
precision in both the commutation between the representations and the semantics.
Abstractions over the concrete semantics are challenging: the program state is
composed by two functions that mix value information with points-to
information. Instead, the split state provides a setting in which value and
points-to information are well separated: this lays the ground for modular
abstractions of the split state, in which each component can focus on one kind
of information only, either points-to or values. Since the split state is a
simple rephrasing, being sound w.r.t. the split semantics entails being sound
w.r.t. the concrete semantics.

\subsection{Collecting Semantics}
The split semantics is lifted pointwise to sets of split states $\Sset \subseteq \Sstate$.
For a statement $\stmt$,
$\cssem{\stmt}: \powerset{\Sstate} \to \powerset{\Sstate}$
is defined as $\cssem{\stmt}(\Sset) = \{\ssem{\stmt}(\sstate) \mid \sstate \in \Sset\}$;
for the auxiliary construct $\texttt{ASM}(\bexp)$,
$\cssem{\texttt{ASM}(\bexp)}: \powerset{\Sstate} \to \powerset{\Sstate}$
is defined as $\cssem{\texttt{ASM}(\bexp)}(\Sset) = \{\sstate' \mid \sstate \in \Sset \land \ssem{\texttt{ASM}(\bexp)}(\sstate) = \sstate' \land \sstate' \neq \bot\}$;
and for an expression $\exp$,
$\cssem{\exp}: \powerset{\Sstate} \to \powerset{\cval}$
is defined as $\cssem{\exp}(\Sset) = \{\ssem{\exp}(\sstate) \mid \sstate \in \Sset\}$.
By pointwise application of \Cref{thm:split-concrete-semantics-equivalence} over all
$\sstate \in \Sset$, the split collecting semantics commutes with concretization:
$\gammasplit(\cssem{\stmt}(\Sset)) = \ccsem{\stmt}(\gammasplit(\Sset))$.

\section{Abstract State}
\label{sec:abstract-state}

In this section, we define a general framework in which two abstract domains,
one abstracting values and one abstracting memory, can cooperate modularly to
form a sound over-approximation of the split state and semantics.
Specifically, we consider
a \emph{value domain} $\Aval$, which abstracts the value component $\sval$ of
the split state $(\sval, \smem) \in \Sval \times \Smem$, and
a \emph{memory domain} $\Amem$, which abstracts the memory component $\smem$.
Intuitively, the value domain approximates the values stored in program variables and memory
locations, while the memory domain approximates the structure of memory,
including allocation, indirection, aliasing, and object fields.
Although $\Amem$ is often referred to in the literature as a
\emph{heap domain/state}, in our framework it abstracts all program
memory locations, both stack- and heap-allocated. For this reason, we
consistently use the terms \emph{memory domain/state} to refer to $\Amem$.

\paragraph{Memory Identifiers and Domain Interaction.}
The cooperation between the two domains is mediated through a set of
\emph{memory identifiers}, denoted by $\MemID$, along with a set of
operations defined by the domains to manipulate them.
These identifiers represent abstract memory locations corresponding to
one or more split locations in the memory of the analyzed program,
including both stack-allocated and heap-allocated entities.
They are treated as abstract names without internal structure.
Their concrete interpretation and allocation strategy are determined by
the chosen memory domain implementation.

During abstract execution, memory operations may reorganize memory identifiers,
for instance by materializing new locations or merging existing ones.
Such structural changes are communicated to the value domain through
\emph{substitutions}, ensuring that the combined abstract state remains
consistent and sound.

\paragraph{Domain Requirements.}
Since the framework is parametric in the choice of $\Aval$ and $\Amem$,
we require each abstract domain $\mathcal{X} \in \{\Aval, \Amem\}$
to satisfy the following minimal properties,
using the abstract interpretation terminology
recalled in \Cref{sec:background}:
\begin{itemize}
	\item it should form a partial order, denoted $\langle \mathcal{X}, \sqsubseteq_\mathcal{X} \rangle$,
	      with bottom element $\bot_\mathcal{X}$, and should define an upper bound operator $\sqcup_\mathcal{X}$;
	\item it should provide a widening operator $\triangledown_\mathcal{X}$
	      (if not needed by the domain itself, it can simply delegate to
	      $\sqcup_\mathcal{X}$).
\end{itemize}

In addition, we require the domains to provide monotone concretization
functions $\gammaAV$ and $\gammaAM$ whose purpose is not to directly abstract
parts of the split state, but to provide partial views of the program memory
that can be combined to produce an overall concretization of the abstract
state. Furthermore, instead of defining standard abstract transformers, we
require each domain to expose a small operational interface that the framework
can use to evolve the two domains consistently during the analysis. These
requirements are detailed in
\Cref{sec:concretization-function,sec:abstract-semantics}, respectively.

\subsection{Abstract State Construction}
The overall abstract state is obtained by combining the value and memory
domains into a single state. It is defined as $\Astate \defn \Aval \otimes \Amem$,
where $\otimes$ denotes the \emph{smash product} of abstract
domains~\cite{ArceriM17}, a form of reduced product~\cite{CortesiCF13}
that propagates the bottom element of one component to all others.
An abstract state is therefore a pair $\astate = (\aval, \amem) \in \Astate$,
where $\aval \in \Aval$ and $\amem \in \Amem$.
Since the smash product is a special form of Cartesian product, the ordering and lattice
operators on $\Astate$ are defined component-wise:
\begin{itemize}
	\item $(\aval_1, \amem_1) \sqsubseteq_{\Astate} (\aval_2, \amem_2)
		      \iff \aval_1 \sqsubseteq_{\Aval} \aval_2 \land \amem_1 \sqsubseteq_{\Amem} \amem_2$;
	\item $(\aval_1, \amem_1) \sqcup_{\Astate} (\aval_2, \amem_2)
		      = (\aval_1 \sqcup_{\Aval} \aval_2, \amem_1 \sqcup_{\Amem} \amem_2)$;
	\item $(\aval_1, \amem_1) \triangledown_{\Astate} (\aval_2, \amem_2)
		      = (\aval_1 \triangledown_{\Aval} \aval_2, \amem_1 \triangledown_{\Amem} \amem_2)$.
\end{itemize}
For the same reason, $\Astate$ inherits the lattice properties of its components:
$\sqsubseteq_{\Astate}$ is a partial order, $\sqcup_{\Astate}$ is an
upper bound operator, $\bot_{\Astate}$ is the bottom element, and
$\triangledown_{\Astate}$ is a widening operator.

\subsection{Concretization Function}
\label{sec:concretization-function}
The concretization function
$\gammaA:\Astate \to \powerset{\Sstate}$ maps an abstract state
to the set of all possible split states it represents.

$\gammaAV$ is a monotone function that produces split value environments in which
split addresses are abstracted by memory identifiers.
Concretely, we write:
$$
	\gammaAV : \Aval
	\to
	\powerset{\Avalcid \times \Avalmemid},
$$
where $\Avalmemid : \MemID \to \cval$ is a mapping that associates
split values to memory identifiers, while $\Avalcid$ denotes the
$\cid$-indexed part of the value split state, i.e. the restriction
of both $\Sval_{ih}$ and $\Sval_f$ to keys in $\cid$
(so $\dom{\Avalcid} \subseteq \cid$).
To relate abstract memory identifiers to split memory locations,
the concretization of the memory domain provides, together with
split memory structure, a mapping that connects abstract
identifiers to split addresses. Formally, we define:
$$
	\gammaAM : \Amem
	\to
	\powerset{\Smem \times \gammaID \times \gammaIDF},
$$
as a monotone function.
Intuitively, for each abstract memory state $\amem$, every element
$(\smem, \gammaID, \gammaIDF) \in \gammaAM(\amem)$ describes:
\begin{inlinelist}
	\item a possible split memory structure $\smem$, and
	\item how abstract memory identifiers are implemented by
	split memory locations in $\smem$.
\end{inlinelist}
Here, $\gammaID : \MemID \to \powerset{\caddrh}$ maps abstract
memory identifiers to sets of heap addresses, while
$\gammaIDF : \MemID \to \powerset{(\caddrh\cup\caddrr) \times \cid}$
maps abstract memory identifiers to sets of split
field locations (only heap- and stack-nested objects carry fields). $\gammaID$ and $\gammaIDF$ have disjoint domains, namely
$\dom{\gammaID} \cap \dom{\gammaIDF} = \emptyset$: by the static typing assumption
(\Cref{sec:mull-language}), the abstract memory identifiers for
base locations (domain of $\gammaID$) and for field locations (domain of $\gammaIDF$)
are always distinct.

To concretize the $\MemID$-indexed value component $\Avalmemid$,
we need to replace each memory identifier in the state with the split locations
it may denote. We thus define two helper concretizations that replace
memory identifiers in $\nu \in \Avalmemid$ by split bindings induced
by $\gammaID$ and $\gammaIDF$, defined as:
\begin{align*}
	 & \rhobase: (\Avalmemid \times \gammaID) \to (\caddrh \to \cval) \\
	 & \rhobase(\nu,\gammaID) =
	\{ (a,v) : (\mId,v)\in\nu \wedge a \in \gammaID(\mId) \}
\end{align*}
\begin{align*}
	 & \rhofield: (\Avalmemid \times \gammaIDF) \to ((\caddrh \cup \caddrr) \to \cid \to \cval) \\
	 & \rhofield(\nu,\gammaIDF) =
	\{ (a,\{(f_i,v_i)\}) : (\mId,v_i)\in\nu \wedge (a,f_i)\in\gammaIDF(\mId) \}
\end{align*}
Both functions are trivially monotone, since their definitions only add bindings
to the output as the inputs grow.

Finally, the concretization of the full abstract state $(\aval,\amem)$ is
defined as
\begin{align*}
	\gammaA(\aval,\amem) = & \{ (\sval,\smem): \;
	(\sval_{id},\sval_{mid}) \in \gammaAV(\aval) \;\wedge\;
	(\smem,\gammaID,\gammaIDF) \in \gammaAM(\amem)                                                              \\
	                       & \quad \wedge \sval_{mid}^{id} = \rhobase(\sval_{mid},\gammaID) \;\wedge\;
	\sval_{mid}^{idf} = \rhofield(\sval_{mid},\gammaIDF)                                                        \\
	                       & \quad \wedge \sval = \sval_{id} \cupdot \sval_{mid}^{id} \cupdot \sval_{mid}^{idf}
	\}
\end{align*}
Here, $(\sval_{id}, \sval_{mid}) \in \gammaAV(\aval)$ is a pair produced by the
value concretization, where $\sval_{id} \in \Avalcid$ provides the
$\cid$-keyed bindings directly, and $\sval_{mid} \in \Avalmemid$ provides the
$\MemID$-keyed bindings that still need to be resolved to split addresses.
The helpers $\rhobase$ and $\rhofield$ perform that resolution:
$\sval_{mid}^{id}$ maps
memory identifiers to base-address bindings, while
$\sval_{mid}^{idf}$ maps them to
location-field bindings.
The final split value environment $\sval$ is the disjoint union of all
three components.

\begin{lemma}[Monotonicity of $\gammaA$]
	\label{lemma:abs-gamma-monotone}
	The concretization function $\gammaA$ is monotone, that is:
	\[
		\forall \astate_1, \astate_2 \in \Astate : \astate_1 \sqsubseteq_{\Astate} \astate_2
		\implies \gammaA(\astate_1) \subseteq \gammaA(\astate_2)
	\]
\end{lemma}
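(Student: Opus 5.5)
The argument is a direct element-chasing proof. Fix $\astate_1 = (\aval_1,\amem_1)$ and $\astate_2 = (\aval_2,\amem_2)$ with $\astate_1 \sqsubseteq_{\Astate} \astate_2$. Since the order on $\Astate$ is component-wise, this yields $\aval_1 \sqsubseteq_{\Aval} \aval_2$ and $\amem_1 \sqsubseteq_{\Amem} \amem_2$. The domain requirements state that $\gammaAV$ and $\gammaAM$ are monotone, so we get $\gammaAV(\aval_1) \subseteq \gammaAV(\aval_2)$ and $\gammaAM(\amem_1) \subseteq \gammaAM(\amem_2)$.

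Next, take an arbitrary $(\sval,\smem) \in \gammaA(\aval_1,\amem_1)$. By definition of $\gammaA$ there are witnesses $(\sval_{id},\sval_{mid}) \in \gammaAV(\aval_1)$ and $(\smem,\gammaID,\gammaIDF) \in \gammaAM(\amem_1)$ with
\[
\sval = \sval_{id} \cupdot \rhobase(\sval_{mid},\gammaID) \cupdot \rhofield(\sval_{mid},\gammaIDF).
\]
By the two inclusions above, the same tuples lie in $\gammaAV(\aval_2)$ and $\gammaAM(\amem_2)$. We reuse exactly these witnesses for $\astate_2$. The values $\rhobase(\sval_{mid},\gammaID)$ and $\rhofield(\sval_{mid},\gammaIDF)$ depend only on the witnesses and not on the abstract state. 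Hence they are literally the same maps, and so is the disjoint union $\sval$. Therefore every side condition in the definition of $\gammaA(\aval_2,\amem_2)$ holds with the same $\sval$ and $\smem$, which gives $(\sval,\smem) \in \gammaA(\aval_2,\amem_2)$.

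The only point needing care is that $\cupdot$ is defined only on domain-disjoint maps. Because the witnesses are reused unchanged, definedness of the union for $\astate_1$ transfers verbatim to $\astate_2$. We therefore never need the monotonicity of $\rhobase$ and $\rhofield$ in their arguments; it would only matter if the witnesses changed. The bottom case of the smash product also needs a remark. If either component of $\astate_1$ is bottom, $\gammaA(\astate_1)$ should be empty. We take this as $\gammaA$ mapping bottom to $\emptyset$, or as following from $\gamma(\bot)$ of the component, and the inclusion is then trivial.
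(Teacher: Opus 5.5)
Your proof is correct and follows essentially the same route as the paper's: the component-wise order together with monotonicity of $\gammaAV$ and $\gammaAM$ enlarges the sets of witnesses, so every pair in $\gammaA(\astate_1)$ is produced by the same witnesses in $\gammaA(\astate_2)$. You are also right that reusing the witnesses unchanged makes monotonicity of $\rhobase$ and $\rhofield$ unnecessary; the paper invokes it without it doing real work, and your element-chasing avoids the paper's detour through $\gammaAV(\aval_2)\setminus\gammaAV(\aval_1)$.
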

Intuitively, since $\Astate$ is a product domain, monotonicity follows
if $\gammaAV$ and $\gammaAM$ are monotone. Proof in \Cref{sec:proof-abs-gamma-monotone}.

\subsection{Abstract Semantics}
\label{sec:abstract-semantics}

To formalize the semantics of our framework, we require that the value and
memory domains provide specialized abstract transformers that the framework
exploits to provide a sound over-approximation of the split semantics. These
are either slight modifications of the standard transformers required by
abstract interpretation, or new operators that extract information from the domains.
To simplify the presentation of such transformers we introduce the following
auxiliary sets:
\begin{inlinelist}
	\item
	$\tilde{V} \defn \MemID \cup \vexps$, the set of value expressions that can also contain memory identifiers;
	\item
	$\rhss \defn \exp \mid \oexp \mid \new(\aexp \mid \bexp \mid \sexp \mid \oexp)$,
	the set of expressions that may occur on the right-hand side of assignments
	and that may contain memory operations; we let $\rhs \in \rhss$;
	\item
	$\tilde{S}$, a lattice structure that models string values that is produced by the value domain (used to resolve dynamic accesses).
\end{inlinelist}
Intuitively, $\tilde{S}$ is an abstraction of the set of possible string values
that over-approximate the string argument $\sexp$ in $\exp[\sexp]$.

Operations marked with ${}^{(\star)}$ are specific to this framework;
the remaining ones are standard abstract interpretation transformers,
slightly modified to accommodate information exchange between the domains.
The value domain must provide the following operations:
\begin{itemize}
	\item
	      $\assignfunAV : (\MemID \cup \cid) \times \tilde{V} \times \Aval \times \{\text{\ttfamily true} \mid \text{\ttfamily false}\} \to \Aval$,
	      which assigns the abstract value denoted by the second argument to the (memory)
	      identifier given as first argument in the abstract value state;
	      the boolean flag selects between a strong update (overwrite)
	      and a weak update (join with the existing value);
	\item
	      $\assumefunAV : \bexps \times \Aval \to \Aval$, which refines the abstract value
	      state according to a boolean guard.
	\item
	      ${}^{(\star)}\evalfun : \sexps \times \Aval \to \tilde{S}$, which evaluates string
	      expressions and returns a abstract set of values used to resolve
	      dynamic memory accesses;
\end{itemize}
Note that, while the existence of an evaluation function such as $\evalfun$ is typical of
non-relational domains, relational domains employed in whole-value analyses
must also produce an abstraction of an expression’s value to interact with
other domains. Therefore, the framework is not restricted to non-relational
value domains. For instance, a relational domain like the polyhedra domain can still produce
a non relational interval reporting the range of values an expression can evaluate to.
It follows that the value domain must be able to over-approximate string values
to properly support the $\exp[\sexp]$ instruction. If it cannot, the set $\tilde{S}$
can be instantiated with a trivial lattice (e.g., $\{\top\}$) that over-approximates all strings,
and the $\evalfun$ operator can be defined as a no-op that always returns $\top$.

Instead, the memory domain must provide the following operations:
\begin{itemize}
	\item
	      $\assignfunAM : \lhss \times \rhss \times \Amem \to \Sub \times \Amem$,
	      which updates the abstract memory structure by assigning the second
	      argument to the first, and returns any substitutions required to keep the
	      value domain consistent; the first argument may be either a syntactic
	      left-hand side (an element of $\lhss$) or an explicit memory identifier
	      (an element of $\MemID$);

	\item
	      $\assumefunAM : \bexps \times \Amem \to \Amem$, which refines the abstract memory
	      state according to a boolean guard;

	\item
	      ${}^{(\star)}\pointstofun : \wp(\cid \cup \MemID) \times \Amem \to \powerset{\cid \cup \MemID}$,
	      which yields the set of memory identifiers that a set of identifiers and
	      memory identifers may point to in the current memory state;

	\item
	      ${}^{(\star)}\pointedbyfun : \cid \times \Amem \to \MemID$,
	      which yields the memory identifier that abstracts the address of the given program variable
	      in the current memory state;

	\item
	      ${}^{(\star)}\accessfun : \powerset{\MemID} \times \Sigma^* \times \Amem
		      \to \powerset{\MemID} \times \Sub \times \Amem$,
	      where $\Sigma^*$ is the set of string literals,
	      which resolves field accesses starting from a set of base memory identifiers
	      and a statically known field name to a new set of memory identifiers;
	      since an access on the left-hand side of an assignment might define new fields,
	      $\accessfun$ has agency to update the memory structure and return a
	      new instance of $\Amem$, together with any substitutions required to
	      keep the value domain consistent;

	\item
	      ${}^{(\star)}\accessfun : \powerset{\MemID} \times \tilde{S} \times \Amem
		      \to \powerset{\MemID} \times \Sub \times \Amem$
	      that models a field access in the same way as its previous
	      overload, but starting from a dynamically evaluated field name that
	      has already been abstracted to an instance of $\tilde{S}$ by the
	      value domain;

	\item
	      ${}^{(\star)}\freshSsplit : \Amem \to \MemID \times \Amem$, which
	      allocates fresh memory identifiers corresponding to a
	      stack-allocated abstract location; since the memory abstraction might
	      need update to track the new location, $\freshSsplit$ returns an
	      updated instance of $\Amem$ together with the fresh identifier, but
	      no substitutions since the location is fresh and cannot be referenced
	      by the value domain yet;\footnote{Note that $\freshSsplit$ does not need to return an
	      unused memory identifier at every call, but rather an address for a
	      particular allocation: for instance, an allocation-site based
	      analysis will generate one identifier for each location in the
	      program, but will always generate the same identifier for the
	      same location at each loop iteration.}

	\item
	      ${}^{(\star)}\freshHsplit : \Amem \to \MemID \times \Amem$, that
	      performs the same role as $\freshSsplit$ but for heap-allocated
	      abstract locations;

	\item
	      ${}^{(\star)}\memidfunAM : \Amem \to \powerset{\MemID}$, which returns the set of
	      memory identifiers currently tracked in the given abstract memory state;
	      used to state the coverage condition on $\gammaID$ and $\gammaIDF$
	      (see~C1 below);

	\item
	      ${}^{(\star)}\isSumfun : \MemID \times \Amem \to \{\text{\ttfamily true} \mid \text{\ttfamily false}\}$,
	      which returns \texttt{true} if the memory identifier may represent more than one
	      concrete location (a summary node), and \texttt{false} otherwise (represent
	      exactly one, i.e., a singleton node).
\end{itemize}

\subsection{Substitutions}
\label{sec:substitutions}

The memory domain may reorganize abstract memory identifiers during abstract
computations, for instance by materializing (or splitting), merging and renaming
locations. These identifier changes must be communicated to the value domain
so the two components remain consistent.

We adopt the substitution concept
from~\cite{Ferrara16Framework}, and specialize it to our setting.
A substitution $\sub = [\MId_1 \to \MId'_1, \dots, \MId_n \to \MId'_n] \in \Sub$
is a finite sequence of replacements, where each $\MId_j, \MId'_j \subseteq \MemID$.
A binding $\MId \to \MId'$ states that the post-state identifiers in $\MId$
derive from the pre-state identifiers in $\MId'$; identifiers in $\MId'$
not appearing in any $\MId$ are killed.
The three canonical cases are:
\begin{inlinelist}
	\item \emph{rename}: $\MId$ and $\MId'$ are both singletons;
	\item \emph{merge}: $\MId$ is a singleton, $|\MId'| > 1$
	(e.g., $\{\mId_1\} \to \{\mId_2,\mId_3\}$ merges $\mId_2$ and $\mId_3$);
	\item \emph{split/materialization}: $|\MId| > 1$, $\MId'$ is a singleton
	(e.g., $\{\mId_1,\mId_2\} \to \{\mId_3\}$ splits $\mId_3$).
\end{inlinelist}

When a memory domain transformer returns a substitution $\sub$, the value domain
applies it via $\applysubfun(\sub, \aval, \amem)$, defined as:
\begin{align*}
	 & \applysubfun(\sub,\, \aval,\, \amem) = \aval_n \text{ where } \sub = [\MId_1 \to \MId'_1, \dots, \MId_n \to \MId'_n],\ \aval_0 = \aval \\
	 & \quad \wedge \forall\, j \in [1,n] : \aval_j =
	\sqcup_{\mId' \in \MId'_j} \{ {\aval}'_m \;:\; \MId_j = \{\mId_1, \dots, \mId_m\}, {\aval}'_0 = \aval_{j-1},                     \\
	 & \quad \forall\, k \in [1,m] : {\aval}'_k = \assignfunAV(\mId_k, \mId', {\aval}'_{k-1}, \lnot\isSumfun(\mId_k, \amem)) \}
\end{align*}

\subsection{Expression Rewriting}

During abstract computations, the right-hand side of an assignment may contain
memory-related sub-expressions (e.g., pointer dereferences, field accesses, object
allocations) that must be rewritten to memory identifiers since the value domain
cannot handle them.
The function $\Rfun : \rhss \times \Aval \times \Amem \to \wp(\tilde{V}) \times \Aval \times \Amem$
handles this rewriting: it traverses a right-hand-side
expression, resolves every memory-sensitive sub-expression to a set of abstract
value references, and threads the updated abstract state through the process.
To avoid cluttering the definition, we will pass sets in $\wp(\tilde{V})$ to
functions that expect as arguments sets of $\wp(\MemID)$, implictly assuming
that any element of the argument that is not a memory identifier is ignored by
the function.
Function $\Rfun$ is defined by structural induction on the syntax of the input expression as:

\begin{flalign*}
	\Rfun(\texttt{x}, \aval, \amem) =\;                                 & (\{\texttt{x}\}, \aval, \amem)                                                                                           \\
	\Rfun(\texttt{a} \mid \texttt{b} \mid \texttt{s}, \aval, \amem) =\; & (\{\texttt{a} \mid \texttt{b} \mid \texttt{s}\}, \aval, \amem)                                                           \\
	\Rfun(\addrof\texttt{x}, \aval, \amem) =\;                          & (\{\pointedbyfun(\texttt{x}, \amem)\}, \aval, \amem)                                                                     \\
	\Rfun(\deref\texttt{p}, \aval, \amem) =\;                           & \left( \MId_1,\ \aval_1,\ \amem_1 \right)                                                                                \\
	\text{ where }                                                      & (\MId, \aval_1, \amem_1) = \Rfun(\texttt{p}, \aval, \amem) \;\land                                                       \\
	                                                                    & \MId_1 = \pointstofun(\MId, \amem_{1})                                                                                   \\
	\Rfun(\texttt{p.f}, \aval, \amem) =\;                               & (ac, \applysubfun(\sub, \aval_1, \amem_2), \amem_2)                                                                       \\
	\text{ where }                                                      & (ids, \aval_1, \amem_1) = \Rfun(\texttt{p}, \aval, \amem) \;\land                                                        \\
	                                                                    & (ac, \sub, \amem_2) = \accessfun(ids, \texttt{f}, \amem_1)                                                               \\
	\Rfun(\texttt{p[s]}, \aval, \amem) =\;                              & (ac, \applysubfun(\sub, \aval_1, \amem_2), \amem_2)                                                                       \\
	\text{where }                                                       & (ids, \aval_1, \amem_1) = \Rfun(\texttt{p}, \aval, \amem) \;\land                                                        \\
	                                                                    & S = \evalfun(\texttt{s}, \aval_1) \;\land                                                                                \\
	                                                                    & (ac, \sub, \amem_2) = \accessfun(ids, S, \amem_1)                                                                        \\
	\Rfun(\texttt{\{\;\}}, \aval, \amem) =\;                            & (\{\ell\}, \aval, \amem_1)                                                                                               \\
	\text{where }                                                       & (\ell, \amem_1) = \freshSsplit(\amem)                                                                                    \\
	\Rfun(\texttt{\{f_i:v_i\}}_{i=1}^n, \aval_0, \amem) =\;             & (\{\ell\}, \aval_n, \amem_n)                                                                                             \\
	\text{ where }                                                      & (\ell, \amem_0) = \freshSsplit(\amem) \;\land                                                                            \\
	                                                                    & \text{for } i = 1..n:                                                                                                    \\
	                                                                    & \quad (\MId_i, \sub_i, \amem_i) = \accessfun(\{\ell\}, \texttt{f}_i, \amem_{i-1}) \;\land                                \\
	                                                                    & \quad \aval_{s_i} = \applysubfun(\sub_i, \aval_{i-1}, \amem_i) \;\land                                                     \\
	                                                                    & \quad \aval_i = \bigsqcup_{\mId \in \MId_i} \assignfunAV(\mId, \texttt{v}_i, \aval_{s_i}, \lnot\isSumfun(\mId, \amem_i)) \\
	\Rfun(\texttt{new a|b|s}, \aval, \amem) =\;                         & (\{\ell\}, \aval_1, \amem_1)                                                                                             \\
	\text{ where }                                                      & (\ell, \amem_1) = \freshHsplit(\amem) \;\land                                                                            \\
	                                                                    & \aval_1 = \assignfunAV(\ell, \texttt{a|b|s}, \aval, \lnot\isSumfun(\ell, \amem))                                         \\
	\Rfun(\texttt{new \{\;\}}, \aval, \amem) =\;                        & (\{\ell\}, \aval, \amem_1)                                                                                               \\
	\text{where }                                                       & (\ell, \amem_1) = \freshHsplit(\amem)                                                                                    \\
	\Rfun(\texttt{new \{f_i:v_i\}}_{i=1}^n, \aval_0, \amem) =\;         & (\{\ell\}, \aval_n, \amem_n)                                                                                             \\
	\text{ where }                                                      & (\ell, \amem_0) = \freshHsplit(\amem) \;\land                                                                            \\
	                                                                    & \text{for } i = 1..n:                                                                                                    \\
	                                                                    & \quad (\MId_i, \sub_i, \amem_i) = \accessfun(\{\ell\}, \texttt{f}_i, \amem_{i-1}) \;\land                                \\
	                                                                    & \quad \aval_{s_i} = \applysubfun(\sub_i, \aval_{i-1}, \amem_i) \;\land                                                     \\
	                                                                    & \quad \aval_i = \bigsqcup_{\mId \in \MId_i} \assignfunAV(\mId, \texttt{v}_i, \aval_{s_i}, \lnot\isSumfun(\mId, \amem_i)) \\
\end{flalign*}

\subsection{Statement Semantics}

The abstract semantics is defined to abstract the split semantics while operating over abstract states.
For each statement $\stmt$, we define an abstract transfer function $\asem{\stmt}: \astate \to \astate$.
The semantics of assignment statements is defined as follows:
\begin{equation*} \text{(assign)} \quad
	\frac{
		\begin{array}{c}
			(\sub_m, \amem_1) = \assignfunAM(\lhs, \rhs, \amem)	\quad
			\aval_m = \applysubfun(\sub_m,\aval,\amem_1)                     \\
			(ids_r,\aval_r,\amem_r) = \Rfun(\rhs,\aval_m,\amem_1) 		\quad
			(ids_l,\aval_l,\amem_l) = \Rfun(\lhs,\aval_r,\amem_r)            \\
			\aval_2 = \bigsqcup_{id_l\in ids_l, id_r\in ids_r}
			\assignfunAV(id_l, id_r, \aval_l, \lnot\isSumfun(id_l, \amem_l)) \\
		\end{array}
	}{
		\asem{\lhs = \rhs} (\aval, \amem) = (\aval_2, \amem_l)
	}
\end{equation*}

Note that we can call $\Rfun$ on $\lhs$ since $\lhss \subseteq \rhss$.
Also, we can call $\assignfunAV$ on $id_l$ since the function $\Rfun$
always returns a set of memory identifiers $\MemID$ when called on a $\lhs$ expression.
Finally, the assignment on $id_l$ is strong if it is not a summary node,
and weak otherwise.

The abstract counterpart of $\texttt{ASM}(\bexp)$
refines both domain components independently using the $\assumefunAV$ and $\assumefunAM$ operations
provided by each domain:
\begin{equation*} \text{(assume)} \quad
	\frac{
		\aval_1 = \assumefunAV(\bexp, \aval) \qquad
		\amem_1 = \assumefunAM(\bexp, \amem)
	}{
		\asem{\texttt{ASM}(\bexp)} (\aval, \amem) = (\aval_1, \amem_1)
	}
\end{equation*}
Note that, since $\bexp$ cannot contain memory expressions, function $\Rfun$
does not need to be applied in assume operations.

\begin{example}[Value--Memory Interaction]
\label{sec:value-memory-example}
Similarly to \Cref{sec:split-state-example}, consider the program
\texttt{p = new 5; tmp = *p; y = tmp + 1}.
We assume the memory domain is the Andersen-style points-to analysis
(\Cref{sec:inst-andersen}) and the value domain is constant propagation;
\Cref{tab:value-memory-trace} shows the resulting abstract state after each statement.
To show the interaction between the value and memory domains, consider the
second statement \texttt{tmp = *p}, where the abstract assignment rule
(\Cref{sec:abstract-state}) proceeds as follows.
\begin{enumerate}
  \item $\assignfunAM(\texttt{tmp}, \texttt{*p}, \amem)$: a load does not
        change pointer structure, so it returns an empty substitution
        and $\amem$ unchanged.
  \item $\Rfun(\texttt{*p}, \aval, \amem)$: the dereference rule
        evaluates $\Rfun(\texttt{p}, \aval, \amem) = (\{\texttt{p}\},
        \aval, \amem)$ first, then queries
        $\pointstofun(\{\texttt{p}\}, \amem) = \{\mId_\ell\}$.
        The memory domain supplies the identifier to read: $ids_r =
        \{\mId_\ell\}$.
  \item $\Rfun(\texttt{tmp}, \aval, \amem) = (\{\texttt{tmp}\}, \aval,
        \amem)$: the variable rule gives $ids_l = \{\texttt{tmp}\}$.
  \item $\assignfunAV(\texttt{tmp}, \mId_\ell, \aval, \lnot\isSumfun(\texttt{tmp}, \amem))$:
        since \texttt{tmp} is a concrete variable (not a summary node),
        $\isSumfun(\texttt{tmp}, \amem) = \mathtt{false}$ and the assignment is strong;
        constant propagation reads $\aval(\mId_\ell) = 5$ and sets
        $\aval(\texttt{tmp}) \leftarrow 5$.
\end{enumerate}
For \texttt{y = tmp + 1} all four steps follow the same structure, with the
memory domain playing no role:
\begin{enumerate}
  \item $\assignfunAM(\texttt{y}, \texttt{tmp+1}, \amem)$: an arithmetic
        assignment does not change pointer structure; returns an empty
        substitution and $\amem$ unchanged.
  \item $\Rfun(\texttt{tmp+1}, \aval, \amem)$: the scalar rule returns the
        expression as-is, so $ids_r = \{\texttt{tmp+1}\}$.
  \item $\Rfun(\texttt{y}, \aval, \amem) = (\{\texttt{y}\}, \aval, \amem)$:
        the variable rule gives $ids_l = \{\texttt{y}\}$.
  \item $\assignfunAV(\texttt{y}, \texttt{tmp+1}, \aval, \lnot\isSumfun(\texttt{y}, \amem))$:
		the operation $\isSumfun(\texttt{y}, \amem)$ returns \texttt{false} because \texttt{y}
		is a concrete variable, resulting in a strong assignment; constant propagation evaluates
        $\aval(\texttt{tmp}) + 1 = 6$ and sets $\aval(\texttt{y}) \leftarrow 6$.
\end{enumerate}
\begin{table}[t]
  \centering
  \caption{Abstract state at each program point.
           The memory domain determines \emph{which} location is read
           ($\mId_\ell$, step~2); the value domain determines \emph{what}
           is stored there~($5$, step~4).}
  \label{tab:value-memory-trace}
  \renewcommand{\arraystretch}{1.3}
  \begin{tabular}{|l|c|c|}
    \hline
    \textbf{After statement} & $\aval$ & $\amem$ \\
    \hline
    \texttt{p = new 5}
      & $[\mId_\ell \mapsto 5]$
      & $[\texttt{p} \mapsto \{\mId_\ell\}]$ \\
    \hline
    \texttt{tmp = *p}
      & $[\mId_\ell \mapsto 5,\ \texttt{tmp} \mapsto 5]$
      & $[\texttt{p} \mapsto \{\mId_\ell\}]$ \\
    \hline
    \texttt{y = tmp + 1}
      & $[\mId_\ell \mapsto 5,\ \texttt{tmp} \mapsto 5,\ \texttt{y} \mapsto 6]$
      & $[\texttt{p} \mapsto \{\mId_\ell\}]$ \\
    \hline
  \end{tabular}
\end{table}
\end{example}

\subsection{Soundness}
\label{sec:soundness}

The soundness of our abstract domain can be established by assuming some
conditions over the transformers provided by the value and memory domains.
Specifically, we assume two conditions on how memory identifiers are concretized
in order to prove the soundness of the analysis. These are adapted from C1 and
C3 of~\cite{Ferrara16Framework}.
\begin{enumerate}[label=C\arabic*., ref=C\arabic*]
	\item\label{cond:C1} $\gammaID$ and $\gammaIDF$ must together cover all memory identifiers
	      tracked in the concretized memory state. Formally,
	      $
		      \forall \amem \in \Amem,\,
		      \forall (\smem, \gammaID, \gammaIDF) \in \gammaAM(\amem):
		      \dom{\gammaID} \cup \dom{\gammaIDF} = \memidfunAM(\amem).
	      $

	\item\label{cond:C2} Different memory identifiers represent disjoint portions of memory, that is,
		  $\dom{\gammaID} \cap \dom{\gammaIDF} = \emptyset$ (since intuitively
		  one concretizes locations and the other concretizes object fields)
		  and for any $\iota_1, \iota_2 \in \dom{\gammaIDF}$,
		  $\gammaIDF(\iota_1) \cap \gammaIDF(\iota_2) = \emptyset$ (since
		  different identifiers must represent disjoint fields).
\end{enumerate}

For the value abstraction, we employ the usual requirements of Abstract Interpretation:
the $\assignfunAV$ must correctly over-approximate the effect of assignments, the $\assumefunAV$
must correctly refine the abstract state according to boolean guards, and
$\evalfun$ must over-approximate the possible values of string expressions.
The same requirements are also employed for $\assignfunAM$ and $\assumefunAM$ in the memory domain.

To state requirements on framework-specific operators uniformly, we define for any $\mId \in \dom{\gammaID} \cup \dom{\gammaIDF}$,
with $(\smem, \gammaID, \gammaIDF) \in \gammaAM(\amem)$:
\[
    \mathsf{locs}(\mId) \;\defn\;
    \begin{cases}
        \gammaID(\mId)                    & \text{if } \mId \in \dom{\gammaID},  \\
        \{a : (a,f) \in \gammaIDF(\mId)\} & \text{if } \mId \in \dom{\gammaIDF}.
    \end{cases}
\]

The framework-specific operators must satisfy the following:
\begin{enumerate}[label=M\arabic*., ref=M\arabic*]
	\item\label{cond:M3} $\pointstofun$ must correctly include all possible
	      points-to targets of the given set of identifiers:
	      $\forall I \subseteq \cid \cup \MemID, \amem \in \Amem : \bigcup_{i'\in\pointstofun(I, \amem)}\mathsf{locs}(i') \supseteq \{ a : \exists i\in I : a\in\mathsf{locs}(i) \}$;
	\item\label{cond:M4} $\pointedbyfun$ must correctly produce the memory
	      identifier that abstracts the address of the given program variable:
	      $\forall \texttt{x} \in \cid, \amem \in \Amem : \pointedbyfun(\texttt{x}, \amem) = i$
	      where $i$ is the unique memory identifier such that
	      $\mathsf{locs}(i) = \{a_x\}$;
	\item\label{cond:M5} both overloads of $\accessfun$ must correctly resolve field accesses starting from a set of base memory identifiers and a statically known field name:
	      $\forall I \subseteq \MemID, f \in \Sigma^*, \amem \in \Amem : \accessfun(I, f, \amem) = (I', \sub, {\amem}')$
	      where $I'$ is a set of memory identifiers such that
	      $\bigcup_{i\in I'}\mathsf{locs}(i) \supseteq \{ a' : (\smem, \gammaID, \gammaIDF) \in \gammaAM({\amem}') \land \exists i\in I, a\in\mathsf{locs}(i) : (a,f,a')\in\smem \}$;
	      a similar condition must hold for the second overload, where $f$ is taken from the concretization of $\tilde{S}$;
	      it is also required that both $\accessfun$ overloads do not change what was already tracked by the domain (i.e., $\amem \sqsubseteq_{\Amem} {\amem}'$).
	\item\label{cond:M6} both $\freshSsplit$ and $\freshHsplit$ must produce
	      fresh memory identifiers that do not overlap with any existing
	      identifier in the input memory state, and must return an updated memory
	      state that tracks the new identifier; formally,
	      $\forall \amem \in \Amem : \freshSsplit(\amem) = (\mId, {\amem}')$ and
	      $\freshHsplit(\amem) = (\mId, {\amem}'')$ where
	      $\mId \not\in \memidfunAM(\amem)$, $\memidfunAM({\amem}') = \memidfunAM({\amem}'') = \memidfunAM(\amem) \cup \{\mId\}$,
	      and ${\amem}'$ and ${\amem}''$ are
	      consistent with $\amem$ on all identifiers in $\memidfunAM(\amem)$ (i.e., $\amem \sqsubseteq_{\Amem} {\amem}'$ and $\amem \sqsubseteq_{\Amem} {\amem}''$).
	\item\label{cond:M7} predicate $\isSumfun$ must correctly identify summary nodes and singletons;
	      formally, $\forall \mId \in \dom{\gammaID}: \isSumfun(\mId, \amem) = \text{\ttfamily false} \implies \|\gammaID(\mId)\| = 1$,
	      i.e., if $\mId$ is not a summary node then it concretizes to exactly one heap address.
	\item\label{cond:M8} following~\cite{Ferrara16Framework}, we expect all substitutions
	      produced by the memory domain to be coherent with the modifications that happened
	      therein~\cite[Prop.~4.1]{Ferrara16Framework}.
\end{enumerate}

Our framework uses two auxiliary functions in its semantics: $\applysubfun$ and $\Rfun$. The
former is sound by~\cite[Lem.~C.1]{Ferrara16Framework}, since \ref{cond:M7} holds and $\assignfunAV$
is sound by assumption, while the latter is novel.
In Lemma~\ref{lemma:Rfun-sound} we show that $\Rfun$ is sound, that is, that
the values computed by the expression passed as its argument are
over-approximated by the set of values computed by the rewritten expressions.

In general, soundness can be established by showing that
$\forall \astate \in \Astate, \stmt \in \stmts : \ssem{\stmt}(\gammaA(\astate)) \subseteq \gammaA(\asem{\stmt}(\astate))$.
Intuitively, every concrete behavior admitted by a split state is also
admitted by its abstract counterpart after executing the same statement.

\begin{theorem}[Soundness of the Abstract Semantics]
	\label{thm:abstract-soundness}
	If conditions \ref{cond:C1}--\ref{cond:C2}
	and \ref{cond:M3}--\ref{cond:M8} hold, then the abstract semantics is sound with respect to the split semantics.
	Concretely, for any abstract state
	$\astate = (\aval, \amem) \in \Astate$, the following must hold:
	\begin{itemize}
		\item $\ssem{\texttt{lhs = rhs}}\bigl(\gammaA(\astate)\bigr) \subseteq \gammaA\bigl(\asem{\texttt{lhs = rhs}}(\astate)\bigr)$, and
		\item $\ssem{\texttt{ASM}(\bexp)}\bigl(\gammaA(\astate)\bigr) \subseteq \gammaA\bigl(\asem{\texttt{ASM}(\bexp)}(\astate)\bigr)$.
	\end{itemize}
	Proof in \Cref{sec:proof-abstract-soundness}.
\end{theorem}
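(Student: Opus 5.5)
We prove the two inclusions separately, the assume case first since it is the simpler one. Fix $\astate = (\aval,\amem)$ and a split state $\sstate = (\sval,\smem) \in \gammaA(\astate)$. By definition of $\gammaA$ there are witnesses $(\sval_{id},\sval_{mid}) \in \gammaAV(\aval)$ and $(\smem,\gammaID,\gammaIDF) \in \gammaAM(\amem)$ such that $\sval = \sval_{id} \cupdot \rhobase(\sval_{mid},\gammaID) \cupdot \rhofield(\sval_{mid},\gammaIDF)$.

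\textbf{Assume.} By \Cref{thm:split-concrete-semantics-equivalence}, $\ssem{\texttt{ASM}(\bexp)}(\sstate)$ is either $\sstate$ itself or is discarded. It is $\sstate$ exactly when $\csem{\bexp}(\gammasplitdot(\sstate)) = \true$. Since $\bexp$ contains no memory expressions, its truth depends only on the $\cid$-keyed values in $\sval_{id}$. Soundness of $\assumefunAV$ then gives $(\sval_{id},\sval_{mid}) \in \gammaAV(\assumefunAV(\bexp,\aval))$. Soundness of $\assumefunAM$ gives $(\smem,\gammaID,\gammaIDF) \in \gammaAM(\assumefunAM(\bexp,\amem))$. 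Reassembling the same witnesses yields $\sstate \in \gammaA(\aval_1,\amem_1)$.

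\textbf{Assignment.} We follow the premises of rule (assign) in order and maintain the invariant that the current concrete $\sstate$ (or its successor) lies in $\gammaA$ of the current intermediate abstract pair. The steps are as follows.
\begin{enumerate}
\item Soundness of $\assignfunAM$ together with \ref{cond:M8} and \cite[Lem.~C.1]{Ferrara16Framework} (via \ref{cond:M7}) show that $(\aval_m,\amem_1)$ over-approximates the post-state memory structure. The identifier renamings are absorbed by $\applysubfun$.
\item We use Lemma~\ref{lemma:Rfun-sound} for $\Rfun$ on $\rhs$: every concrete value $v = \csem{\rhs}(\gammasplitdot(\sstate))$ is represented by some $id_r \in ids_r$. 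Here ``represented'' means that either $v$ is the value of $id_r$ in a concretization of $\aval_r$, or, for addresses and allocated cells, $v \in \mathsf{locs}(id_r)$ under a compatible $\gammaAM(\amem_r)$. Fresh allocations are justified by \ref{cond:M6} and the monotonicity $\amem \sqsubseteq \amem'$ required in \ref{cond:M5}--\ref{cond:M6}, combined with \Cref{lemma:abs-gamma-monotone}, so that earlier witnesses remain valid.
\item We apply the same lemma to $\lhs$. Using \ref{cond:M3}--\ref{cond:M5}, the concrete target location written by the concrete rules of \Cref{fig:concrete-assign-stack,fig:concrete-assign-heap} (either $x$, $a_y$, $a$, or the field pair $(l,f)$) belongs to $\mathsf{locs}(id_l)$ for some $id_l \in ids_l$. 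By \ref{cond:C2} this $id_l$ is unique.
\item For that pair $(id_l,id_r)$, soundness of $\assignfunAV$ applies. If $\isSumfun(id_l,\amem_l)$ is false, \ref{cond:M7} gives $|\gammaID(id_l)|=1$, so the strong update matches the concrete overwrite. Otherwise the weak update keeps the old bindings of the other locations in $\gammaID(id_l)$, which the concrete step leaves unchanged.
\item The $\bigsqcup$ over all pairs then over-approximates this particular pair, since $\gammaAV$ is monotone. Finally, \ref{cond:C1} ensures that $\rhobase$ and $\rhofield$ resolve exactly the tracked identifiers, so the reassembled $\sval'$ equals the concrete post-state's value part.
\end{enumerate}
A case split over the ten concrete assignment rules is handled uniformly by this scheme.

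The main obstacle is step 4: relating the value-domain update on an identifier to the concrete update of a single split location, in the presence of $\rhobase$/$\rhofield$ fan-out. A weak update must preserve the untouched addresses sharing the same identifier, and field updates on stack objects must go through $\caddrr$ addresses produced by $\explodefun$. We would handle it by proving a small auxiliary claim: for a location $a \in \mathsf{locs}(\mId)$ with $\mId$ non-summary, updating $\sval$ at $a$ (or at $(a,f)$) equals $\rho$ applied to $\sval_{mid}[\mId\mapsto v]$. This claim relies on the disjointness condition \ref{cond:C2}.
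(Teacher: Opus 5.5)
Your proposal is correct at the level of detail of the paper's own proof, and it follows the same route. The assume case is discharged by soundness of $\assumefunAV$ and $\assumefunAM$. The assignment case chains four sound steps: $\assignfunAM$ together with $\applysubfun$ (via \cite[Lem.~C.1]{Ferrara16Framework} and \ref{cond:M7}), then \Cref{lemma:Rfun-sound} on $\rhs$, then the same lemma on $\lhs$, and finally each $\assignfunAV$ call followed by the join. The paper takes each step as given and defers details to \cite[Thm.~C.1]{Ferrara16Framework}, so your explicit witness reassembly and your strong/weak split via \ref{cond:M7} and \ref{cond:C2} spell out what it leaves implicit without changing the argument.
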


\subsection{Practical Benefits of Modularity}
\label{sec:practical-benefits}

The modular design of the framework delivers three concrete practical benefits.
\begin{itemize}
	\item \emph{Reusability}: an existing value domain or memory domain can be
	      plugged into the framework at the cost of adapting it to the common
	      interface. As the instantiation of \Cref{sec:instantiations} shows,
	      this corresponds to implementing a modest set of operators
	      ($\assignfunAV$, $\assumefunAV$, and $\evalfun$ for the value side;
	      $\assignfunAM$, $\assumefunAM$, $\memidfunAM$, $\pointstofun$,
	      $\pointedbyfun$, $\accessfun$, $\freshSsplit$, and $\freshHsplit$ for
	      the memory side), each of which usually maps naturally to operations
	      the domain already performs.
	\item \emph{Tunable precision and efficiency}: both the value and memory
	      domains can be swapped for a cheaper or a more precise abstraction
	      independently of the other component; the example at the end of
	      this Section
	      demonstrates this for constants vs.\ intervals vs.\ constant sets on
	      the same program.
	\item \emph{Compositional soundness}: each domain only needs to satisfy a
	      set of local conditions (\Cref{sec:soundness},
	      \Cref{thm:abstract-soundness}); if both domains satisfy them, the
	      combined analysis is automatically sound, i.e., there is no need to
	      prove the correctness of the combination from scratch since the
	      framework guarantees it by construction.
\end{itemize}
Implementation-wise, a modular implementation following our framework
encourages reuse and testing, and removes the need for defining custom
combinations for each pair of domains.

Compared to running the memory and value analyses
in a Cartesian product (or in isolation), the framework presents a
clear tradeoff. On the one hand, the value domain tracks information
not only on the variables but also on the memory locations. Since
the complexity of value domains usually depends on the number of
variables, our approach increases this value by adding all possible
memory identifiers, thus producing a less efficient analysis.
However, any analysis that combines memory and value information
will incur in this additional cost.
On the
other hand, it allows tracking of value information on memory
locations, something that would not be possible in the Cartesian
product (or running the analyses in isolation), thus sensibly
improving the precision of the analysis.

\begin{example}[Modularity in Action]
The following example showcases the modularity of our approach
by plugging a different value domain into the
framework, while keeping the memory domain fixed, directly affects the
precision of the result, without any other change to the analysis.

\paragraph{Program.}
Consider the $\mull$ program below.
It allocates a single heap cell initialised to $0$, non-deterministically
overwrites it with either $1$ or $3$, and then loads the result into~\texttt{q}.

\begin{lstlisting}[language=c, numbers=left, morekeywords={new},
    keywordstyle=\color{blue}\bfseries,
    caption={Program for the three-domain modularity example.},
    label={lst:two-domains}]
p = new 0;
if nd {
    *p = 1;
} else {
    *p = 3;
}
q = *p;
\end{lstlisting}

\paragraph{Memory Domain.}
To analyze this program, we use the Andersen-style points-to domain of
\Cref{sec:inst-andersen}.
The single \new expression at line~1 gives rise to one abstract memory
identifier~$\mId_\ell \in \MemID$.
After line~1, the memory state is
$\amem = [\texttt{p} \mapsto \{\mId_\ell\}]$
and this state is preserved unchanged throughout the if-else block and
the final load: neither branch adds or removes pointer targets.
The value stored at $\mId_\ell$ is tracked entirely by the value domain.

\paragraph{Value Domain 1 -- Constant Propagation ($\mathsf{CP}$).}
The $\mathsf{CP}$ domain represents each identifier by either a known
integer constant $c$ or the top element $\top$ (value unknown).
The join of two distinct constants is $\top$.
\begin{itemize}
  \item \textbf{After line~1}: $\aval(\mId_\ell)=0$.
  \item \textbf{After line~3} (then-branch): $\aval(\mId_\ell)=1$.
  \item \textbf{After line~5} (else-branch): $\aval(\mId_\ell)=3$.
  \item \textbf{After the join}: $\aval(\mId_\ell)=1\sqcup_{\mathsf{CP}}3=\top$.
  \item \textbf{After line~7}: $\aval(\texttt{q})=\top$.
\end{itemize}
Constant propagation cannot bound the value of~\texttt{q} in any way.

\paragraph{Value Domain 2 -- Intervals ($\mathsf{Int}$).}
The interval domain represents each identifier by a range $[l,u]$.
The join of two singletons $[1,1]$ and $[3,3]$ is $[1,3]$.
\begin{itemize}
  \item \textbf{After line~1}: $\aval(\mId_\ell)=[0,0]$.
  \item \textbf{After line~3} (then-branch): $\aval(\mId_\ell)=[1,1]$.
  \item \textbf{After line~5} (else-branch): $\aval(\mId_\ell)=[3,3]$.
  \item \textbf{After the join}: $\aval(\mId_\ell)=[1,1]\sqcup_{\mathsf{Int}}[3,3]=[1,3]$.
  \item \textbf{After line~7}: $\aval(\texttt{q})=[1,3]$.
\end{itemize}
The interval analysis establishes $\texttt{q}\in[1,3]$, in particular
$\texttt{q}\geq 1$ and $\texttt{q}\leq 3$.

\paragraph{Value Domain 3 -- Constant Sets ($\mathsf{CSet}_k$).}
We use the bounded variant of the constant-set domain~\cite{MineLN},
which represents each identifier by a finite set of integer constants
of cardinality at most~$k$, or $\top$ otherwise.
Join and widening both collapse to $\top$ when $|A \cup B| > k$;
within the bound, join is set union.
In this example, having $k\geq 2$ is sufficient to capture the exact set
of possible values.
\begin{itemize}
  \item \textbf{After line~1}: $\aval(\mId_\ell)=\{0\}$.
  \item \textbf{After line~3} (then-branch): $\aval(\mId_\ell)=\{1\}$.
  \item \textbf{After line~5} (else-branch): $\aval(\mId_\ell)=\{3\}$.
  \item \textbf{After the join}: $\aval(\mId_\ell)=\{1\}\sqcup_{\mathsf{CSet}}\{3\}=\{1,3\}$.
  \item \textbf{After line~7}: $\aval(\texttt{q})=\{1,3\}$.
\end{itemize}
The constant-set analysis yields the exact set of possible values,
establishing $\texttt{q}\in\{1,3\}$ precisely.

\begin{table}[h]
  \centering
  \caption{Abstract states at key points for the three instantiations.
           The memory state $\amem$ is identical throughout;
           only the value domain changes the precision of the result.}
  \label{tab:two-domains}
  \renewcommand{\arraystretch}{1.3}
  \begin{tabular}{|l|c|c|c|c|}
    \hline
    \textbf{Program point} &
    $\amem$ &
    $\aval_{\mathsf{CP}}(\mId_\ell)$ &
    $\aval_{\mathsf{Int}}(\mId_\ell)$ &
    $\aval_{\mathsf{CSet}}(\mId_\ell)$ \\
    \hline
    After line~1 & $[\texttt{p}\mapsto\{\mId_\ell\}]$ & $0$ & $[0,0]$ & $\{0\}$ \\
    \hline
    After line~3 (then) & $[\texttt{p}\mapsto\{\mId_\ell\}]$ & $1$ & $[1,1]$ & $\{1\}$ \\
    After line~5 (else) & $[\texttt{p}\mapsto\{\mId_\ell\}]$ & $3$ & $[3,3]$ & $\{3\}$ \\
    \hline
    After join          & $[\texttt{p}\mapsto\{\mId_\ell\}]$ & $\top$ & $[1,3]$ & $\{1,3\}$ \\
    \hline
    After line~7 ($\aval(\texttt{q})$) &
      $[\texttt{p}\mapsto\{\mId_\ell\}]$ & $\top$ & $[1,3]$ & $\{1,3\}$ \\
    \hline
  \end{tabular}
\end{table}

The memory state $\amem$ is identical in all three analyses.
Swapping the value domain requires no change to the memory domain or to the
abstract semantics of the framework; it affects only the precision of the
inferred value information, as summarised in \Cref{tab:two-domains}.
This demonstrates concretely that the modular design of the framework allows
independent selection of value and memory abstractions.

\end{example}

\section{Instantiations}
\label{sec:instantiations}

This section instantiates the framework of \Cref{sec:abstract-state}
with concrete abstract domains. Note that, as this framework is an extension
of~\cite{Ferrara16Framework}, all instantiations supported by the latter
can also be employed in our framework. To showcase the differences in
the integration of abstract domains w.r.t.~\cite{Ferrara16Framework},
we present two instantiations:
classical \emph{non-relational numerical domains}~\cite{MineLN} as value
domains $\Aval$, and Andersen-style points-to
analysis~\cite{Andersen94} as the memory domain $\Amem$.
Together they yield a concrete, sound analysis for $\mull$.

\subsection{Non-relational Numerical Domains}
\label{sec:non-relational}

For each $x \in \cid$ and $\mId \in \MemID$, the domain maintains a non-relational abstract value $\absdom{v}$ over-approximating the set of numerical values that an identifier may hold at run-time.
Non-relational means that the domain tracks information on each identifier independently, that is, without tracking (e.g., symbolic) relations between different identifiers.

\paragraph{Abstract Values.}
\label{sec:abstractvalues}

We assume \cite[Sect. 4.1.1]{MineLN} that the domain defines a set of abstract values $\absdom{V}$ together with a partial order $\sqsubseteq_V$, least upper bound $\sqcup_V$, greatest lower bound $\sqcap_V$, and widening $\nabla_V$ operators, and concretization $\gamma_V$ and abstraction $\alpha_V$ functions. These operators form a complete lattice, and a Galois connection with a concrete domain representing the set of numerical values 
whose lattice structure is defined with set operators. In addition, we assume that sound abstract operators for standard arithmetic operations (such as evaluating constant values, addition, multiplication, etc.) are provided.

\paragraph{Abstract States.}
\label{sec:abstractstates}
The abstract state maps each identifier to an abstract value. The various operators are defined as the point-wise lifting of the operators over abstract values \cite[Sect. 4.1.2]{MineLN}. 

\paragraph{Abstract Semantics.}
First of all, the non-relational abstract domain defines an $\widehat{\mathsf{eval}}(expr,\aval)$ function that evaluates expressions. This relies on the sound abstract operators defined by abstract values. Assignments then link the value resulting from the evaluation of the expression to the assigned identifier; when the boolean parameter of the $\assignfunAV$ is \texttt{true}, the assignment is a strong update (overwrite), and when \texttt{false}, a weak update (join, $\sqcup$, with the previous abstract value) \cite[Sect. 4.1.3]{MineLN}.

\paragraph{Examples.}
In the literature, several non-relational numerical abstract domains have been formalized. The most popular ones are:
\begin{itemize}
	\item the Sign domain \cite[Sect. 4.2]{MineLN}, tracking the sign of each identifier (e.g., if it is positive, negative, or zero);
	\item the constant propagation domain \cite[Sect. 4.3]{MineLN}, tracking if an identifier always has a constant numerical value;
	\item the interval domain \cite[Sect. 4.5]{MineLN} tracking the minimum and maximum value of each identifier; and
	\item the congruence domain \cite{congruences} tracking if the value assigned to an identifier is always congruent to $n$ modulo $m$.
\end{itemize}
All these domains define the abstract values described in this Section and can therefore be used in our framework.

\subsection{Andersen-Style Points-To Analysis}
\label{sec:inst-andersen}

We instantiate $\Amem$ with an Andersen-style points-to
analysis~\cite{Andersen94}.
For each abstract memory identifier in $\MemID$, the domain tracks the set of
identifiers it may point to.
Since the framework is flow-sensitive, the points-to sets are threaded through
the abstract semantics, growing monotonically at each statement and being joined
at merge points; this subsumes the precision of the original flow-insensitive
Andersen analysis.

\paragraph{Identifier Naming Convention.}
\label{sec:inst-andersen-naming}

We fix a static naming scheme for $\MemID$ that makes the analysis
deterministic and finitely bounded:
\begin{itemize}
	\item For each program variable $x \in \cid$, a distinguished \emph{base
		      identifier} $\mId_x \in \MemID$ represents the abstract memory location of $x$.
	\item For each syntactic allocation site $\ell$ (a \new-expression or object
	      literal occurring in the program text), a distinguished base identifier
	      $\mId_\ell \in \MemID$ represents all objects born at that site
	      (\emph{allocation-site abstraction}): every execution of the same \new
	      collapses to the same abstract identifier.
	\item For each base identifier $\mId$ and field name $f \in \cid$, a
	      distinguished \emph{field identifier} $\mId_{\mId,f} \in \MemID$ represents
	      the abstract memory cell for field $f$ of the object $\mId$.
\end{itemize}
All such identifiers are pairwise distinct.  We write
$\MemID^{\mathsf{b}} \subset \MemID$ for the set of base identifiers and
$\MemID^{\mathsf{f}} \subset \MemID$ for the set of field identifiers;
these two sets are disjoint.  The total set of identifiers is finite
(bounded by $|\cid|$ plus the number of allocation sites times the number
of field names), so no widening is needed for the memory domain.

\paragraph{Domain Definition.}
\label{sec:inst-andersen-domain}

The abstract memory domain is the set of finite partial functions
\[
	{\Amem}^{\mathsf{PT}}
	\;=\;
	(\MemID \cup \cid) \;\rightharpoonup\; \powerset{\MemID}.
\]
An element $\amem \in {\Amem}^{\mathsf{PT}}$ maps each currently
\emph{tracked} identifier (either a program variable or a memory identifier)
to its \emph{points-to set}: the
over-approximation of the concrete addresses that the corresponding
identifier may refer to.
A non-tracked identifier $\mId \notin \dom{\amem}$ is treated as having
an empty points-to set (it does not hold a pointer, or it has not yet been
allocated).

The partial order and lattice operators are pointwise:
\begin{align*}
	\amem_1 \sqsubseteq \amem_2
	 & \iff
	\dom{\amem_1} \subseteq \dom{\amem_2}
	\;\wedge\;
	\forall \mId \in \dom{\amem_1}:\; \amem_1(\mId) \subseteq \amem_2(\mId), \\
	(\amem_1 \sqcup \amem_2)(\mId)
	 & =
	\amem_1(\mId) \cup \amem_2(\mId)
	\quad\text{(treating missing entries as } \emptyset\text{)},             \\
	\bot_{{\Amem}^{\mathsf{PT}}}
	 & = \emptyset \quad (\text{empty domain}),                              \\
	\top_{{\Amem}^{\mathsf{PT}}}(\mId)
	 & = \MemID \quad \text{for all } \mId \in \MemID.
\end{align*}
Since $\MemID$ is finite, every ascending chain stabilises; widening
reduces to join: $\amem_1 \triangledown \amem_2 = \amem_1 \sqcup \amem_2$.

\paragraph{Abstract Memory State and Concretization.}
\label{sec:inst-andersen-concretization}

The concretization
$\gammaAM : {\Amem}^{\mathsf{PT}} \to
	\powerset{\Smem \times \gammaID \times \gammaIDF}$
maps each abstract state to the set of concrete memory structures
consistent with it.

For a given $\amem \in {\Amem}^{\mathsf{PT}}$, a triple
$(\smem, \gammaID, \gammaIDF) \in \gammaAM(\amem)$ must satisfy:
\begin{enumerate}
	\item \emph{Separation}: $\gammaID$ is defined on
	      $\mathsf{base}(\amem) \triangleq \dom{\amem} \cap \MemID^{\mathsf{b}}$
	      and maps each base identifier to a non-empty set of concrete addresses:
	      $\gammaID(\mId) \subseteq \caddrp$.
	      $\gammaIDF$ is defined on
	      $\mathsf{field}(\amem) \triangleq \dom{\amem} \cap \MemID^{\mathsf{f}}$
	      and maps each field identifier to a set of concrete field locations:
	      $\gammaIDF(\mId) \subseteq (\caddrh \cup \caddrr) \times \cid$.
	\item \emph{Soundness}: for every tracked identifier $\mId \in \dom{\amem}$
	      and every concrete location $\ell \in \mathsf{locs}(\mId)$
	      (where $\mathsf{locs}$ is as in condition C2 of \Cref{sec:abstract-state}),
	      if the concrete memory holds a pointer at $\ell$ (i.e., $\ell$ appears in
	      $\dom{\smem_{ih}}$ or $\dom{\smem_f}$), then the pointed-to concrete address
	      belongs to $\bigcup_{\mId' \in \amem(\mId)} \mathsf{locs}(\mId')$:
	      \[
		      \smem_{ih}(\ell) \;\in\; \bigcup_{\mId' \in \amem(\mId)} \mathsf{locs}(\mId')
		      \quad\text{or}\quad
		      \smem_f(\ell, f) \;\in\; \bigcup_{\mId' \in \amem(\mId)} \mathsf{locs}(\mId')
	      \]
	      (whichever applies to $\ell$).
\end{enumerate}
We now verify the two structural conditions required by the framework.

\begin{lemma}[C1 for ${\Amem}^{\mathsf{PT}}$]
	\label{lem:pt-C1}
	For every $\amem \in {\Amem}^{\mathsf{PT}}$ and $(\smem, \gammaID, \gammaIDF) \in \gammaAM(\amem)$:
	$\dom{\gammaID} \cup \dom{\gammaIDF} = \memidfunAM(\amem)$.
	Proof in \Cref{sec:proof-pt-C1}.
\end{lemma}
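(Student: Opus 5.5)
The plan is to unfold the \emph{Separation} clause in the definition of $\gammaAM$ for ${\Amem}^{\mathsf{PT}}$, fix a concrete definition of $\memidfunAM$ for this domain, and then compare the two sides as sets of identifiers.

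\textbf{Step 1: define $\memidfunAM$.} The instantiation has not yet given $\memidfunAM$ explicitly, so I fix the natural choice: the tracked memory identifiers,
\[
\memidfunAM(\amem) \defn \dom{\amem} \cap \MemID ,
\]
excluding program variables in $\cid$, which are also keys of $\amem$.

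\textbf{Step 2: read off the domains.} Fix $\amem$ and any $(\smem,\gammaID,\gammaIDF) \in \gammaAM(\amem)$. By Separation, $\gammaID$ is defined exactly on $\mathsf{base}(\amem) = \dom{\amem}\cap\MemID^{\mathsf{b}}$ and $\gammaIDF$ exactly on $\mathsf{field}(\amem) = \dom{\amem}\cap\MemID^{\mathsf{f}}$. Therefore
\[
\dom{\gammaID}\cup\dom{\gammaIDF} = \dom{\amem}\cap(\MemID^{\mathsf{b}}\cup\MemID^{\mathsf{f}}).
\]

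\textbf{Step 3: conclude.} It remains to show $\MemID^{\mathsf{b}}\cup\MemID^{\mathsf{f}} = \MemID$. This holds because, under the naming convention, every identifier is either a base identifier ($\mId_x$ or $\mId_\ell$) or a field identifier $\mId_{\mId,f}$, and $\MemID$ is defined as the total set of such names. Intersecting with $\dom{\amem}$ then gives exactly $\memidfunAM(\amem)$.

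The disjointness of $\MemID^{\mathsf{b}}$ and $\MemID^{\mathsf{f}}$ is not needed here; it belongs to C2.

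\textbf{Main obstacle.} The difficulty is definitional rather than technical. One must state $\memidfunAM$ consistently with the domain and justify that $\MemID$ is exhausted by base and field identifiers. I would make both points explicit as consequences of the naming convention. I would also note that ``$\gammaID$ is defined on'' is read as the domain being exactly that set, not merely containing it.
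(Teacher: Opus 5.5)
Your argument is the paper's argument: Separation gives $\dom{\gammaID}=\mathsf{base}(\amem)$ and $\dom{\gammaIDF}=\mathsf{field}(\amem)$, and their union is identified with $\memidfunAM(\amem)$ via the base/field split of $\MemID$. Your intersection with $\MemID$ is actually more careful than the paper's claim that $\MemID^{\mathsf{b}}$ and $\MemID^{\mathsf{f}}$ partition $\dom{\amem}$, since that claim overlooks the program-variable keys in $\cid$.

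One caveat: in the Required Operations paragraph following the lemma, the paper defines $\memidfunAM(\amem)=\bigl(\dom{\amem}\cup\bigcup\codom{\amem}\bigr)\cap\MemID$. Under that definition, identifiers that occur only inside points-to sets make the two sides differ; an example is $\mId_{\texttt{list}}$ in $\amem_3$ of the running example, which appears in the points-to sets of \texttt{x} and \texttt{tail} but is not a key. So both your proof and the paper's proof, which asserts $\memidfunAM(\amem)=\dom{\amem}$ ``by definition'', actually depend on the key-based reading of $\memidfunAM$ that you made explicit.
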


\begin{lemma}[C2 for ${\Amem}^{\mathsf{PT}}$]
	\label{lem:pt-C2}
	For every $\amem \in {\Amem}^{\mathsf{PT}}$ and
	$(\smem, \gammaID, \gammaIDF) \in \gammaAM(\amem)$, and distinct
	$\mId_1, \mId_2 \in \dom{\gammaID} \cup \dom{\gammaIDF}$:
	$\mathsf{locs}(\mId_1) \cap \mathsf{locs}(\mId_2) = \emptyset$; moreover,
	for field identifiers with the same parent base,
	$\gammaIDF(\mId_1) \cap \gammaIDF(\mId_2) = \emptyset$ as sets of $(a,f)$ pairs.
	Proof in \Cref{sec:proof-pt-C2}.
\end{lemma}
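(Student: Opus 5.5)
The statement to prove is Lemma~\ref{lem:pt-C2}: in ${\Amem}^{\mathsf{PT}}$, distinct tracked identifiers have disjoint $\mathsf{locs}$, and field identifiers sharing a parent base have disjoint $\gammaIDF$ images. The plan is to derive disjointness from the static naming scheme of \Cref{sec:inst-andersen-naming} together with the separation clause of $\gammaAM$. Fix $\amem$ and $(\smem,\gammaID,\gammaIDF)\in\gammaAM(\amem)$, and take distinct $\mId_1,\mId_2\in\dom{\gammaID}\cup\dom{\gammaIDF}$. By the separation clause, $\dom{\gammaID}=\mathsf{base}(\amem)\subseteq\MemID^{\mathsf{b}}$ and $\dom{\gammaIDF}=\mathsf{field}(\amem)\subseteq\MemID^{\mathsf{f}}$. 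These two sets are disjoint, which already gives the first half of \ref{cond:C2}. I would then split into three cases: both base, both field, and mixed.

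\emph{Both base.} Each base identifier is either $\mId_x$ for a variable $x$ or $\mId_\ell$ for an allocation site $\ell$. A variable identifier concretizes to the stack address $\{a_x\}$, by the same reading \ref{cond:M4} imposes. An allocation-site identifier concretizes to the heap and region addresses created by executions of that site, and each such address is born at exactly one site. Two distinct variables have distinct $a_x$. Stack addresses never coincide with heap or region addresses, since $\caddrs$, $\caddrh$ and $\caddrr$ are pairwise disjoint. Distinct sites are separated by allocation-site uniqueness.

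\emph{Both field.} Write $\mId_i=\mId_{\beta_i,f_i}$, so that $\gammaIDF(\mId_i)\subseteq\{(a,f_i):a\in\mathsf{locs}(\beta_i)\}$. If $f_1\neq f_2$, the pair sets are disjoint because their second components differ; this covers the "same parent base" clause. If $\beta_1\neq\beta_2$, the base case gives disjoint $\mathsf{locs}(\beta_i)$, so the pair sets are again disjoint.

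\emph{Mixed.} A field location $a$ lies in $\caddrh\cup\caddrr$ and is the location of an object, whereas the $\mathsf{locs}$ of a base identifier holding a scalar or pointer is a different kind of cell. Here I would invoke the static-typing argument the paper uses to justify $\dom{\gammaID}\cap\dom{\gammaIDF}=\emptyset$.

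The main obstacle is that the concretization of ${\Amem}^{\mathsf{PT}}$ as stated does not explicitly require $\gammaID$ and $\gammaIDF$ to respect the naming scheme. The separation clause constrains only domains and codomains, not which addresses each identifier denotes. I would therefore first make explicit the intended invariant implicit in the naming convention: $\gammaID(\mId_x)=\{a_x\}$, $\gammaID(\mId_\ell)$ is the set of addresses allocated at $\ell$, and $\gammaIDF(\mId_{\mId,f})\subseteq\mathsf{locs}(\mId)\times\{f\}$. I would read these as part of the separation requirement. All three cases then reduce to elementary set disjointness.

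The literal $\mathsf{locs}$ disjointness is also delicate for two field identifiers with the same base and different fields, since both project to the same object addresses. For that situation I would argue the intended statement through the pair-level disjointness, which is the second clause of the lemma and exactly what \ref{cond:C2} requires.
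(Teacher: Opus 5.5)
Your three-way case split (both base, both field, mixed) matches the paper's proof in the appendix, and so do your two key moves: distinct base identifiers are separated by the naming scheme, and two fields of the same base are separated only at the level of $(a,f)$ pairs, via the field component. Making the invariants explicit is an improvement over the paper. These are $\gammaID(\mId_x)=\{a_x\}$, allocation-site uniqueness, and $\gammaIDF(\mId_{\mId,f})\subseteq\mathsf{locs}(\mId)\times\{f\}$. The paper just asserts that ``the concretization requires'' them, although the stated definition of $\gammaAM$ does not.

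The gap is in your mixed case.
\begin{itemize}
\item The static-typing argument you invoke only justifies $\dom{\gammaID}\cap\dom{\gammaIDF}=\emptyset$, i.e.\ that the identifiers are distinct. It says nothing about the address sets $\mathsf{locs}(\cdot)$, and ``a different kind of cell'' is not disjointness of address sets.
\item You need to split on the parent $b$ of $\mId_2=\mId_{b,f}$, as the paper does. If $b\neq\mId_1$, your own invariant gives $\mathsf{locs}(\mId_2)\subseteq\mathsf{locs}(b)$, and this is disjoint from $\mathsf{locs}(\mId_1)$ by the base case.
\item If $b=\mId_1$, your premise that the base identifier ``holds a scalar or pointer'' is false: a base with a field child denotes an object. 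Your invariant then yields $\mathsf{locs}(\mId_2)\subseteq\mathsf{locs}(\mId_1)$, so literal $\mathsf{locs}$-disjointness fails whenever $\gammaIDF(\mId_2)\neq\emptyset$.
\end{itemize}
This last subcase is the same phenomenon you flagged for two fields of one base, and it must be conceded and handled the same way. The paper admits $\mathsf{locs}(\mId_1)=\mathsf{locs}(\mId_2)$ there. It argues only that the two identifiers denote distinct cells: a base cell indexed by an address versus field slots indexed by $(a,f)$ pairs. It then defers to the pair-level disjointness. Your closing remark should therefore list both situations in which the first clause of the lemma cannot hold literally, not only the same-base, different-field one.

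A minor point: in the both-field case with $\beta_1\neq\beta_2$, state that the same inclusion also gives $\mathsf{locs}(\mId_1)\cap\mathsf{locs}(\mId_2)=\emptyset$. That is what the first clause asks for, not only disjointness of the pair sets.
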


\begin{lemma}[Monotonicity of $\gammaAM$]
	\label{lem:gamma-am-pt-monotone}
	If $\amem_1 \sqsubseteq \amem_2$ then
	$\gammaAM(\amem_1) \subseteq \gammaAM(\amem_2)$.
	Proof in \Cref{sec:proof-pt-gamma-am-monotone}.
\end{lemma}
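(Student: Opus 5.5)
The plan is to take an arbitrary triple $(\smem, \gammaID, \gammaIDF) \in \gammaAM(\amem_1)$ and check that it satisfies the two defining conditions of membership in $\gammaAM(\amem_2)$. By the definition of the concretization, membership is decided by two conditions, Separation and Soundness, both stated relative to $\amem$. So it suffices to show that each condition, once it holds for $\amem_1$, transfers to $\amem_2$. Throughout, I use the pointwise order: $\dom{\amem_1} \subseteq \dom{\amem_2}$, and $\amem_1(\mId) \subseteq \amem_2(\mId)$ for every $\mId \in \dom{\amem_1}$.

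\textbf{Separation.} This is the delicate step. The domains of $\gammaID$ and $\gammaIDF$ are pinned exactly to $\mathsf{base}(\amem) = \dom{\amem} \cap \MemID^{\mathsf{b}}$ and $\mathsf{field}(\amem) = \dom{\amem} \cap \MemID^{\mathsf{f}}$. When $\dom{\amem_1} \subsetneq \dom{\amem_2}$, the given $\gammaID$ is therefore defined on too small a set, and literal set inclusion fails.

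I would first try to read the definition so that the natural extension of $\gammaID$ and $\gammaIDF$ to the identifiers in $\dom{\amem_2} \setminus \dom{\amem_1}$ is allowed. That is, I would treat concretizations as equivalent up to identifiers that denote no location relevant to $\smem$. A cleaner fallback is to prove the statement in the form $\gammaAM(\amem_1) \subseteq \gammaAM(\amem_2)$ modulo this extension. Concretely, I would extend $\gammaID$ on new base identifiers by fresh addresses not occurring in $\smem$; this respects non-emptiness and the disjointness of Lemma~\ref{lem:pt-C2}. I would extend $\gammaIDF$ on new field identifiers by the empty set.

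Within $\dom{\amem_1}$, the codomain constraints ($\gammaID(\mId) \subseteq \caddrp$, non-empty, and $\gammaIDF(\mId) \subseteq (\caddrh \cup \caddrr) \times \cid$) do not mention $\amem$ at all, so they carry over verbatim.

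\textbf{Soundness.} This part is routine. Fix $\mId \in \dom{\amem_2}$ and $\ell \in \mathsf{locs}(\mId)$ holding a pointer. There are two cases.

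\begin{itemize}
\item If $\mId \in \dom{\amem_1}$, the Soundness condition for $\amem_1$ gives that the target lies in $\bigcup_{\mId' \in \amem_1(\mId)} \mathsf{locs}(\mId')$. Since $\amem_1(\mId) \subseteq \amem_2(\mId)$ and union over a larger index set is larger, the target also lies in the corresponding union for $\amem_2$.
\item If $\mId$ is new, the extension chosen above makes $\mathsf{locs}(\mId)$ consist only of fresh addresses, or be empty. Either way, no such $\ell$ holds a pointer in $\smem$, so the condition holds vacuously.
\end{itemize}

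The same case split covers both the $\smem_{ih}$ and the $\smem_f$ alternatives.

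The main obstacle is the exact-domain requirement in Separation. The argument relies on interpreting $\gammaAM$ up to this harmless extension, or equivalently on $\gammaAM(\amem_1)$ being included in $\gammaAM(\amem_2)$ up to that extension. The Soundness part is just monotonicity of set union.
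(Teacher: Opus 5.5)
Your treatment of Soundness is the same as the paper's: for $\mId \in \dom{\amem_1}$, the union over $\amem_1(\mId) \subseteq \amem_2(\mId)$ can only grow. You part ways on Separation, and there your diagnosis is correct while the paper's proof is not. The paper keeps the very same triple and asserts that C1 holds for $\amem_2$ ``since $\dom{\amem_1} \subseteq \dom{\amem_2}$''. But C1 is an equality, and Separation pins $\dom{\gammaID} = \mathsf{base}(\amem)$ exactly; the paper's own proof of Lemma~\ref{lem:pt-C1} relies on this. The paper also never checks Soundness at keys in $\dom{\amem_2} \setminus \dom{\amem_1}$.

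The statement is in fact false as written. Take $\amem_1 = \emptyset \sqsubseteq \amem_2 = \{\mId_\ell \mapsto \emptyset\}$, which is precisely what $\freshHsplit$ yields from $\bot$. Then $(\smem, \emptyset, \emptyset) \in \gammaAM(\amem_1)$ for every $\smem$, whereas every triple in $\gammaAM(\amem_2)$ has $\dom{\gammaID} = \{\mId_\ell\}$. More generally, C1 together with literal inclusion forces $\memidfunAM(\amem_1) = \memidfunAM(\amem_2)$ whenever $\gammaAM(\amem_1) \neq \emptyset$, so the lemma clashes with Lemma~\ref{lem:pt-C1}. No argument can establish the verbatim inclusion, and retreating to an ``up to extension'' version is the right instinct.

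That version is, however, a different statement, and two of its steps need more care.

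\emph{It is not the monotonicity the framework consumes.} The proof of Lemma~\ref{lemma:abs-gamma-monotone} reuses the same triple. Your extension instead gives each new base identifier a non-empty set of fresh addresses (Separation forbids $\emptyset$). So $\rhobase$ produces extra bindings at those addresses whenever $\sval_{mid}$ binds the new identifier, and the resulting split state changes.

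\emph{Your case ``$\mId$ is new'' assumes every new key is a memory identifier whose locations you get to choose.} Keys of $\amem$ range over $\MemID \cup \cid$. A newly tracked program variable $x$ lies outside $\mathsf{base}(\amem_2)$ and $\mathsf{field}(\amem_2)$, so your extension assigns it nothing. Its Soundness clause is vacuous only because $\mathsf{locs}$ is undefined on $\cid$, i.e., only under a reading in which $\gammaAM$ never constrains what variables point to. Under the intended reading, where $\amem(x)$ over-approximates the targets of $x$, take $\amem_1 = \emptyset \sqsubseteq \amem_2 = \{x \mapsto \emptyset\}$ and $\smem_{ih}(x) = h_0$. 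Then $(\smem, \emptyset, \emptyset) \in \gammaAM(\amem_1)$, but no triple with this $\smem$ lies in $\gammaAM(\amem_2)$. So even the up-to-extension claim fails, because tracking a new key adds a constraint.

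The repair belongs in the definitions rather than in the proof. For instance:
\begin{itemize}
\item let Soundness range over all keys, reading untracked ones as having empty points-to sets (as the instantiation's prose already says);
\item define $\gammaID$ and $\gammaIDF$ on all of the finite sets $\MemID^{\mathsf{b}}$ and $\MemID^{\mathsf{f}}$;
\item weaken C1 accordingly.
\end{itemize}
Then the same triple works, and the union argument that you and the paper share is the entire proof.
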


\subsubsection{Required Operations.}
\label{sec:inst-andersen-ops}

\paragraph{$\memidfunAM$.} The set of memory identifiers known to the domain
can be formalized as
$\memidfunAM(\amem) = \bigl(\dom{\amem} \cup \bigcup\codom{\amem}\bigr) \cap \MemID$.

\paragraph{$\freshSsplit$ and $\freshHsplit$.}
For an allocation site $\ell$ (determined by the program point of the call),
both functions return the canonical identifier $\mId_\ell$:
\[
	\freshSsplit(\amem) = \freshHsplit(\amem) = \mId_\ell,
\]
extending the domain of $\amem$ with $\mId_\ell \mapsto \emptyset$ if not
already present.

\paragraph{$\isSumfun$.} Every node in the Andersen points-to graph is a
summary node, as the analysis does not distinguish between different concrete
objects allocated at the same site. Thus, $\isSumfun(\mId) = \texttt{true}$ for
all $\mId \in \MemID$. Note that the analysis could easily be improved by
\begin{inlinelist}
	\item generating strong identifiers when a location is first created (i.e., it is not already part of the points-to graph), and
	\item turning the strong identifier into a summary one (by updating the internal mapping and producing the appropriate substitution) when a second object is allocated at the same site.
\end{inlinelist}

\paragraph{$\pointedbyfun$.} Computing the set of identifiers that may point to
a given identifier $\mId$ is straightforward:
\[
	\pointedbyfun(x, \amem) = \begin{cases}
		\{\mId_x\} \cup \{\mId \in \dom{\amem} \mid x \in \amem(\mId)\} & \text{if } x \in \cid; \\
		\{\mId \in \dom{\amem} \mid x \in \amem(\mId)\}                 & \text{otherwise}.
	\end{cases}
\]
That is, if $x$ is a program variable, we include its base identifier $\mId_x$.
Then, we also include any identifier $\mId$ in the domain of $\amem$ whose
points-to set contains $x$.

\paragraph{$\pointstofun$.}
As the Andersen points-to graph is represented by $\amem$ itself, this function simply
returns the points-to set of the given identifier:
\[
	\pointstofun(\mId, \amem) = \amem(\mId).
\]

\paragraph{$\accessfun$.}
For a set of base identifiers $\MId$ and a statically known field name $f$:
\[
	\accessfun(\MId, f, \amem)
	= \Bigl(\bigcup_{\mId \in \MId} \{\mId_{\mId,f}\},\;
	[],\;
	\amem_{\mathsf{ext}}\Bigr),
\]
where $\amem_{\mathsf{ext}}$ extends $\amem$ with $\mId_{\mId,f} \mapsto \emptyset$ for
each $\mId$ whose field identifier is not yet tracked.
For a lattice element $S \in \tilde{S}$:
\begin{itemize}
	\item If $S$ concretizes to a finite set of field names, apply the static case to each;
	\item If $S = \top_{\tilde{S}}$ (unknown selector), return all field
	      identifiers of all objects in $\MId$ currently tracked in $\amem$.
\end{itemize}
No substitution is returned since field identifiers are pre-determined by the
naming convention; no memory reorganization occurs. Note that this case could
be further refined with techniques such as~\cite{vinceobjects} to operate
directly with the lattice element $S$ without concretizing it.

\paragraph{$\assumefunAM$.}
Boolean guards constrain values but do not affect the pointer structure:
\[
	\assumefunAM(\bexp, \amem) = \amem.
\]

\paragraph{$\assignfunAM$.}
This is the core operation.  It updates the points-to relation based on the
syntactic form of the assignment, implementing the Andersen inclusion
constraints in a flow-sensitive manner.  All updates are
\emph{accumulative} (we only add to points-to sets, never remove), ensuring
soundness at the cost of possible imprecision.
The substitution returned is always $[]$ (no identifier renaming is needed),
so the value domain is not disturbed.

Because the naming convention makes $\freshSsplit/\freshHsplit$ deterministic,
$\assignfunAM$ can pre-compute the identifier $\mId_\ell$ that will be
allocated for any allocation site $\ell$ in the $\rhs$, and update the
points-to set of the $\lhs$ accordingly before $\Rfun$ is invoked.

The main cases are:
\begin{itemize}
	\item \emph{Address-of} ($\lhs = x \in \cid$, $\rhs = \addrof y$ with $y \in \cid$)
	      records the Andersen address-of constraint, that is, that $x$ may point to $y$'s abstract
	      location $\mId_y$:
	      \[
		      \assignfunAM(x, \addrof y, \amem)
		      = \bigl([],\; \amem[x \mapsto \amem(x) \cup \{\mId_y\}]\bigr).
	      \]

	\item \emph{Pointer copy} ($\lhs = x \in \cid$, $\rhs = y \in \cid$)
	      propagates $y$'s points-to set into $x$'s; if $y$ is not a pointer,
	      $\amem(\mId_y) = \emptyset$ and the update is a no-op:
	      \[
		      \assignfunAM(x, y, \amem)
		      = \bigl([],\; \amem[x \mapsto \amem(x) \cup \amem(y)]\bigr).
	      \]

	\item \emph{Pointer load} ($\lhs = x \in \cid$, $\rhs = \deref p$,
	      with $p \in \cid$)
	      records that $x$ may point to anything that the targets of $p$ itself point to:
	      \[
		      \assignfunAM(x, \deref p, \amem)
		      = \Bigl([],\;
		      \amem\Bigl[x \mapsto \amem(x)
			      \cup \bigcup_{\mId \in \amem(p)} \amem(\mId)\Bigr]\Bigr).
	      \]

	\item \emph{Pointer store} ($\lhs = \deref p$, $\rhs = y$, with $p \in \cid$)
	      adds $y$'s points-to set to all possible targets, since
	      $p$ may alias multiple concrete locations:
	      \[
		      \assignfunAM(\deref p, y, \amem)
		      = \Bigl([],\; \bigsqcup_{\mId \in \amem(p)} \amem\Bigl[\mId \mapsto \amem(\mId) \cup \amem(y) \Bigr]\Bigr).
	      \]

	\item \emph{Object creation} ($\lhs = x \in \cid$,
	      $\rhs \in \{\texttt{\{f$_i$:v$_i$\}}, \new\texttt{\{f$_i$:v$_i$\}}\}$
	      at allocation site $\ell$)
	      records that $x$ may now point to the freshly allocated object $\mId_\ell$
	      and initialises its field identifiers:
	      \[
		      \assignfunAM(x, \rhs, \amem)
		      = \bigl([],\;
		      \amem[x \mapsto \amem(x) \cup \{\mId_\ell\},\;
			      \mId_\ell \mapsto \emptyset,\;
			      \mId_{\mId_\ell, f_i} \mapsto \emptyset \mid i]\bigr).
	      \]

	\item \emph{Field assignment} ($\lhs = x.f$ or $\lhs = x[\sexp]$,
	      $\rhs = y$) is analogous to the pointer-store case, propagating $y$'s
	      points-to set into the appropriate field identifiers of $x$'s targets.

	\item \emph{Non-pointer assignment} (rhs is a $\vexps$ or $\aexps$
	      expression that cannot yield an address) behaves like a no-op on the memory domain:
	      \[
		      \assignfunAM(\lhs, \rhs, \amem) = ([], \amem).
	      \]
\end{itemize}

\begin{example}[Application to the Running Example]
We trace the points-to analysis on the $\mull$ program of
\Cref{lst:dancing-mull}.

\paragraph{Identifiers.}
The naming convention introduces one base identifier per variable and
one per allocation site:
\begin{itemize}
	\item $\mId_{\texttt{list}}, \mId_{\texttt{x}},
		      \mId_{\texttt{tail}}, \mId_{\texttt{n}},
		      \mId_{\texttt{tmp}}$ — variable identifiers;
	\item $\mId_{\ell_1}$ — stack allocation site for
	      $\texttt{list} = \{\}$, an empty object: no field identifiers
	      exist until a field is first written;
	\item $\mId_{\ell_7}$ — heap allocation site for
	      $\texttt{new}\{\texttt{L}:\texttt{tail}\}$
	      (fixed across loop iterations by allocation-site abstraction).
\end{itemize}
Field identifiers are written $\mId_{\mId, f}$ for field $f$ of object $\mId$.

\paragraph{After Initialisation (Lines 1--3).}
Starting from $\amem_0 = \emptyset$:

\noindent\textbf{Line 1} ($\texttt{list} = \{\}$):
$\assignfunAM$ (object-creation case) extends the state with
${\texttt{list}} \mapsto \{\mId_{\ell_1}\}$ and $\mId_{\ell_1} \mapsto \emptyset$.
Since the object literal declares no fields, no field identifiers are
created yet.

\noindent\textbf{Line 2} ($\texttt{x} = \addrof\texttt{list}$):
$\assignfunAM$ (address-of case) sets
${\texttt{x}} \mapsto \{\mId_{\texttt{list}}\}$.
$\Rfun(\addrof\texttt{list})$ calls $\pointedbyfun(\texttt{list},\amem)$
and returns $\{\texttt{x}, \mId_{\texttt{list}}\}$ directly.

\noindent\textbf{Line 3} ($\texttt{tail} = \addrof\texttt{list}$): analogous,
giving ${\texttt{tail}} \mapsto \{\mId_{\texttt{list}}\}$.

The state after line 3 is:
\[
	\amem_3 = \left\{
		{\texttt{list}}  \mapsto \{\mId_{\ell_1}\},\;
		{\texttt{x}}     \mapsto \{\mId_{\texttt{list}}\},\;
		{\texttt{tail}}  \mapsto \{\mId_{\texttt{list}}\},\;
		\mId_{\ell_1}     \mapsto \emptyset
	\right\}
\]

\paragraph{Loop Body (Lines 7--10).}

\noindent\textbf{Line 7} ($\texttt{n} = \new\{\texttt{L}:\texttt{tail}\}$):
$\assignfunAM$ (object-creation) adds
${\texttt{n}} \mapsto \{\mId_{\ell_7}\}$ and $\mId_{\ell_7} \mapsto \emptyset$,
initialising the field identifier $\mId_{\mId_{\ell_7},\texttt{L}} \mapsto \emptyset$
(there is no $\texttt{R}$ field this time, since the literal only declares $\texttt{L}$).
Then $\accessfun(\{\mId_{\ell_7}\},\texttt{L},\amem)$ produces
$\mId_{\mId_{\ell_7},\texttt{L}}$, and the field-assignment case of
$\assignfunAM$ propagates \texttt{tail}'s points-to set into it:
\[
	\mId_{\mId_{\ell_7},\texttt{L}} \;\mapsto\; \amem({\texttt{tail}})
	\;=\; \{\mId_{\texttt{list}}\}.
\]

\noindent\textbf{Line 8} ($\texttt{tmp} = \deref\texttt{tail}$):
pointer-load case of $\assignfunAM$:
\[
	{\texttt{tmp}} \;\mapsto\;
	\bigcup_{\mId \in \amem({\texttt{tail}})} \amem(\mId)
	\;=\; \amem({\texttt{list}})
	\;=\; \{\mId_{\ell_1}\}.
\]
After this, ${\texttt{tmp}}$ and ${\texttt{list}}$ both point to
$\mId_{\ell_1}$: \texttt{tmp} is an alias of \texttt{list}'s object.

\noindent\textbf{Line 9} ($\texttt{tmp.R} = \texttt{n}$):
field-assignment (weak update).
$\accessfun$ resolves \texttt{tmp}'s target to $\mId_{\ell_1}$ and
selects the field identifier $\mId_{\mId_{\ell_1},\texttt{R}}$, not yet
tracked at this point since \texttt{list}'s object had no $\texttt{R}$
field until now.
By the domain's convention, a non-tracked identifier has an empty
points-to set, so the update introduces it directly with \texttt{n}'s
points-to set:
\[
	\mId_{\mId_{\ell_1},\texttt{R}}
	\;\mapsto\; \emptyset \cup \amem({\texttt{n}})
	\;=\; \{\mId_{\ell_7}\}.
\]
This is an \emph{over-approximation}: a strong update would replace the
old value, but since $|\gammaID(\mId_{\ell_1})| > 1$ is possible, we
conservatively join.

\noindent\textbf{Line 10} ($\deref\texttt{tail} = \texttt{tmp}$):
pointer-store case.
$\amem({\texttt{tail}}) = \{\mId_{\texttt{list}}\}$, so the weak
update adds $\amem({\texttt{tmp}}) = \{\mId_{\ell_1}\}$ to
$\amem({\texttt{list}})$, which already contains $\mId_{\ell_1}$:
no change.

\noindent\textbf{Lines 12--16} (if-branch $\texttt{x} = \texttt{n}$ at line~13,
else \texttt{skip} at line~15):
pointer-copy in the true branch gives
${\texttt{x}} \mapsto \{\mId_{\texttt{list}}\} \cup \{\mId_{\ell_7}\}$.
After joining the two branches at the loop head (line~17):
${\texttt{x}} \mapsto \{\mId_{\texttt{list}}, \mId_{\ell_7}\}$.

\paragraph{Fixpoint.}
Re-entering the loop with the end-of-body state, no entry grows further
(allocation-site abstraction collapses all heap nodes created at line 7 into
the single $\mId_{\ell_7}$; all points-to sets are already at their
join values).
The analysis reaches a fixpoint after one iteration.
The final abstract memory state is:
\[
	{\amem}^* = \left\{\begin{array}{c}
		{\texttt{list}}                \mapsto \{\mId_{\ell_1}\},\;
		\mId_{\mId_{\ell_1},\texttt{R}}   \mapsto \{\mId_{\ell_7}\},\;
		{\texttt{x}}                   \mapsto \{\mId_{\texttt{list}},\mId_{\ell_7}\},\;
		{\texttt{tail}}                \mapsto \{\mId_{\texttt{list}}\}, \\
		{\texttt{n}}                   \mapsto \{\mId_{\ell_7}\},\;
		\mId_{\mId_{\ell_7},\texttt{L}}   \mapsto \{\mId_{\texttt{list}}\},\;
		{\texttt{tmp}}                 \mapsto \{\mId_{\ell_1}\},\;
		\ldots
	\end{array}
	\right\}
\]
\end{example}

\subsubsection{Soundness.}
\label{sec:inst-andersen-soundness}

The operations above are sound with respect to $\gammaAM$ because each one
can only \emph{grow} points-to sets: it adds identifiers to sets but never
removes them.  Since $\gammaAM$ is monotone
(\Cref{lem:gamma-am-pt-monotone}), any concrete state that was consistent
with the pre-state remains consistent with the post-state.  More precisely:
\begin{itemize}
	\item \emph{$\pointedbyfun$}: returns the set of identifiers referring to the given $\mId$, which by
	      the soundness condition on $\amem$ over-approximates all concrete
	      referrers. The state is unchanged.
	\item \emph{$\pointstofun$}: returns the set $\amem(\mId)$, which by
	      the soundness condition on $\amem$ over-approximates all concrete
	      targets. The state is unchanged.
	\item \emph{$\isSumfun$}: always returns \texttt{true}, since all identifiers are summaries.  The state is unchanged.
	\item \emph{$\freshSsplit/\freshHsplit$}: adds $\mId_\ell \mapsto \emptyset$.
	      The empty points-to set is a sound over-approximation of a freshly
	      allocated location (which holds no pointer yet); concrete executions that
	      later store a pointer will be captured by subsequent $\assignfunAM$ calls.
	\item \emph{$\accessfun$}: adds field identifiers with empty points-to sets;
	      soundness follows as for $\freshSsplit$.
	\item \emph{$\assignfunAM$ (pointer copy/load/store/object-creation)}: each
	      case adds identifiers to the target's points-to set, over-approximating
	      the concrete semantics: any concrete address that can reach $x$ after the
	      assignment is covered by the updated $\amem(\mId_x)$.
	      For the weak store case $\deref p = y$, updating \emph{all} potential
	      targets of $p$ (rather than exactly one) is a sound over-approximation.
	\item \emph{$\assumefunAM$}: the identity is trivially sound since boolean
	      guards do not modify memory structure.
\end{itemize}
By \Cref{thm:abstract-soundness}, these per-operation soundness arguments
together guarantee that the combined analysis (interval domain for values,
points-to domain for memory) is sound for $\mull$.

\subsection{Combining the Instantiations}
\label{sec:andersen-interval-combination}

To demonstrate how the two instantiations cooperate on a concrete setting,
we extend the running example of \Cref{lst:dancing-mull} so that each
allocated node also carries an integer field \texttt{val}.  A counter variable
$\texttt{count}$ is initialised to $1$ before the loop.  At line~7, the heap
allocation becomes
$\texttt{n} = \new\{\texttt{L}:\texttt{tail},\texttt{val}:1\}$,
and two new statements $\texttt{n.val} = \texttt{count}$
followed by $\texttt{count} = \texttt{count}+1$  are inserted at the end of the loop body.
In this variant, we describe the cooperation between $\Amem$ and $\Aval$ in a
mixed pointer/integer setting, demonstrating how Andersen analysis interacts
with non-relational value domains, such as Sign and Interval, within the
proposed framework.

\paragraph{Effect on the Memory Domain.}
The \texttt{val} field is integer-typed and holds no pointer.
\begin{itemize}
	\item \textbf{Line~7}: the object-creation case of $\assignfunAM$ introduces
	      a field identifier $\mId_{\mId_{\ell_7},\texttt{val}}$ into $\amem$,
	      mapped to $\emptyset$.  Its points-to set remains $\emptyset$ throughout
	      the analysis, so the points-to graph is unaffected.
	\item \textbf{Counter increment and store}
	      ($\texttt{count} = \texttt{count}+1$ then $\texttt{n.val} = \texttt{count}$):
	      both right-hand sides are arithmetic expressions, so the
	      \emph{non-pointer assignment} case of $\assignfunAM$ applies to each.
	      No structural change occurs and no substitution is emitted by either statement.
\end{itemize}
Consequently, the abstract memory state is identical to that of
the original example, and the points-to fixpoint is still reached after one
loop iteration.

\paragraph{Effect on the Value Domain and Domain Interaction.}
The interaction between $\Amem$ and $\Aval$ is mediated by the rewriting
function $\Rfun$ and the operations $\accessfun$ and $\assignfunAV$, as
defined in \Cref{sec:abstract-state}.
We trace the processing of the two new statements in turn.

\smallskip
\noindent\emph{Step~A: $\texttt{n.val} = \texttt{count}$.}

\begin{enumerate}
	\item \emph{Memory domain first.}
	      $\assignfunAM$ returns $([\,],\amem)$; the empty substitution leaves
	      ${\aval}'$ unchanged.

	\item \emph{Right-hand-side rewriting.}
	      $\Rfun(\texttt{count}, {\aval}, \amem)$ resolves $\texttt{count}$
	      directly (it is a concrete identifier; no $\accessfun$ call is needed)
	      and returns its current abstract value from ${\aval}$.

	\item \emph{Left-hand-side rewriting.}
	      $\Rfun(\texttt{n.val}, {\aval}, \amem)$ calls
	      $\accessfun(\{\mId_{\ell_7}\}, \texttt{val}, \amem)$, returning
	      the field identifier $\mId_{\mId_{\ell_7},\texttt{val}}$.

	\item \emph{Value update.}
	      Since $\isSumfun(\mId_{\ell_7}, \amem) = \texttt{true}$ (every
	      identifier is a summary node in the Andersen domain), the boolean flag
	      passed to $\assignfunAV$ is $\texttt{false}$, forcing a
	      \emph{weak update} (join with the existing abstract value):
	      \[
		      {\aval}' = \assignfunAV\!\bigl(\mId_{\mId_{\ell_7},\texttt{val}},\;
		      \texttt{count},\;{\aval},\;\texttt{false}\bigr).
	      \]
\end{enumerate}

\smallskip
\noindent\emph{Step~B: $\texttt{count} = \texttt{count}+1$.}
$\texttt{count} \in \cid$ is a concrete program variable, so no $\accessfun$
call is needed.
$\Rfun(\texttt{count}+1, {\aval}', \amem)$ leaves the expression unchanged
as $\texttt{count}+1$.
Because $\texttt{count}$ is not behind any pointer, $\isSumfun$ is not
consulted and $\assignfunAV$ applies a \emph{strong update}:
\[
{\aval}'' = \assignfunAV\!\bigl(\texttt{count},\;\texttt{count}+1,\;{\aval},\;\texttt{true}\bigr).
\]

\smallskip\noindent\emph{Sign domain.}
Instantiating $\Aval$ with the sign domain, the abstract sign of
$\texttt{count}$ is $+$ (strictly positive) from initialisation, since
$\texttt{count} = 1 > 0$.
After the strong update $\texttt{count} = \texttt{count}+1$, the sign
remains $+$: adding one to a positive integer is still positive.
At the loop head, joining $+$ with $+$ gives $+$; the ascending chain is
already stable after the first iteration and no widening is required.
The weak update $\texttt{n.val} = \texttt{count}$ propagates the sign into
the field identifier:
\[
	\mId_{\mId_{\ell_7},\texttt{val}}
	\;\mapsto\;
	\underbrace{+}_{\text{creation}} \;\sqcup_{\Aval}\; +
	\;=\; +.
\]
The combined state $(\aval,\amem) \in \Astate$ reaches a fixpoint after one
loop iteration.

\smallskip\noindent\emph{Interval domain.}
Instantiating $\Aval$ with the interval domain, the abstract value of
$\texttt{count}$ evolves across loop iterations via \emph{strong} updates:
\[
	\underbrace{[1,1]}_{\text{init}}
	\;\xrightarrow{+1}\; [2,2]
	\;\xrightarrow{+1}\; [3,3]
	\;\to\;\cdots
\]
This ascending chain (the pre-loop value $[1,1]$ is joined with the
post-body value at the loop head) does not stabilise without widening.
Applying $\triangledown_{\Aval}$ at the loop head gives:
\[
	[1,1]\;\triangledown_{\Aval}\;[2,2] \;=\; [1,+\infty].
\]
With $\texttt{count}$ stabilised at $[1,+\infty)$, the weak update
$\texttt{n.val} = \texttt{count}$ yields:
\[
	\mId_{\mId_{\ell_7},\texttt{val}}
	\;\mapsto\;
	\underbrace{[1,1]}_{\text{creation}} \;\sqcup_{\Aval}\; [1,+\infty]
	\;=\; [1,+\infty],
\]
after which no further growth occurs and the combined state
$(\aval, \amem) \in \Astate$ reaches a fixpoint.

\paragraph{Summary of Domain Cooperation.}
The two instantiations above highlight the modularity of the framework.
In both cases $\amem$ stabilises after one loop iteration.
The value domain $\Aval$, however, behaves differently: the sign domain
converges immediately (no widening required), while the
interval domain produces a diverging chain and requires one widening step to
stabilise $\texttt{count}$ to $[1,+\infty]$, after which
$\mId_{\mId_{\ell_7},\texttt{val}}$ follows to $[1,+\infty]$ via the weak update.
This example also illustrates the distinction between \emph{strong} and
\emph{weak} updates: $\texttt{count}$, a concrete program variable, receives a
strong (overwriting) update, whereas $\mId_{\mId_{\ell_7},\texttt{val}}$, a
field of a summary node, receives a weak (joining) update because
$\isSumfun(\mId_{\ell_7},\amem) = \texttt{true}$.
Crucially, neither component waits for the other: the memory domain reaches
its fixpoint regardless of what the value domain does with $\texttt{count}$
or $\texttt{val}$, and the value domain widens (or not) without triggering any
structural change in $\amem$.
This independent convergence is precisely the modular design the
$\Aval \otimes \Amem$ construction is intended to exploit.
The two components interact exclusively through the shared vocabulary of memory
identifiers: the memory domain creates the field identifier
$\mId_{\mId_{\ell_7},\texttt{val}}$ via $\accessfun$, and signals through
$\isSumfun$ that the corresponding node is a summary, thereby forcing a weak
update in $\Aval$.
The memory domain never inspects interval values; the value domain never
inspects the points-to graph.
This separation is the defining property of the smash-product construction
$\Astate = \Aval \otimes \Amem$ of \Cref{sec:abstract-state}: each domain is
responsible for its own fragment of the program state, and cooperation reduces
to exchanging memory identifiers and substitutions.

\section{Conclusions}
\label{sec:conclusions}

Stack and heap management is challenging in program static analysis. Precise
reasoning about dynamic and static  memory allocations and their interactions
remains a fundamental difficulty in designing scalable and accurate analyses.
In this paper, we propose a generic framework for handling heap and stack that
is agnostic to both memory and value analyses. The framework is parametric,
allowing different abstractions to be instantiated and composed depending on
the precision and performance requirements of the analysis. By clearly
separating concerns and enabling modular combination of memory and value
reasoning components, our approach provides a flexible foundation for building
static analyses that can be adapted to different programming languages and
analysis domains.

In future work, we plan to integrate this framework into a general-purpose
static analysis platform supporting multiple programming languages, such as
LiSA (Library for Static Analysis)~\cite{Negrini2023,Ferrara21,Negrini24}.
The framework of~\cite{Ferrara16Framework} is already implemented therein,
providing modular interfaces for advanced memory and value analyses. The
code\footnote{\url{https://github.com/lisa-analyzer/lisa/blob/master/lisa/lisa-sdk/src/main/java/it/unive/lisa/analysis/SimpleAbstractDomain.java}}
can be used as a starting point for an implementation of this work.
Once such development is completed, we aim at providing support to C, \Cpp, Rust, and other
low-level languages such as LLVM-IR, in LiSA. Furthermore, we aim at integrating
support for pointer arithmetics in the framework by exploiting $\evalfun$.

\begin{credits}
\subsubsection{\ackname}
This work has been partially supported by
Regione Veneto: RIR project SATCO (CUP D19J24000750007), and RIR project SUPREME (CUP D19J24000810007), and by
PR Veneto FESR 2021-2027, Az. 1.4.1 ID 251382_008101 ``SmartCycle -- Cybersecurity'', CUP: H73C25001340002.

\subsubsection{\discintname}
The authors have no competing interests to declare that are relevant to the content of this article.
\end{credits}

\bibliographystyle{splncs04}
\bibliography{ref}

\newpage
\appendix
\renewcommand\theHsection{\thesection.appendix} 

\clearpage

\section{Exploding and Imploding Objects}
\label{sec:app-explode-implode}

We provide the formal definitions of the $\explodefun$ and $\implodefun$
functions introduced in \Cref{sec:explode-implode-obj}, together with their
key invertibility lemmas.

\begin{definition}[Explode Function]
\label{def:explode-function} Let
	$
		\decomposefun :
		((\cid \cup \caddrh \cup \caddrr)\allowbreak
		\times \cobj)
		\to
		((\cid \cup \caddrh \cup \caddrr) \to \tilde{\cobj})
	$
	be a function that, given a pair $(k, o)$ where $k$ is the key of a
	concrete object $o = \{ f_i : v_i \}$,
	returns an exploded object as a set of new mappings representing $o$
	in which each field value is either left unchanged (if it is a primitive
	value or an address) or replaced by a fresh recursive address (if it is
	a nested object):
	\begin{align*}
		\decomposefun(k,\{f_i:v_i\}) = & \left\{ (k, \{f_i : \tilde{v}_i\})
		\;\middle\vert\; \tilde{v}_i =
		\begin{cases}
			v_i, & \text{if } v_i \in \cval \cup \caddr, \\[1ex]
			r_i, & \text{if } v_i \in \cobj,
		\end{cases}
		\right\}                                                                                      \\
		                               & \cupdot\; \bigcupdot_{v_i \in \cobj} \decomposefun(r_i,v_i).
	\end{align*}
	where $r_i \in \caddrr$ is a fresh recursive address. The function
	generates a fresh recursive address $r_i$ (that does not occur elsewhere in the
	state) for each field $f_i$ with $v_i \in \cobj$ and then applies
	$\decomposefun$ recursively to $v_i$ with key $r_i$.
	Then, let $O: (\cid \cup \caddrh) \to \cobj$ be a mapping from keys to
	concrete objects. The function $
		\explodefun : ((\cid \cup \caddrh) \to \cobj)
		\to ((\cid \cup \caddrh \cup \caddrr) \to \tilde{\cobj})
	$
	is defined as:
	\[
		\explodefun(O) = \bigcupdot_{k \in \dom{O}} \decomposefun(k,O(k))
	\]
\end{definition}

\begin{lemma}[Key Generation in $\explodefun$]
	\label{lemma:explode-key-generation}
	Let $O : (\cid \cup \caddrh) \to \cobj$ be a concrete object mapping.
	Then, the following properties hold:
	\begin{enumerate}
		\item $\dom{O} \subseteq \cid \implies
			      \dom{\explodefun(O)} \subseteq (\cid \cup \caddrr)$;
		\item $\dom{O} \subseteq \caddrh \implies
			      \dom{\explodefun(O)} \subseteq \caddrh$.
	\end{enumerate}
\end{lemma}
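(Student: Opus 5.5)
The plan is to first prove a statement about a single object and then lift it to $\explodefun$ by taking unions. The single-object claim is this: for every key $k$ and every object $o$, every key in $\dom{\decomposefun(k,o)}$ is either $k$ itself or a recursive address in $\caddrr$. In symbols, $\dom{\decomposefun(k,o)} \subseteq \{k\} \cup \caddrr$.

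The single-object claim is proved by structural induction on $o$, i.e., on the nesting depth of objects, which is finite because objects are finite partial maps built from values, addresses and smaller objects. By \Cref{def:explode-function}, $\decomposefun(k,\{f_i:v_i\})$ is a disjoint union of two parts. The first is the singleton map $\{(k,\{f_i:\tilde v_i\})\}$, whose domain is $\{k\}$. The second consists of the recursive calls $\decomposefun(r_i,v_i)$, taken for those fields with $v_i\in\cobj$, where each $r_i\in\caddrr$. The induction hypothesis applies to each $v_i$, since it is a strictly smaller object. It gives $\dom{\decomposefun(r_i,v_i)} \subseteq \{r_i\}\cup\caddrr = \caddrr$. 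Taking the union of all parts yields $\{k\}\cup\caddrr$. The base case is an object with no nested sub-objects, including $\emptyobj$; there the second part is empty and the domain is $\{k\}$.

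For item~1, assume $\dom{O}\subseteq\cid$. The domain of a disjoint union is the union of the domains, so $\dom{\explodefun(O)} = \bigcup_{k\in\dom{O}} \dom{\decomposefun(k,O(k))}$. By the claim this is contained in $\bigcup_{k\in\dom O}(\{k\}\cup\caddrr) \subseteq \cid\cup\caddrr$.

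Item~2 is the main obstacle. The claim alone only gives $\caddrh\cup\caddrr$, so we must also show that no recursive addresses are generated for heap objects. The key fact is the flatness invariant stated in \Cref{sec:concrete-assign}: every object stored at a heap address has field values in $\cval\cup\caddr$. This holds because rule~\eqref{eq:c-sasg-deref-heap} and the grammar of $\fexps$ prevent nested objects from reaching the heap. The hypothesis $\dom{O}\subseteq\caddrh$ means $O$ is the heap part $\cheap\restrcodom{\cobj}$ of a state, so I will make this invariant explicit as an assumption on $O$: for all $k\in\dom O$ and every field $f$, $O(k)(f)\notin\cobj$. Under this assumption the second part of $\decomposefun(k,O(k))$ is empty for every $k$. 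Hence $\decomposefun(k,O(k)) = \{(k,O(k))\}$, and $\dom{\explodefun(O)} = \dom{O}\subseteq\caddrh$. If a fully self-contained argument is required, I would first prove the invariant by induction on execution. The base case is the empty heap. For the inductive step, the only rules that write to $\cheap$ are \eqref{eq:c-sasg-deref-heap} and the \new rules, and each stores either a primitive value or address, or an object whose fields evaluate from $\fexps$ and therefore cannot be objects.
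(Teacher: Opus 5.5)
Your argument is the paper's argument made precise. $\decomposefun$ keeps the root key and only introduces keys in $\caddrr$ for nested sub-objects (your induction giving $\dom{\decomposefun(k,o)}\subseteq\{k\}\cup\caddrr$), which yields item~1; item~2 rests on heap objects being flat, so that no recursive address is ever introduced. Making flatness an explicit hypothesis on $O$ is the right call, because item~2 as literally stated fails. For example, $O=\{h_0\mapsto\{f:\emptyobj\}\}$ explodes to a mapping with a key in $\caddrr$.

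The step that would not go through is your fallback plan to derive flatness by induction on execution. Your claim that object-literal fields ``evaluate from $\fexps$ and therefore cannot be objects'' is false. $\fexps$ contains $\pexps$, and a pointer expression such as a variable, $\deref\pexp$, or $\pexp.f$ can evaluate to an object. For instance, after $\texttt{z}=\emptyobj$, the statement $\texttt{p}=\new\{\texttt{g}:\texttt{z}\}$ stores the nested object $\{\texttt{g}:\emptyobj\}$ at a heap address, by \eqref{eq:c-obj}, \eqref{eq:c-var} and \eqref{eq:c-hasg-var}.

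Your case analysis also overlooks $\hexp=\oexp=x$. Here $\new x$ copies whatever object $x$ holds on the stack, and stack objects can be nested via \eqref{eq:c-sasg-static-field-stack}. For example, the sequence $\texttt{y}=\emptyobj$, $\texttt{y.g}=\texttt{z}$, $\texttt{p}=\new\texttt{y}$ again puts $\{\texttt{g}:\emptyobj\}$ on the heap.

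So flatness is not an invariant of the rules as written; the paper's own justification of item~2 appeals to the same informal claim. It must remain an explicit assumption on the heaps to which the lemma is applied. The alternative is to change the semantics of $\new$ so that it allocates nested sub-objects at fresh heap addresses, as the prose of \Cref{sec:explode-implode-obj} suggests.
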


In other words, concrete objects identified by $\cid$ generate an
exploded object mapping with keys in $\cid \cup \caddrr$, while
concrete objects identified by $\caddrh$ generate an exploded object
mapping with keys only in $\caddrh$. Both properties follow from the definition
of $\decomposefun$: since it only introduces keys in $\caddrr$ for nested
objects, and it does not modify the original keys, the resulting exploded
mapping can only contain keys in $\cid \cup \caddrr$ if the input mapping
only contains keys in $\cid$. Conversely, if the input mapping
only contains keys in $\caddrh$, nested objects cannot be part of the mapping
(recall from \Cref{sec:concrete-assign} that nested heap objects are always
created by storing the address generated by the semantics of $\new$ inside
an object's field, and are thus already flat in memory using addresses in $\caddrh$),
and thus $\decomposefun$ will never add any address in $\caddrr$.

\begin{definition}[Implode Function]
\label{def:implode-function} Let
	$
		\composefun :
		((\cid \cup \caddrh \cup \caddrr) \times ((\cid \cup \caddrh \cup \caddrr) \to \tilde{\cobj}))
		\to
		\cobj
	$
	~be a function that, given a pair $(k, M)$ where $k$ is a key in the
	domain of the mapping $M = \{(k_i, \{f^i_j : \tilde{v}^i_j)\}$, collapses
	the exploded object at key $k$ by replacing each field value $\tilde{v}^i_j$ with its
	corresponding object if $\tilde{v}^i_j \in \caddrr$, recursively:
	\begin{align*}
		\composefun(k, M) = \{f_i : v_i\} : &
		\{f_i : \tilde{v}_i\} \in M(k)                                                                                      \\
		                                    & \land v_i = \begin{cases}
			                                                  \composefun(\tilde{v}_i, M) & \text{if } \tilde{v}_i \in \caddrr, \\
			                                                  v_i                         & \text{otherwise.}
		                                                  \end{cases}
	\end{align*}
	In other words, the exploded object $\{f_i : \tilde{v}_i\}$ corresponding to
	the key $k$ in $M$ is reassembled by replacing each field
	value $\tilde{v}_i$ with its resolved value. Specifically, if $\tilde{v}_i \in \caddrr$
	is a recursive address, then it is replaced by the (recursively)
	composed object $\composefun(\tilde{v}_i, M)$; otherwise, the field value remains unchanged.
	Then, we define the function
	$
		\implodefun : ((\cid \cup \caddrh \cup \caddrr) \to \tilde{\cobj})
		\to ((\cid \cup \caddrh) \to \cobj)
	$
	as:
	\[
		\implodefun(M) = \bigcupdot_{k \in \dom{M\restrdom{\scriptsize \cid \cup \caddrh}}} \{(k, \composefun(k, M))\}
	\]

	that is, it returns the object mapping obtained by recomposing only the keys
	of $M$ that correspond to identifiers and heap addresses, but not recursive addresses in
	$\caddrr$.
\end{definition}

Moreover, $\explodefun$ and $\implodefun$ are inverses of each other in the following sense.

\begin{lemma}[Invertibility of $\decomposefun$ and $\composefun$]
	\label{lemma:compose-decompose-invertibility}
	Consider a concrete object $o \in \cobj$ and a mapping
	$\tilde{o} = (\cid \cup \caddrh \cup \caddrr) \to \tilde{\cobj}$
	representing the exploded version of $o$, namely
	$\tilde{o} = \decomposefun(k,o)$ for some $k \in (\cid \cup \caddrh)$.
	We have that
	\begin{inlinelist}
		\item $\composefun(k, \decomposefun(k,o)) = o$, and that
		\item $\decomposefun(k,\composefun(k, \tilde{o})) = \tilde{o}$.
	\end{inlinelist}
\end{lemma}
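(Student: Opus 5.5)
Both identities will be proved by well-founded induction on the nesting depth of objects, since $\decomposefun$ and $\composefun$ are defined by mutual structural recursion on that structure. The depth of $o \in \cobj$ is $0$ when every field value lies in $\cval \cup \caddr \cup \{\bot\}$, and one more than the maximum depth of its object-valued fields otherwise. For an exploded mapping $\tilde{o} = \decomposefun(k,o)$, the depth of a key is measured along the $\caddrr$-edges starting from that key. The freshness of the recursive addresses $r_i$ guarantees that these edges form a finite tree rooted at $k$, with no sharing and no cycles, so this measure is well defined.

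\emph{Part (i).} Write $o = \{f_i : v_i\}$. By \Cref{def:explode-function}, $\decomposefun(k,o)$ is the disjoint union of the root entry $(k,\{f_i:\tilde v_i\})$ and the sub-maps $\decomposefun(r_i,v_i)$ for each object-valued field. Unfolding $\composefun(k,\cdot)$ (\Cref{def:implode-function}) reads $M(k)=\{f_i:\tilde v_i\}$. This root entry is unambiguous because every key introduced by the sub-maps is a fresh $r_j \neq k$ and these keys are pairwise distinct.
\begin{itemize}
\item If $v_i \in \cval\cup\caddr$, then $\tilde v_i = v_i \notin \caddrr$, and $\composefun$ leaves the field unchanged.
\item If $v_i\in\cobj$, then $\tilde v_i = r_i \in \caddrr$, and $\composefun$ yields $\composefun(r_i, M)$.
\end{itemize}
The key lemma in the second case is a \emph{locality} property: $\composefun(r, M)$ depends only on the entries of $M$ reachable from $r$ through $\caddrr$-edges. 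Since the keys of $\decomposefun(r_i,v_i)$ are disjoint from the other components, locality gives $\composefun(r_i, M) = \composefun(r_i, \decomposefun(r_i,v_i))$. This equals $v_i$ by the induction hypothesis, because $v_i$ has strictly smaller depth. Hence $\composefun(k,\decomposefun(k,o)) = o$. The induction hypothesis must be stated for arbitrary keys in $\cid\cup\caddrh\cup\caddrr$, since the recursive calls are made on keys $r_i \in \caddrr$.

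\emph{Part (ii).} Here $\tilde o = \decomposefun(k,o)$ is given, so by part (i) $\composefun(k,\tilde o) = o$. Therefore $\decomposefun(k,\composefun(k,\tilde o)) = \decomposefun(k,o) = \tilde o$.

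\emph{Main obstacle.} The real difficulty is that $\decomposefun$ chooses its fresh addresses nondeterministically. Re-decomposing could pick different $r_i$, and then the equality in (ii) would hold only up to a renaming of $\caddrr$. I will fix this by treating the choice of fresh addresses as deterministic, namely a fixed injective assignment depending on the key and the field. Alternatively, I will read the statement modulo $\caddrr$-renaming. I will state explicitly which of these conventions is adopted, and I will establish the locality lemma used in part (i) first, as a separate step.
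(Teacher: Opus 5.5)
Your part (i) follows the paper's route: induction on nesting depth, unfolding $\decomposefun$ and then $\composefun$, and applying the hypothesis to the nested call at $r_i$. You make explicit three things the paper leaves implicit. First, the induction hypothesis must range over keys in $\caddrr$. Second, you handle several nested fields at once instead of the paper's ``w.l.o.g.\ exactly one nested object''. Third, you isolate the locality property, which the paper uses silently when, under ``def.\ $\composefun$'', it replaces $\composefun(r_n, M)$ by $\composefun(r_n, \decomposefun(r_n, o_n))$.

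Part (ii) is where you genuinely diverge. You get it from (i), the hypothesis $\tilde{o} = \decomposefun(k,o)$, and determinism of $\decomposefun$. This is correct for the lemma as stated and much shorter. The paper instead runs a second induction on the depth of an arbitrarily named tree-shaped exploded map $\tilde{o}^k_d$. That is the form needed downstream: in \Cref{lemma:explode-implode}, and hence in $\alphasplit \circ \gammasplit = \idfun$, (ii) is applied to an arbitrary well-formed $M$ taken from a split state, not to a map already known to come out of $\decomposefun$. The paper's inductive step, however, tacitly assumes that re-decomposing chooses exactly the names $r_n, \dots, r_0$ already present in $\tilde{o}^k_{n+1}$. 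That is precisely the nondeterminism you flag, so your explicit convention is more honest than the paper's argument.

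Note its price, though. Under a canonical injective naming, $\decomposefun(k, \composefun(k, M)) = M$ holds only for canonically named $M$. To keep the later isomorphism you would have to either restrict split states to canonically named ones or work modulo renaming of $\caddrr$ throughout.
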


\begin{proof}
The lemma is proven by induction on the structure of the concrete or exploded
objects that the functions are applied to. The proof is split into two parts,
one for each direction of the invertibility. Each part formally defines the
notion of depth of an object, which is used as the induction metric.

\paragraph{Part 1: \texorpdfstring{$\composefun(k, \decomposefun(k, o)) = o$}{compose(k, decompose(k, o)) = o}.}

Let's first define the depth of a concrete object $\{f_i:v_i\}$ as follows:
\[
	\texttt{depth}(\{f_i:v_i\}) =
	\begin{cases}
		0 & \text{if } \forall v_i : v_i \in \cval \cup \caddr \cup \bot, \\
		\max\limits_{v_i \in {\scalebox{0.6}{\cobj}}} \texttt{depth}(v_i) + 1 & \text{otherwise.}
	\end{cases}
\]
W.l.o.g.,
we assume that each object considered in this proof has exactly
one nested object, stored in its first field. Let $o_0=\{f^0_0:v^0_0, \dots, f^0_{m_0}:v^0_{m_0}\}$
be a non-recursive object with $m_0$ fields, each mapped to a value in $\cval \cup \caddr \cup \bot$.
Then, we construct objects of depth $d$ as $o_d = \{f^d_0:o_{d-1}, f^d_1:v^d_1, \dots, f^d_{m_d}:v^d_{m_d}\}$,
where $v^d_i \in \cval \cup \caddr \cup \bot$ for all $i\in[1,m_d]$.
Using such notation, the object $o_{n+1}$ has depth $n+1$.
The proof can be easily extended to multiple nested
objects by applying the same reasoning to each of them.
We will prove the lemma by induction on the depth of the object $o = \{f_i:v_i\}$.

\paragraph{Base Case: $o = o_0, \texttt{depth}(o_0) = 0$.}
\begin{flalign*}
	& \composefun(k, \decomposefun(k, o_0)) & \\
	=\;& \composefun(k, \{(k, o_0)\}) & \Lbag\text{def. } \decomposefun, \texttt{ depth}(o_0)=0\Rbag\\
	=\;& o_0 & \Lbag\text{def. } \composefun\Rbag~\qed
\end{flalign*}

\paragraph{Inductive Step: $o = o_{n+1}, \texttt{depth}(o_{n+1}) = n+1$.}
During the proof, we assume that $\composefun$ can correctly reconstruct
decomposed objects obtained from $\decomposefun$ of depth at most $n$.

\begin{flalign*}
	& \composefun(k, \decomposefun(k, o_{n+1})) &\\
	=\;& \composefun(k, \decomposefun(k, \\
	   & \quad \{f^{n+1}_0:o_{n}, f^{n+1}_1:v^{n+1}_1, \dots, f^{n+1}_{m_{n+1}}:v^{n+1}_{m_{n+1}}\}) & \Lbag \text{def. } o_{n+1}\Rbag\\
	=\;& \composefun(k, \{(k, \\
	   & \quad \{f^{n+1}_0:r_{n}, f^{n+1}_1:v^{n+1}_1, \dots, f^{n+1}_{m_{n+1}}:v^{n+1}_{m_{n+1}}\})\} & \\
	   & \quad \cupdot\; \decomposefun(r_n, o_n)) & \Lbag\text{def. } \decomposefun \Rbag \\
	=\;& \{f^{n+1}_0: \composefun(r_{n}, \decomposefun(r_n, o_n)), \\
	   & \quad f^{n+1}_1:v^{n+1}_1, \dots, f^{n+1}_{m_{n+1}}:v^{n+1}_{m_{n+1}}\}) & \Lbag\text{def. } \composefun\Rbag\\
	=\;& \{f^{n+1}_0: o_n, f^{n+1}_1:v^{n+1}_1, \dots, f^{n+1}_{m_{n+1}}:v^{n+1}_{m_{n+1}}\}) & \Lbag\text{hypothesis} \Rbag\\
	=\;& o_{n+1} & \Lbag\text{def. } o_{n+1}\Rbag~\qed \\
\end{flalign*}

\paragraph{Part 2: \texorpdfstring{$\decomposefun(k, \composefun(k, \tilde{o})) = \tilde{o}$}{decompose(k, compose(k, o~)) = o~}.}

Let's first define the depth of a decomposed object mapping $\tilde{o} = \{(k_0, \{f^0_i:\tilde{v}^0_i\}), \dots, (k_l, \{f^l_i:\tilde{v}^l_i\}) \}$ as follows:
\[
	\texttt{depth}(k, \tilde{o}) =
	\begin{cases}
		0 & \text{if } \tilde{o}(k) = \{f_i:\tilde{v}_i\} \\
		  & \quad \land\; \forall \tilde{v}_i : \tilde{v}_i \notin \caddrr, \\
		\max\limits_{\tilde{v}_i \in \codom{\tilde{o}(k)} \cap {\scalebox{0.6}{\caddrr}}} \texttt{depth}(\tilde{v}_i, \tilde{o}) + 1 & \text{otherwise.}
	\end{cases}
\]
W.l.o.g.,
we assume that each decomposed object considered in this proof has exactly
one nested decomposed object, stored in its first field. Let
$\tilde{o}^k_0=\{(k, \{f^0_0:\tilde{v}^0_0, \dots, f^0_{m_0}:\tilde{v}^0_{m_0}\})\}$
be an object mapping containing exactly one non-recursive object with $m_0$
fields, each mapped to a value not in $\caddrr$, whose entry object is indexed with $k$.
Then, we construct mappings of depth $d$ as
$\tilde{o}^k_d = \{(k, \{f^d_0:r_{d-1}, f^d_1:\tilde{v}^d_1, \dots, f^d_{m_d}:\tilde{v}^d_{m_d}\}), \dots, (r_0, \{f^0_0:\tilde{v}^0_0, \dots, f^0_{m_0}:\tilde{v}^0_{m_0}\})\}$,
where $\tilde{v}^l_i \notin \caddrr$ for all $l\in[d,1], i\in[1,m_l]$, and where the
object mapped to $r_0$ corresponds to $\tilde{o}^k_0(k)$.
Using such notation, the mapping $\tilde{o}_{n+1}$ has depth $n+1$.
The proof can be easily extended to multiple nested
objects by applying the same reasoning to each of them.
We will prove the lemma by induction on the depth of the object $\tilde{o}$.

\paragraph{Base Case: $\tilde{o} = \tilde{o}^k_0, \texttt{depth}(\tilde{o}^k_0) = 0$.}

\begin{flalign*}
	& \decomposefun(k, \composefun(k, \tilde{o}^k_0)) &\\
	=\;& \decomposefun(k, \{f^0_i:\tilde{v}^0_i\}) & \Lbag\text{def. } \composefun, \texttt{depth}(\tilde{o}^k_0)=0\Rbag\\
	=\;& \{(k, \{f^0_i:\tilde{v}^0_i\})\} & \Lbag\text{def. } \decomposefun\Rbag\\
	=\;& \tilde{o}^k_0 & \Lbag\text{def. } \tilde{o}^k_0 \Rbag~\qed
\end{flalign*}

\paragraph{Inductive Step: $\tilde{o} = \tilde{o}^k_{n+1}, \texttt{depth}(\tilde{o}^k_{n+1}) = n+1$.}
During the proof, we assume that $\decomposefun$ can correctly decompose
reconstructed objects obtained from $\composefun$ of depth at most $n$.
The proof starts from the object mapping $\tilde{o}^k_{n+1}$, whose extended form
is $\{(k, \{f^{n+1}_0:r_{n}, f^{n+1}_1:\tilde{v}^{n+1}_1, \dots, f^{n+1}_{m_{n+1}}:\tilde{v}^{n+1}_{m_{n+1}}\}),\allowbreak
(r_n, \{f^{n}_0:r_{n-1}, f^{n}_1:\tilde{v}^{n}_1, \dots, f^{n}_{m_n}:\tilde{v}^{n}_{m_{n}}\}),
\dots,
(r_0, \{f^{0}_0:\tilde{v}^0_{0}, f^{1}_1:\tilde{v}^{0}_1, \dots, f^{0}_{m_0}:\tilde{v}^{0}_{m_{0}}\}) \}$.
For the sake of clarity, during the proof we will represent the all pairs
having $r_i$ as the key by using the overlined key $\overline{r}_i$. For instance, we will
write $\tilde{o}^k_{n+1}$ as
$\{(k, \{f^{n+1}_0:r_{n}, f^{n+1}_1:\tilde{v}^{n+1}_1, \dots, f^{n+1}_{m_{n+1}}:\tilde{v}^{n+1}_{m_{n+1}}\}),
\overline{r}_n, \dots, \overline{r}_0 \}$.

\begin{flalign*}
	& \decomposefun(k, \composefun(k, \tilde{o}^k_{n+1})) &\\
	=\;& \decomposefun(k, \composefun(k, & \\
	   & \quad \{(k, \{f^{n+1}_0:r_{n}, f^{n+1}_1:\tilde{v}^{n+1}_1, \dots, f^{n+1}_{m_{n+1}}:\tilde{v}^{n+1}_{m_{n+1}}\}), & \\
	   & \quad \overline{r}_n, \dots, \overline{r}_0\})) & \Lbag \text{def. } \tilde{o}^k_{n+1}\Rbag\\
	=\;& \decomposefun(k, \{f^{n+1}_0: \composefun(r_{n}, \{ \overline{r}_n, \dots, \overline{r}_0 \}), & \\
	   & \quad f^{n+1}_1:\tilde{v}^{n+1}_1, \dots, f^{n+1}_{m_{n+1}}:\tilde{v}^{n+1}_{m_{n+1}}\}) & \Lbag \text{def. } \composefun\Rbag\\
	=\;& \{ (k, \{f^{n+1}_0:r_{n}, f^{n+1}_1:\tilde{v}^{n+1}_1, \dots, f^{n+1}_{m_{n+1}}:\tilde{v}^{n+1}_{m_{n+1}}\}) \} & \\
	   & \quad \cupdot\; \decomposefun(r_n, \composefun(r_{n}, \{ \overline{r}_n, \dots, \overline{r}_0 \})) & \Lbag \text{def. } \decomposefun\Rbag\\
	=\;& \{ (k, \{f^{n+1}_0:r_{n}, f^{n+1}_1:\tilde{v}^{n+1}_1, \dots, f^{n+1}_{m_{n+1}}:\tilde{v}^{n+1}_{m_{n+1}}\}) \} & \\
	   & \quad \cupdot\; \{ \overline{r}_n, \dots, \overline{r}_0 \} & \Lbag \text{hypothesis} \Rbag\\
	=\;& \{(k, \{f^{n+1}_0:r_{n}, f^{n+1}_1:\tilde{v}^{n+1}_1, \dots, f^{n+1}_{m_{n+1}}:\tilde{v}^{n+1}_{m_{n+1}}\}), & \\
	   & \quad \overline{r}_n, \dots, \overline{r}_0\} & \Lbag \text{def. } \cupdot\Rbag\\
	=\;& o_{n+1} & \Lbag\text{def. } o_{n+1}\Rbag~\qed \\
\end{flalign*}

In the proof above, $o_{n-1}$ is the object reconstructed by $\composefun(r_n,
\tilde{o}^k_{n+1})$ from the subtree of $\tilde{o}^k_{n+1}$ rooted at $r_n$.
\end{proof}

\begin{lemma}[Invertibility of $\explodefun$ and $\implodefun$]
	\label{lemma:explode-implode}
	For any concrete object mapping
	$O : (\cid \cup \caddrh) \to \cobj$,
	we have $\implodefun(\explodefun(O)) = O$.
	Furthermore, for any exploded object mapping
	$M : (\cid \cup \caddrh \cup \caddrr) \to \tilde{\cobj}$
	that is well-formed (i.e., without unreachable recursive addresses in $\caddrr$),
	it holds that $\explodefun(\implodefun(M)) = M$.
\end{lemma}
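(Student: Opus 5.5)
The plan is to reduce both identities to Lemma~\ref{lemma:compose-decompose-invertibility}, applied key by key, using the facts that $\explodefun$ and $\implodefun$ are disjoint unions over top-level keys and that fresh recursive addresses never collide.

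\textbf{First identity: $\implodefun(\explodefun(O)) = O$.} By Definition~\ref{def:explode-function},
\[
\explodefun(O) = \bigcupdot_{k \in \dom{O}} \decomposefun(k, O(k)).
\]
\begin{enumerate}
\item I would show that the non-recursive keys of $\explodefun(O)$ are exactly $\dom{O}$. Each $\decomposefun(k,O(k))$ contributes the key $k$ and otherwise only fresh keys in $\caddrr$, as in Lemma~\ref{lemma:explode-key-generation}. Hence
\[
\dom{\explodefun(O)\restrdom{\cid \cup \caddrh}} = \dom{O}.
\]
\item Fix $k \in \dom{O}$. I would argue that
\[
\composefun(k, \explodefun(O)) = \composefun(k, \decomposefun(k, O(k))).
\]
The function $\composefun(k,\cdot)$ only inspects $k$ and the recursive addresses reachable from it. Those addresses are exactly the fresh ones generated inside $\decomposefun(k,O(k))$, and freshness keeps them disjoint from the addresses generated for any other key $k'$.
\item Part (i) of Lemma~\ref{lemma:compose-decompose-invertibility} then gives $O(k)$ for every $k$, and reassembling over $\dom{O}$ yields $O$.
\end{enumerate}

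\textbf{Second identity: $\explodefun(\implodefun(M)) = M$ for well-formed $M$.}
\begin{enumerate}
\item Let $K = \dom{M} \cap (\cid \cup \caddrh)$. For each $k \in K$, let $M_k$ be the restriction of $M$ to $k$ together with the recursive addresses reachable from $k$.
\item Well-formedness gives that every $r \in \dom{M} \cap \caddrr$ is reachable from some $k \in K$. I also need each $r$ to be reachable from exactly one root, forming a tree (no sharing or cycles). This property is implicit in ``well-formed'', and I would make it explicit, since $\decomposefun$ produces only such shapes. Under it, $M = \bigcupdot_{k \in K} M_k$.
\item As above, $\composefun(k, M) = \composefun(k, M_k)$, by the same locality argument.
\item Each $M_k$ has the shape required by Lemma~\ref{lemma:compose-decompose-invertibility}, so part (ii) gives
\[
\decomposefun(k, \composefun(k, M_k)) = M_k.
\]
\item Taking the disjoint union over $k \in K$ yields $\explodefun(\implodefun(M)) = M$.
\end{enumerate}

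\textbf{Main obstacle: the locality and freshness bookkeeping.} Two points need care.
\begin{itemize}
\item \emph{Locality.} I must show that $\composefun(k,\cdot)$ depends only on the reachable sub-mapping. I would prove this by induction on $\texttt{depth}(k,M)$, mirroring the proof of Lemma~\ref{lemma:compose-decompose-invertibility}.
\item \emph{Identification of names.} The recursive addresses chosen by $\decomposefun$ are ``fresh'', so identity of the exploded maps holds only up to renaming of $\caddrr$. I would fix a deterministic fresh-name policy, or equivalently assume that $\decomposefun$ reuses the names already present in $M_k$, so that equality holds on the nose. This is consistent with how part (ii) of Lemma~\ref{lemma:compose-decompose-invertibility} is stated.
\end{itemize}
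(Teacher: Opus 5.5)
Your proposal follows the paper's proof: both identities are reduced key by key to Lemma~\ref{lemma:compose-decompose-invertibility}, with locality of $\composefun$ and freshness of $\caddrr$ addresses preventing interference between roots. The paper compresses this into a single note, and in the second identity it writes $\{(k_0,M(k_0)),\dots\}$ and concludes $=M$, which drops the $\caddrr$ bindings; your partition $M=\bigcupdot_{k}M_k$ avoids that slip. Your extra tree-shape hypothesis is genuinely needed: with sharing (two fields referencing the same $r\in\caddrr$) the identity fails while the stated well-formedness still holds, and with cycles $\composefun$ does not terminate. However, your two naming fixes are not equivalent. A deterministic policy gives on-the-nose equality only for $M$ already named by that policy, i.e.\ $M$ in the image of $\explodefun$, where the second identity follows at once from the first. ``Reusing the names of $M_k$'' instead amounts to equality up to renaming of $\caddrr$.
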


\begin{proof}
\begin{flalign*}
	&\implodefun(\explodefun(O)) \\
	=\;& \implodefun(\explodefun(\{(k_0,\{f^0_j: v^0_j\}), \dots, (k_l,\{f^l_j: v^l_j\})\})) & \Lbag\text{def. } O\Rbag\\
	=\;& \implodefun(\decomposefun(k_0, \{f^0_j: v^0_j\}) \cupdot \dots \\
	   & \quad \cupdot \decomposefun(k_l, \{f^l_j: v^l_j\})) & \Lbag \text{def. } \explodefun\Rbag\\
	   & \emph{Let } A \emph{ denote the whole argument of } \implodefun \\
	=\;& \implodefun(A) \\
	=\;& \{(k_0, \composefun(k_0, A)) \cupdot \dots \cupdot (k_l, \composefun(k_l, A))\} & \Lbag \text{def. } \implodefun\Rbag\\
	=\;& \{(k_0, \composefun(k_0, \decomposefun(k_0, \{f^0_j: v^0_j\}))) \cupdot \dots \\
	   & \quad \cupdot (k_l, \composefun(k_l, \decomposefun(k_l, \{f^l_j: v^l_j\})))\} & \Lbag \text{see note below}\Rbag\\
	=\;& \{(k_0, \{f^0_j: v^0_j\}) \cupdot \dots \cupdot (k_l, \{f^l_j: v^l_j\})\} & \Lbag \text{by \Cref{lemma:compose-decompose-invertibility}}\Rbag\\
	=\;& O & \Lbag \text{def. } O \Rbag~\qed \\
\end{flalign*}
\emph{Note.} By \Cref{def:explode-function}, we know that $\decomposefun(k_i, \cdot)$
generates a tree of bindings rooted at $k_i$ and containing only
fresh addresses that are unique in the whole exploded mapping. Also,
by (\Cref{def:implode-function}) we know that $\composefun(k_i, \cdot)$
traverses only bindings reachable from $k_i$ in its argument. Therefore,
during the execution of $\composefun(k_i, A)$,
the only bindings in $A$ reachable from $k_i$ are exactly those produced by
$\decomposefun(k_i, \cdot)$, so the reduction is valid.
Note that for $k_i \in \caddrh$ (heap objects), $\decomposefun$ acts as the identity
by \Cref{lemma:explode-key-generation}, so these entries pass through unchanged.

\begin{flalign*}
	&\explodefun(\implodefun(M)) \\
	=\;& \explodefun(\{(k_0, \composefun(k_0, M)), \dots, (k_l, \composefun(k_l, M))\}) & \Lbag \text{def. } \implodefun\Rbag\\
	=\;& \decomposefun(k_0, \composefun(k_0, M)) \cupdot \dots \\
	   & \quad\cupdot \decomposefun(k_l, \composefun(k_l, M)) & \Lbag \text{def. } \explodefun\Rbag\\
	=\;& \{(k_0, M(k_0)), \dots, (k_l, M(k_l))\} & \Lbag \text{by \Cref{lemma:compose-decompose-invertibility}}\Rbag\\
	=\;& M & \Lbag \text{def. } M \Rbag~\qed\\
\end{flalign*}
\end{proof}

\begin{lemma}[Distributivity of $\implodefun$ over $\cupddot$]
	\label{lemma:implode-distributive}
	Let $M_1 : D_1 \to \tilde{\cobj}$ and $M_2 : D_2 \to \tilde{\cobj}$ be
	two well-formed exploded object mappings (i.e., obtained by $\explodefun$)
	with disjoint domains $D_1 \cap D_2 = \emptyset$.
	Then $\implodefun(M_1) \cupddot \implodefun(M_2) = \implodefun(M_1 \cupddot M_2)$.
\end{lemma}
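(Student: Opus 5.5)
We argue pointwise on keys, unfolding the definition of $\implodefun$ on both sides. Since $D_1\cap D_2=\emptyset$, the second-order union $M_1\cupddot M_2$ degenerates to the first-order disjoint union $M_1\cupdot M_2$: no outer key is shared, so only the cases $x\in D_1\setminus D_2$ and $x\in D_2\setminus D_1$ of the definition of $\cupddot$ apply. The same holds on the left-hand side, because $\dom{\implodefun(M_i)}=D_i\cap(\cid\cup\caddrh)$, which are again disjoint. It therefore suffices to show that $\implodefun(M_1\cupdot M_2)=\implodefun(M_1)\cupdot\implodefun(M_2)$.

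First we treat the domains. By definition, $\implodefun(M)$ is indexed by $\dom{M\restrdom{\cid\cup\caddrh}}$. By domain additivity of restriction, property (i) of \Cref{sec:background}, together with the distribution of $\restriction$ over $\cupdot$, property (iii), we get $(M_1\cupdot M_2)\restrdom{\cid\cup\caddrh}=M_1\restrdom{\cid\cup\caddrh}\cupdot M_2\restrdom{\cid\cup\caddrh}$. Hence both sides are defined on exactly the same set of keys $K_1\cupdot K_2$, where $K_i=D_i\cap(\cid\cup\caddrh)$.

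Next we compare the values. For $k\in K_1$ the right-hand side yields $\composefun(k,M_1\cupdot M_2)$, while the left-hand side yields $\composefun(k,M_1)$ (and symmetrically for $k\in K_2$). The key step is a \emph{locality} property of $\composefun$: $\composefun(k,M)$ inspects only $M(k)$ and, recursively, $M(r)$ for the recursive addresses $r\in\caddrr$ reachable from $k$. We will prove by induction on $\texttt{depth}(k,M_1)$, exactly as in Part~2 of \Cref{lemma:compose-decompose-invertibility}, that $\composefun(k,M_1\cupdot M_2)=\composefun(k,M_1)$ whenever every $r\in\caddrr$ reachable from $k$ in $M_1$ lies in $D_1$. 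The base case is immediate because no recursive address occurs, so only $(M_1\cupdot M_2)(k)=M_1(k)$ is read. In the inductive step, each field value $r\in\caddrr$ lies in $D_1$, so $(M_1\cupdot M_2)(r)=M_1(r)$, and the induction hypothesis applies to $r$.

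The main obstacle is justifying this closure hypothesis, namely that recursive addresses reachable from $k\in D_1$ never escape into $D_2$. This is where well-formedness comes in. Since $M_i$ is obtained by $\explodefun$, every $\caddrr$-address appearing in a field of $M_1$ was introduced by $\decomposefun$ together with its own binding in $M_1$ (\Cref{def:explode-function}), so it belongs to $D_1$; it cannot also belong to $D_2$ because $D_1\cap D_2=\emptyset$. This is the same reachability argument as the note in the proof of \Cref{lemma:explode-implode}. Heap keys need no recursion at all, since by \Cref{lemma:explode-key-generation} their exploded objects contain no $\caddrr$ values. Combining the equality of domains with the pointwise equality of values concludes the proof.
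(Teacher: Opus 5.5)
Your proof is correct and takes essentially the same route as the paper: you split the key set into $K_1\cup K_2$ and use well-formedness to show that every $\caddrr$-address reachable from $k\in K_i$ lies in $D_i$, which gives $\composefun(k,M_1\cupddot M_2)=\composefun(k,M_i)$. The paper asserts this locality step in one line; your explicit induction on $\texttt{depth}$, and your remark that $\cupddot$ reduces to $\cupdot$ when the outer domains are disjoint, simply spell out details the paper leaves implicit.
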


\begin{proof}
Let $K_i = \dom{M_i\restrdom{\cid\cup\caddrh}}$ for $i=1,2$.
Since $D_1\cap D_2=\emptyset$, we have $K_1\cap K_2=\emptyset$ and
$\dom{(M_1\cupddot M_2)\restrdom{\cid\cup\caddrh}}=K_1\cup K_2$.
Being $M_1$ well-formed, we have that
for any $k\in K_1$ the $\caddrr$-addresses reachable from $k$ lie entirely
within $D_1$, so $(M_1\cupddot M_2)(r)=M_1(r)$ for every such $r$,
and hence $\composefun(k,M_1\cupddot M_2)=\composefun(k,M_1)$. The same
holds for $M_2$, and thus we have that
$\composefun(k,M_1\cupddot M_2)=\composefun(k,M_2)$ for $k\in K_2$.
\begin{flalign*}
	   & \implodefun(M_1\cupddot M_2) & \\
	=\;& \bigcupdot_{k\in K_1\cup K_2}\{(k,\composefun(k,M_1\cupddot M_2))\}
	   & \Lbag\text{def. }\implodefun\Rbag\\
	=\;& \bigcupdot_{k\in K_1}\{(k,\composefun(k,M_1))\}\\
	   & \quad \cupddot \; \bigcupdot_{k\in K_2}\{(k,\composefun(k,M_2))\}
	   & \Lbag\text{above; }K_1\cap K_2=\emptyset\Rbag\\
	=\;& \implodefun(M_1)\cupddot\implodefun(M_2)
	   & \Lbag\text{def. }\implodefun\Rbag~\qed
\end{flalign*}
\end{proof}

\begin{lemma}[Distributivity of $\explodefun$ over $\cupdot$]
	\label{lemma:explode-distributive}
	Let $O_1 : D_1 \to \cobj$ and $O_2 : D_2 \to \cobj$ be concrete object
	mappings with $D_1 \cap D_2 = \emptyset$.
	Then $\explodefun(O_1 \cupdot O_2) = \explodefun(O_1) \cupdot \explodefun(O_2)$.
\end{lemma}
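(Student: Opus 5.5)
The plan is to unfold $\explodefun$ on $O_1 \cupdot O_2$ directly and regroup the resulting indexed disjoint union by domain. By \Cref{def:explode-function},
\[
\explodefun(O_1 \cupdot O_2) = \bigcupdot_{k \in \dom{O_1 \cupdot O_2}} \decomposefun\bigl(k, (O_1 \cupdot O_2)(k)\bigr).
\]
Since $\dom{O_1 \cupdot O_2} = D_1 \cup D_2$ with $D_1 \cap D_2 = \emptyset$, I will split the index set into $D_1$ and $D_2$. On $D_1$ the definition of $\cupdot$ gives $(O_1 \cupdot O_2)(k) = O_1(k)$, and symmetrically on $D_2$ it gives $O_2(k)$. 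The expression therefore becomes
\[
\Bigl(\bigcupdot_{k \in D_1} \decomposefun(k, O_1(k))\Bigr) \cupdot \Bigl(\bigcupdot_{k \in D_2} \decomposefun(k, O_2(k))\Bigr),
\]
and each bracket is by definition $\explodefun(O_1)$ and $\explodefun(O_2)$ respectively.

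The one point needing care is that these regroupings of $\cupdot$ are legitimate: every disjoint union involved must actually have disjoint domains, so that reassociation is just union of functions with disjoint graphs' keys. The keys produced by $\decomposefun(k, \cdot)$ are $k$ itself together with the fresh recursive addresses in $\caddrr$ created for its nested objects.
\begin{itemize}
\item Distinct top-level keys $k \neq k'$ in $D_1 \cup D_2$ are different elements of $\cid \cup \caddrh$, so they cannot clash.
\item Top-level keys never clash with recursive addresses, because $\caddrr \cap (\cid \cup \caddrh) = \emptyset$.
\item Recursive addresses generated for different $k$ are distinct, by the freshness requirement in \Cref{def:explode-function} (each $r_i$ does not occur elsewhere in the state).
\end{itemize}
Hence the family $\{\decomposefun(k, O(k))\}_{k}$ is pairwise domain-disjoint, and $\bigcupdot$ over $D_1 \cup D_2$ equals the $\cupdot$ of the unions over $D_1$ and over $D_2$. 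This is the analogue of domain additivity from \Cref{sec:background}.

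I expect the main obstacle to be the freshness bookkeeping. Freshness of the $r_i$ must be read relative to the whole explosion, so that $\explodefun(O_1)$ and $\explodefun(O_2)$, computed separately, do not reuse the same recursive addresses. I would make this explicit by fixing the convention already used in the proof of \Cref{lemma:explode-implode}: $\explodefun$ is defined up to the choice of fresh names, and the recursive addresses for $O_1$ and $O_2$ are chosen from disjoint pools. Under this convention the right-hand side $\explodefun(O_1) \cupdot \explodefun(O_2)$ is well-defined, and it is literally equal to the left-hand side. I would not attempt a separate renaming argument.
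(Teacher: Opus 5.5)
Your proposal is correct and follows essentially the same route as the paper: unfold $\explodefun$ over $D_1 \cup D_2$, split the indexed disjoint union using $D_1 \cap D_2 = \emptyset$ and the definition of $\cupdot$, and refold the two parts into $\explodefun(O_1)$ and $\explodefun(O_2)$. The paper handles freshness with a global, stateful fresh-address generator shared by all invocations of $\decomposefun$, which plays the same role as your disjoint-pools convention. Your explicit check that top-level keys, $\caddrr$ addresses, and addresses from different roots never collide is a more detailed version of the paper's closing remark.
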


\begin{proof}
Let $O_1$ and $O_2$ be two object mappings with disjoint domains
$D_1$ and $D_2$, respectively.
\begin{flalign*}
	   & \explodefun(O_1\cupdot O_2) & \\
	=\;& \bigcupdot_{k\in D_1\cup D_2}\decomposefun(k,(O_1\cupdot O_2)(k))
	   & \Lbag\text{def. }\explodefun\Rbag\\
	=\;& \bigcupdot_{k\in D_1}\decomposefun(k,O_1(k))\\
	   &\quad\cupdot\;\bigcupdot_{k\in D_2}\decomposefun(k,O_2(k))
	   & \Lbag\text{def. }\cupdot;\ D_1\cap D_2=\emptyset\Rbag\\
	=\;& \explodefun(O_1)\cupdot\explodefun(O_2)
	   & \Lbag\text{def. }\explodefun, \decomposefun\Rbag~\qed
\end{flalign*}
Remember that $\decomposefun$ introduces only fresh $\caddrr$ addresses via a
global fresh address generator that maintains state across all invocations.
This ensures that decompositions of $O_1$ and $O_2$ generate disjoint sets of
addresses, making their union a valid $\cupdot$.
\end{proof}

\pagebreak
\section{Proofs}
\label{sec:proofs}

\subsection{Proof of \texorpdfstring{$\restriction$}{Restriction} and \texorpdfstring{$\doublerestriction$}{Double Restriction} Properties (\Cref{par:restriction-properties})}
\label{sec:proof-restriction-properties}

Throughout, let first-order $f\colon K\to V$, $g\colon A\to B$,
and second-order $M_1\colon D_1\to(L_1\to V_1)$, $M_2\colon D_2\to(L_2\to V_2)$.
Conditions are stated per proof.
Properties (i)--(iii) each show both the $\restriction$ case and the
$\doublerestriction$ variant; for (ii), the $\doublerestriction$ variant uses
$\cupddot$ in place of $\cup$; property (iv) only involve $\doublerestriction$.

\paragraph{(i) Domain Additivity.}
For $X_1,X_2\subseteq K$:
\begin{flalign*}
	& f \restrdom{X_1} \cup f \restrdom{X_2} & \\
	=\;& \{ (x, f(x)) \mid x \in X_1 \} \cup \{ (x, f(x)) \mid x \in X_2 \} & \Lbag\text{def. } \restriction\Rbag\\
	=\;& \{ (x, f(x)) \mid (x \in X_1 ) \vee (x \in X_2) \} & \Lbag\text{prop. }\cup\Rbag\\
	=\;& \{ (x, f(x)) \mid x \in X_1 \cup X_2 \} & \Lbag\text{def. } \cup\Rbag\\
	=\;& f \restrdom{X_1 \cup X_2} & \Lbag\text{def. } \restriction \Rbag~\qed
\end{flalign*}
For $X_1,X_2\subseteq D_1$:
\begin{flalign*}
	& M_1 \drestrdom{X_1} \cup M_1 \drestrdom{X_2} & \\
	=\;& \{ (k, M_1(k)) \mid k \in X_1 \} \cup \{ (k, M_1(k)) \mid k \in X_2 \} & \Lbag\text{def. } \doublerestriction \Rbag\\
	=\;& \{ (k, M_1(k)) \mid (k \in X_1) \vee (k \in X_2) \} & \Lbag\text{prop. } \cup \Rbag\\
	=\;& \{ (k, M_1(k)) \mid k \in X_1 \cup X_2 \} & \Lbag\text{def. } \cup \Rbag\\
	=\;& M_1 \drestrdom{X_1 \cup X_2} & \Lbag\text{def. } \doublerestriction \Rbag~\qed
\end{flalign*}

\paragraph{(ii) Co-domain Additivity.}
For $Y_1,Y_2\subseteq V$ with $Y_1\cap Y_2=\emptyset$:
\begin{flalign*}
	& f \restrcodom{Y_1} \cup f \restrcodom{Y_2} & \\
	=\;& \{ (x, f(x)) \mid f(x) \in Y_1 \} \cup \{ (x, f(x)) \mid f(x) \in Y_2 \} & \Lbag\text{def. } \restriction\Rbag\\
	=\;& \{ (x, f(x)) \mid (f(x) \in Y_1) \vee (f(x) \in Y_2) \} & \Lbag\text{prop. }\cup\Rbag\\
	=\;& \{ (x, f(x)) \mid f(x) \in Y_1 \cup Y_2 \} & \Lbag\text{def. }\cup\Rbag\\
	=\;& f \restrcodom{Y_1 \cup Y_2} & \Lbag\text{def. } \restriction \Rbag~ \qed
\end{flalign*}
For $Y_1,Y_2\subseteq V_1$ with $Y_1\cap Y_2=\emptyset$
(so that inner maps $M_1(k)\restrcodom{Y_1}$ and $M_1(k)\restrcodom{Y_2}$ are domain-disjoint,
making $\cupddot$ valid):
\begin{flalign*}
	& M_1 \drestrcodom{Y_1} \cupddot M_1 \drestrcodom{Y_2} & \\
	=\;& \{ (k, M_1(k) \restrcodom{Y_1}) \mid k \in D_1 \} \cupddot \{ (k, M_1(k) \restrcodom{Y_2}) \mid k \in D_1 \} &\Lbag \text{def. } \doublerestriction \Rbag\\
	=\;& \{ (k, M_1(k) \restrcodom{Y_1} \cupdot M_1(k) \restrcodom{Y_2}) \mid k \in D_1 \} &\Lbag \text{def. } \cupddot;\ Y_1\cap Y_2=\emptyset \Rbag\\
	=\;& \{ (k, M_1(k) \restrcodom{Y_1 \cup Y_2}) \mid k \in D_1 \} &\Lbag \text{co-domain add. } \restriction\Rbag \\
	=\;& M_1 \drestrcodom{Y_1 \cup Y_2} &\Lbag \text{def. } \doublerestriction \Rbag~ \qed
\end{flalign*}

\paragraph{(iii) Distribution over $\cupdot$.}
For $K\cap A=\emptyset$:
\begin{flalign*}
	& (f \cupdot g) \restr{W}{Z} & \\
	=\;& \{\, (x,(f \cupdot g)(x)) \mid x \in W \wedge (f \cupdot g)(x) \in Z \,\} &\Lbag \text{def. }\restriction \Rbag\\
	=\;& \{\, (x,(f \cupdot g)(x)) \mid x \in (W\cap K) \\
	   & \quad\cup (W\cap A) \wedge (f \cupdot g)(x) \in Z \,\} &\Lbag W \subseteq K\cup A\Rbag\\
	=\;& \{\, (x, f(x)) \mid x \in W\cap K \wedge f(x) \in Z \,\} & \\
	   & \quad\cup\; \{\, (x, g(x)) \mid x \in W\cap A \wedge g(x) \in Z \,\} &\Lbag \text{def. } \cupdot\Rbag\\
	=\;& f \restr{(W\cap K)}{Z} \cup\; g \restr{(W\cap A)}{Z} &\Lbag \text{def. }\restriction\Rbag\\
	=\;& ( f \restr{(W\cap K)}{Z} ) \;\cupdot\; ( g \restr{(W\cap A)}{Z} ) &\Lbag K \cap A = \emptyset \Rbag~ \qed
\end{flalign*}
For $D_1\cap D_2=\emptyset$:
\begin{flalign*}
	& (M_1 \cupdot M_2) \drestr{W}{Z} & \\
	=\;& \{ (x, (M_1 \cupdot M_2)(x) \restrcodom{Z}) \mid x \in W \} &\Lbag \text{def. } \doublerestriction \Rbag\\
	=\;& \{ (x, M_1(x) \restrcodom{Z}) \mid x \in W \cap D_1 \} \\
	   & \quad \cup \{ (x, M_2(x) \restrcodom{Z}) \mid x \in W \cap D_2 \} &\Lbag \text{def. } \cupdot \Rbag\\
	=\;& M_1 \drestr{(W \cap D_1)}{Z} \cup M_2 \drestr{(W \cap D_2)}{Z} &\Lbag \text{def. } \doublerestriction \Rbag\\
	=\;& ( M_1 \drestr{(W \cap D_1)}{Z} ) \;\cupdot\; ( M_2 \drestr{(W \cap D_2)}{Z} ) &\Lbag D_1 \cap D_2 = \emptyset \Rbag~ \qed
\end{flalign*}

\paragraph{(iv) Distribution of \texorpdfstring{$\doublerestriction$}{double restriction} over $\cupddot$.}
With $\forall k\in D_1\cap D_2: \dom{M_1(k)}\cap\dom{M_2(k)}=\emptyset$ and $Y \subseteq V_1 \cup V_2$:
\begin{flalign*}
	& (M_1 \cupddot M_2)\drestrcodom{Y} & \\
	=\;& \{(k,\; (M_1\cupddot M_2)(k)\restrcodom{Y}) \mid k \in \dom{M_1 \cupddot M_2}\}
		& \Lbag\text{def. }\doublerestriction\Rbag\\
	=\;& \{(k,\; M_1(k)\restrcodom{Y}) \mid k \in D_1 \setminus D_2\} \\
		& \quad \cupddot\; \{(k,\; M_2(k)\restrcodom{Y}) \mid k \in D_2 \setminus D_1\} \\
		& \quad \cupddot\; \{(k,\; (M_1(k) \cupdot M_2(k))\restrcodom{Y}) \mid k \in D_1 \cap D_2\}
		& \Lbag\text{def. }\cupddot\Rbag\\
	=\;& \{(k,\; M_1(k)\restrcodom{Y}) \mid k \in D_1 \setminus D_2\} \\
		& \quad \cupddot\;\{(k,\; M_2(k)\restrcodom{Y}) \mid k \in D_2 \setminus D_1\} \\
		& \quad \cupddot\; \{(k,\; M_1(k)\restrcodom{Y} \cupdot M_2(k)\restrcodom{Y}) \mid k \in D_1 \cap D_2\}
		& \Lbag\text{distr.}\restriction\text{ over }\cupdot\Rbag\\
	=\;& M_1\drestrcodom{Y} \;\cupddot\; M_2\drestrcodom{Y}
		& \Lbag\text{def. }\doublerestriction,\cupddot\Rbag~\qed
\end{flalign*}

\subsection{Proof of Galois Isomorphism (\Cref{thm:concrete-split-galois-isomorphism})}
\label{sec:proof-concrete-split-galois-isomorphism}
For $\langle \powerset{\Cstate}, \subseteq \rangle \GaloiS{\gammasplit}{\alphasplit}
\langle \powerset{\Sstate}, \subseteq \rangle$ to exist,
$\alphasplit$ and $\gammasplit$ must be monotone functions, and
$\alphasplit \circ \gammasplit = \gammasplit \circ \alphasplit = \idfun$.
These properties are proven in the following paragraphs, thus the Galois
isomorphism exists.

\paragraph{Proof of Monotonicity of \(\gammasplit\).} To be monotone, it
must hold that for all sets \(\Sset_1, \Sset_2\) with \(\Sset_1 \subseteq \Sset_2\):
$\gammasplit(\Sset_1) \subseteq \gammasplit(\Sset_2).$
By definition, we have \(\Sset_1 \subseteq \Sset_2 \implies \Sset_2 = \Sset_1 \cup \Sset'\)
for some set \(\Sset'\). Assume \(\Sset' \neq \emptyset\),
since otherwise the proof is trivial.
\begin{flalign*}
	& \gammasplit(\Sset_2) & \\
	=\; & \gammasplit(\Sset_1 \cup \Sset')
	    & \Lbag\Sset_1 \subseteq \Sset_2 \Rbag\\
	=\; & \gammasplit(\{\sstate_0^1, \dots, \sstate_n^1, \sstate'_0, \dots, \sstate'_m\})
	    & \Lbag\text{set expansion}\Rbag\\
	=\; & \{\gammasplitdot(\sstate_0^1), \dots, \gammasplitdot(\sstate_n^1), \gammasplitdot(\sstate'_0), \dots, \gammasplitdot(\sstate'_m)\}
	    & \Lbag\text{appl. } \gammasplit\Rbag\\
	=\; & \{\gammasplitdot(\sstate_0^1), \dots, \gammasplitdot(\sstate_n^1)\} \cup \{\gammasplitdot(\sstate'_0), \dots, \gammasplitdot(\sstate'_m)\}
	    & \Lbag\text{def. } \cup\Rbag\\
	=\; & \gammasplit(\Sset_1) \cup \gammasplit(\Sset')
	    & \Lbag\text{def. } \gammasplit\Rbag~\qed\\
\end{flalign*}
\paragraph{Proof of Monotonicity of \(\alphasplit\).} Analogous to the proof for \(\gammasplit\) above.
\paragraph{Proof of Composition of $\alphasplit$ and $\gammasplit$.}
Let $\Cset = \{\cstate_0, \dots, \cstate_n\}$ be a set of concrete states,
where each $\cstate_i$ is a concrete state defined as $\cstate = (\cstack, \cheap)$.
For each $\Cset$, it must hold that $\gammasplit(\alphasplit(\Cset)) = \Cset$.

We can assume for each $\cstate \in \Cset$ that
$\gammasplitdot(\alphasplitdot(\cstate)) = \cstate$.
\begin{flalign*}
	& \gammasplitdot(\alphasplitdot(\cstate)) && \\
	=\; & \gammasplitdot(
		(\cstack\restrcodom{\cval}\cupdot \cheap\restrcodom{\cval},
		\explodefun(\cstack\restrcodom{\cobj}\cupdot\cheap\restrcodom{\cobj})\drestrcodom{\cval}), \\
		& \quad (\cstack\restrcodom{\caddr}\cupdot \cheap\restrcodom{\caddr},
		\explodefun(\cstack\restrcodom{\cobj}\cupdot\cheap\restrcodom{\cobj})\drestrcodom{\caddrp}))
		& \Lbag\text{appl. } \alphasplitdot\Rbag\\
	=\; & (( 
		(\cstack\restrcodom{\cval}\cupdot \cheap\restrcodom{\cval})\restrdom{\cid} \cupdot
		(\cstack\restrcodom{\caddr}\cupdot \cheap\restrcodom{\caddr})\restrdom{\cid} \cupdot \\
		& \quad
		\implodefun
		(\explodefun(\cstack\restrcodom{\cobj}\cupdot\cheap\restrcodom{\cobj})\drestr{\cid\cup\caddrr}{\cval} \\
		& \quad \cupddot
		\explodefun(\cstack\restrcodom{\cobj}\cupdot\cheap\restrcodom{\cobj})\drestr{\cid\cup\caddrr}{\caddrp}
		)),\\
		& \quad
		( 
		(\cstack\restrcodom{\cval}\cupdot \cheap\restrcodom{\cval})\restrdom{\caddrh} \cupdot
		(\cstack\restrcodom{\caddr}\cupdot \cheap\restrcodom{\caddr})\restrdom{\caddrh} \\
		& \quad \cupdot
		\implodefun
		(\explodefun(\cstack\restrcodom{\cobj}\cupdot\cheap\restrcodom{\cobj})\drestr{\caddrh}{\cval} \\
		& \quad \cupddot
		\explodefun(\cstack\restrcodom{\cobj}\cupdot\cheap\restrcodom{\cobj})\drestr{\caddrh}{\caddrp}
		)))
		& \Lbag\text{appl. } \gammasplitdot\Rbag\\
	=\; & (( 
		(\cstack\restr{\cid}{\cval} \cupdot \cheap\restr{\cid}{\cval}) \cupdot
		(\cstack\restr{\cid}{\caddr} \cupdot \cheap\restr{\cid}{\caddr}) \\
		& \quad \cupdot
		\implodefun
		(\explodefun(\cstack\restrcodom{\cobj}\cupdot\cheap\restrcodom{\cobj})\drestr{\cid\cup\caddrr}{\cval} \\
		& \quad \cupddot
		\explodefun(\cstack\restrcodom{\cobj}\cupdot\cheap\restrcodom{\cobj})\drestr{\cid\cup\caddrr}{\caddrp}
		)),\\
		& \quad
		( 
		(\cstack\restr{\caddrh}{\cval} \cupdot \cheap\restr{\caddrh}{\cval}) \cupdot
		(\cstack\restr{\caddrh}{\caddr} \cupdot \cheap\restr{\caddrh}{\caddr}) \\
		& \quad \cupdot
		\implodefun
		(\explodefun(\cstack\restrcodom{\cobj}\cupdot\cheap\restrcodom{\cobj})\drestr{\caddrh}{\cval} \\
		& \quad \cupddot
		\explodefun(\cstack\restrcodom{\cobj}\cupdot\cheap\restrcodom{\cobj})\drestr{\caddrh}{\caddrp}
		)))
		& \Lbag\text{distr.}\restriction\text{ over }\cupdot\Rbag\\
	=\; & (( 
		\cstack\restrcodom{\cval} \cupdot
		\cstack\restrcodom{\caddr} \cupdot
		\implodefun
		(\explodefun(\cstack\restrcodom{\cobj}\cupdot\cheap\restrcodom{\cobj})\drestr{\cid\cup\caddrr}{\cval} \\
		& \quad \cupddot
		\explodefun(\cstack\restrcodom{\cobj}\cupdot\cheap\restrcodom{\cobj})\drestr{\cid\cup\caddrr}{\caddrp}
		)),\\
		& \quad
		( 
		\cheap\restrcodom{\cval} \cupdot
		\cheap\restrcodom{\caddr} \cupdot
		\implodefun
		(\explodefun(\cstack\restrcodom{\cobj}\cupdot\cheap\restrcodom{\cobj})\drestr{\caddrh}{\cval} \\
		& \quad \cupddot
		\explodefun(\cstack\restrcodom{\cobj}\cupdot\cheap\restrcodom{\cobj})\drestr{\caddrh}{\caddrp}
		)))
		& \Lbag\text{def. }\cstack, \cheap\Rbag\\
		=\; & (( 
		\cstack\restrcodom{\cval} \cupdot
		\cstack\restrcodom{\caddr} \\
		& \quad \cupdot
		\implodefun
		((\explodefun(\cstack\restrcodom{\cobj})\cupdot\explodefun(\cheap\restrcodom{\cobj}))
		\drestr{\cid\cup\caddrr}{\cval} \\
		& \quad \cupddot
		(\explodefun(\cstack\restrcodom{\cobj})\cupdot\explodefun(\cheap\restrcodom{\cobj}))
		\drestr{\cid\cup\caddrr}{\caddrp}
		)),\\
		& \quad
		( 
		\cheap\restrcodom{\cval} \cupdot
		\cheap\restrcodom{\caddr} \\
		& \quad \cupdot \implodefun (
		(\explodefun(\cstack\restrcodom{\cobj})\cupdot\explodefun(\cheap\restrcodom{\cobj}))
		\drestr{\caddrh}{\cval} \\
		& \quad \cupddot
		(\explodefun(\cstack\restrcodom{\cobj})\cupdot\explodefun(\cheap\restrcodom{\cobj}))
		\drestr{\caddrh}{\caddrp}
		)))
		& \Lbag\Cref{lemma:explode-distributive}\Rbag\\
		=\; & (( 
		\cstack\restrcodom{\cval} \cupdot
		\cstack\restrcodom{\caddr} \cupdot
		\implodefun
		((\explodefun(\cstack\restrcodom{\cobj})\drestr{\cid\cup\caddrr}{\cval} \\
		& \quad \cupdot\explodefun(\cheap\restrcodom{\cobj})\drestr{\cid\cup\caddrr}{\cval}) \\
		& \quad \cupddot
		(\explodefun(\cstack\restrcodom{\cobj})\drestr{\cid\cup\caddrr}{\caddrp}\\
		& \quad \cupdot \explodefun(\cheap\restrcodom{\cobj})\drestr{\cid\cup\caddrr}{\caddrp})
		)),\\
		& \quad
		( 
		\cheap\restrcodom{\cval} \cupdot
		\cheap\restrcodom{\caddr} \\
		& \quad \cupdot \implodefun (
		(\explodefun(\cstack\restrcodom{\cobj})\drestr{\caddrh}{\cval}
		\cupdot \explodefun(\cheap\restrcodom{\cobj})\drestr{\caddrh}{\cval}) \\
		& \quad \cupddot
		(\explodefun(\cstack\restrcodom{\cobj})\drestr{\caddrh}{\caddrp}\\
		& \quad \cupdot \explodefun(\cheap\restrcodom{\cobj})\drestr{\caddrh}{\caddrp})
		)))
		& \Lbag\text{distr.}\doublerestriction\text{ over }\cupdot\Rbag\\
		=\; & (( 
		\cstack\restrcodom{\cval} \cupdot
		\cstack\restrcodom{\caddr} \\
		& \quad \cupdot
		\implodefun
		(\explodefun(\cstack\restrcodom{\cobj})\drestrcodom{\cval} \cupddot
		\explodefun(\cstack\restrcodom{\cobj})\drestrcodom{\caddrp})),\\
		& \quad
		( 
		\cheap\restrcodom{\cval} \cupdot
		\cheap\restrcodom{\caddr} \\
		& \quad \cupdot \implodefun (
		\explodefun(\cheap\restrcodom{\cobj})\drestrcodom{\cval}
		\cupddot \explodefun(\cheap\restrcodom{\cobj})\drestrcodom{\caddrp})))
		& \Lbag\Cref{lemma:explode-key-generation}\Rbag\\
	=\; & (( 
		\cstack\restrcodom{\cval} \cupdot
		\cstack\restrcodom{\caddr} \cupdot
		\implodefun
		(\explodefun(\cstack\restrcodom{\cobj})\drestrcodom{\cval\cup\caddrp})),\\
		& \quad
		( 
		\cheap\restrcodom{\cval} \cupdot
		\cheap\restrcodom{\caddr} \cupdot
		\implodefun (
		\explodefun(\cheap\restrcodom{\cobj})\drestrcodom{\cval\cup\caddrp})))
		& \Lbag\text{codom. add. }\doublerestriction\Rbag\\
	=\; & (( 
		\cstack\restrcodom{\cval} \cupdot
		\cstack\restrcodom{\caddr} \cupdot
		\implodefun(\explodefun(\cstack\restrcodom{\cobj}))),\\
		& \quad
		( 
		\cheap\restrcodom{\cval} \cupdot
		\cheap\restrcodom{\caddr} \cupdot
		\implodefun (\explodefun(\cheap\restrcodom{\cobj}))))
		&\Lbag \text{def. }\explodefun, \\
		 &&\text{def. }\tilde{\cobj}\Rbag\\
	=\; & (( 
		\cstack\restrcodom{\cval} \cupdot
		\cstack\restrcodom{\caddr} \cupdot
		\cstack\restrcodom{\cobj}),\\
		& \quad
		( 
		\cheap\restrcodom{\cval} \cupdot
		\cheap\restrcodom{\caddr} \cupdot
		\cheap\restrcodom{\cobj}))
		& \Lbag\Cref{lemma:explode-implode}\Rbag\\
	=\; & ( 
		\cstack,
		\cheap)
		& \Lbag\text{def. }\cstack, \cheap\Rbag\\
	=\; & \cstate
	    & \Lbag\text{def. }\cstate\Rbag~\qed \\
\end{flalign*}

Then, for each $\Cset$, we have:
\begin{flalign*}
	& \gammasplit(\alphasplit(\Cset)) & \\
	=\; & \gammasplit(\alphasplit(\{\cstate_0, \dots, \cstate_n\}))
	    & \Lbag\text{set expansion}\Rbag\\
	=\; & \gammasplit(\{\alphasplitdot(\cstate_0), \dots, \alphasplitdot(\cstate_n)\})
	    & \Lbag\text{appl. } \alphasplit\Rbag\\
	=\; & \{\gammasplitdot(\alphasplitdot(\cstate_0)), \dots, \gammasplitdot(\alphasplitdot(\cstate_n))\}
	    & \Lbag\text{appl. } \gammasplit\Rbag\\
	=\; & \{\cstate_0, \dots, \cstate_n\}
	    & \Lbag\text{hypothesis}\Rbag \\
	=\; & \Cset
	    & \Lbag\text{set compression}\Rbag~\qed \\
\end{flalign*}

\paragraph{Proof of Composition of $\gammasplit$ and $\alphasplit$.}
Let $\Sset = \{\sstate_0, \dots, \sstate_n\}$ be a set of split states, where each
split state $\sstate$ is defined as $\sstate = (\sval, \smem) =
((\sval_{ih}, \sval_{f}), (\smem_{ih}, \smem_{f}))$. For each $\Sset$, it must hold that
$\alphasplit(\gammasplit(\Sset)) = \Sset$.

We can assume for each $\sstate \in \Sset$ that
$\alphasplitdot(\gammasplitdot(\sstate)) = \sstate$.
\begin{flalign*}
	& \alphasplitdot(\gammasplitdot(\sstate)) && \\
	=\; & \alphasplitdot(
		(\sval_{ih}\restrdom{\cid}\cupdot \smem_{ih}\restrdom{\cid}
		\cupdot \implodefun(\sval_{f}\restrdom{\cid\cup\caddrr}\cupddot
		\smem_{f}\restrdom{\cid\cup\caddrr})),\\
	& \quad (\sval_{ih}\restrdom{\caddrh}\cupdot \smem_{ih}\restrdom{\caddrh}
		\cupdot \implodefun(\sval_{f}\restrdom{\caddrh}\cupddot
		\smem_{f}\restrdom{\caddrh})))
	& \Lbag\text{appl. } \gammasplitdot\Rbag\\
	=\; & 
		((\sval_{ih}\restrdom{\cid}\cupdot \smem_{ih}\restrdom{\cid}\\
	& \quad \cupdot \implodefun(\sval_{f}\restrdom{\cid\cup\caddrr}\cupddot
		\smem_{f}\restrdom{\cid\cup\caddrr}))\restrcodom{\cval}\\
	& \quad \cupdot
		(\sval_{ih}\restrdom{\caddrh}\cupdot \smem_{ih}\restrdom{\caddrh}\\
	& \quad \cupdot \implodefun(\sval_{f}\restrdom{\caddrh}\cupddot
		\smem_{f}\restrdom{\caddrh}))\restrcodom{\cval},\\
	& \quad \explodefun(
		(\sval_{ih}\restrdom{\cid}\cupdot \smem_{ih}\restrdom{\cid}\\
	& \quad \cupdot \implodefun(\sval_{f}\restrdom{\cid\cup\caddrr}\cupddot
		\smem_{f}\restrdom{\cid\cup\caddrr}))\restrcodom{\cobj}\\
	& \quad \cupddot
		(\sval_{ih}\restrdom{\caddrh}\cupdot \smem_{ih}\restrdom{\caddrh} \\
	& \quad	\cupdot \implodefun(\sval_{f}\restrdom{\caddrh}\cupddot
		\smem_{f}\restrdom{\caddrh}))\restrcodom{\cobj}
		)\drestrcodom{\cval}),\\
	& \quad 
		((\sval_{ih}\restrdom{\cid}\cupdot \smem_{ih}\restrdom{\cid}\\
	& \quad \cupdot \implodefun(\sval_{f}\restrdom{\cid\cup\caddrr}\cupddot
		\smem_{f}\restrdom{\cid\cup\caddrr}))\restrcodom{\caddr}\\
	& \quad \cupdot
		(\sval_{ih}\restrdom{\caddrh}\cupdot \smem_{ih}\restrdom{\caddrh}\\
	& \quad \cupdot \implodefun(\sval_{f}\restrdom{\caddrh}\cupddot
		\smem_{f}\restrdom{\caddrh}))\restrcodom{\caddr},\\
	& \quad \explodefun(
		(\sval_{ih}\restrdom{\cid}\cupdot \smem_{ih}\restrdom{\cid}\\
	& \quad \cupdot \implodefun(\sval_{f}\restrdom{\cid\cup\caddrr}\cupddot
		\smem_{f}\restrdom{\cid\cup\caddrr}))\restrcodom{\cobj}\\
	& \quad \cupddot
		(\sval_{ih}\restrdom{\caddrh}\cupdot \smem_{ih}\restrdom{\caddrh}\\
	& \quad \cupdot \implodefun(\sval_{f}\restrdom{\caddrh}\cupddot
		\smem_{f}\restrdom{\caddrh}))\restrcodom{\cobj}
		)\drestrcodom{\caddrp}))
	& \Lbag\text{appl. } \alphasplitdot\Rbag\\
	=\; & 
		((\sval_{ih}\restr{\cid}{\cval}
		\cupdot \smem_{ih}\restr{\cid}{\cval}
		\cupdot \implodefun(\sval_{f}\restr{\cid\cup\caddrr}{\cval})
		\restrcodom{\cval}\\
	& \quad \cupdot \sval_{ih}\restr{\caddrh}{\cval}
		\cupdot \smem_{ih}\restr{\caddrh}{\cval} \\
		& \quad \cupdot \implodefun(\sval_{f}\restrdom{\caddrh}\cupddot
		\smem_{f}\restrdom{\caddrh})\restrcodom{\cval},\\
	& \quad \explodefun(
		(\sval_{ih}\restr{\cid}{\cobj}
		\cupdot \smem_{ih}\restr{\cid}{\cobj} \\
		& \quad \cupdot \implodefun(\sval_{f}\restrdom{\cid\cup\caddrr}\cupddot
		\smem_{f}\restrdom{\cid\cup\caddrr})\restrcodom{\cobj})\\
	& \quad \cupddot
		(\sval_{ih}\restr{\caddrh}{\cobj}
		\cupdot \smem_{ih}\restr{\caddrh}{\cobj} \\
		& \quad \cupdot \implodefun(\sval_{f}\restrdom{\caddrh}\cupddot
		\smem_{f}\restrdom{\caddrh})\restrcodom{\cobj})
		)\drestrcodom{\cval}),\\
	& \quad
		(\sval_{ih}\restr{\cid}{\caddr}
		\cupdot \smem_{ih}\restr{\cid}{\caddr} \\
		& \quad \cupdot \implodefun(\sval_{f}\restrdom{\cid\cup\caddrr}\cupddot
		\smem_{f}\restrdom{\cid\cup\caddrr})\restrcodom{\caddr}\\
	& \quad \cupdot \sval_{ih}\restr{\caddrh}{\caddr}
		\cupdot \smem_{ih}\restr{\caddrh}{\caddr} \\
		& \quad \cupdot \implodefun(\sval_{f}\restrdom{\caddrh}\cupddot
		\smem_{f}\restrdom{\caddrh})\restrcodom{\caddr},\\
	& \quad \explodefun(
		(\sval_{ih}\restr{\cid}{\cobj}
		\cupdot \smem_{ih}\restr{\cid}{\cobj}\\
		& \quad \cupdot \implodefun(\sval_{f}\restrdom{\cid\cup\caddrr}\cupddot
		\smem_{f}\restrdom{\cid\cup\caddrr})\restrcodom{\cobj})\\
	& \quad \cupddot
		(\sval_{ih}\restr{\caddrh}{\cobj}
		\cupdot \smem_{ih}\restr{\caddrh}{\cobj} \\
		& \quad \cupdot \implodefun(\sval_{f}\restrdom{\caddrh}\cupddot
		\smem_{f}\restrdom{\caddrh})\restrcodom{\cobj})
		)\drestrcodom{\caddrp}))
	& \Lbag\text{distr.}\restriction\text{ over }\cupdot\Rbag\\
	=\; & 
		((\sval_{ih}\restrdom{\cid}\cupdot \sval_{ih}\restrdom{\caddrh},\\
	& \quad \explodefun(
		\implodefun(\sval_{f}\restrdom{\cid\cup\caddrr}\cupddot
		\smem_{f}\restrdom{\cid\cup\caddrr})\\
	& \quad \cupddot
		\implodefun(\sval_{f}\restrdom{\caddrh}\cupddot
		\smem_{f}\restrdom{\caddrh}))\drestrcodom{\cval}),\\
	& \quad (\smem_{ih}\restrdom{\cid}\cupdot \smem_{ih}\restrdom{\caddrh},\\
	& \quad \explodefun(
		\implodefun(\sval_{f}\restrdom{\cid\cup\caddrr}\cupddot
		\smem_{f}\restrdom{\cid\cup\caddrr})\\
	& \quad \cupddot
		\implodefun(\sval_{f}\restrdom{\caddrh}\cupddot
		\smem_{f}\restrdom{\caddrh}))\drestrcodom{\caddrp}))
	&\Lbag \text{def. }\implodefun, \\
		 &&\text{def. }\sval_{ih},\smem_{ih}\Rbag\\
	=\; & 
		((\sval_{ih}\restrdom{\cid}\cupdot \sval_{ih}\restrdom{\caddrh},
	\explodefun(\implodefun(
		\sval_{f}\restrdom{\cid\cup\caddrr}\\
		& \quad \cupddot\smem_{f}\restrdom{\cid\cup\caddrr}
		\cupddot
		\sval_{f}\restrdom{\caddrh}\cupddot\smem_{f}\restrdom{\caddrh}
		))\drestrcodom{\cval}),\\
	& \quad (\smem_{ih}\restrdom{\cid}\cupdot \smem_{ih}\restrdom{\caddrh},
	\explodefun(\implodefun(
		\sval_{f}\restrdom{\cid\cup\caddrr}\\
		& \quad \cupddot\smem_{f}\restrdom{\cid\cup\caddrr}
		\cupddot
		\sval_{f}\restrdom{\caddrh}\cupddot\smem_{f}\restrdom{\caddrh}
		))\drestrcodom{\caddrp}))
	& \Lbag\Cref{lemma:implode-distributive}\Rbag\\
	=\; & 
		((\sval_{ih}\restrdom{\cid}\cupdot \sval_{ih}\restrdom{\caddrh},
		\explodefun(\implodefun(\sval_{f}\cupddot\smem_{f}))\drestrcodom{\cval}),\\
	& \quad (\smem_{ih}\restrdom{\cid}\cupdot \smem_{ih}\restrdom{\caddrh},
		\explodefun(\implodefun(\sval_{f}\cupddot\smem_{f}))\drestrcodom{\caddrp}))
		&\Lbag \text{dom. add. }\doublerestriction, \\
		 &&\text{def. }\sval, \smem\Rbag\\
	=\; & 
		((\sval_{ih}\restrdom{\cid}\cupdot \sval_{ih}\restrdom{\caddrh},
		(\sval_{f}\cupddot\smem_{f})\drestrcodom{\cval}),\\
	& \quad (\smem_{ih}\restrdom{\cid}\cupdot \smem_{ih}\restrdom{\caddrh},
		(\sval_{f}\cupddot\smem_{f})\drestrcodom{\caddrp}))
	& \Lbag\Cref{lemma:explode-implode}\Rbag \\
	=\; & 
		((\sval_{ih}\restrdom{\cid}\cupdot \sval_{ih}\restrdom{\caddrh},\ \sval_{f}),
	 (\smem_{ih}\restrdom{\cid}\cupdot \smem_{ih}\restrdom{\caddrh},\ \smem_{f}))
	& \Lbag\text{distr.}\doublerestriction\text{ over }\cupddot\Rbag\\
	=\; & 
		((\sval_{ih},\ \sval_{f}),\ (\smem_{ih},\ \smem_{f}))
	&\Lbag \text{dom. add. }\restriction, \\
		 &&\text{def. }\sval, \smem\Rbag\\
	=\; & \sstate
	& \Lbag\text{def. } \sstate\Rbag~\qed \\
\end{flalign*}

Then, for each $\Sset$, we have:
\begin{flalign*}
	& \alphasplit(\gammasplit(\Sset)) & \\
	=\; & \alphasplit(\gammasplit(\{ \sstate_0, \dots, \sstate_n \}))
	    & \Lbag\text{set expansion}\Rbag\\
	=\; & \alphasplit(\{\gammasplitdot(\sstate_0), \dots, \gammasplitdot(\sstate_n)\})
	    & \Lbag\text{appl. } \gammasplit\Rbag\\
	=\; & \{\alphasplitdot(\gammasplitdot(\sstate_0)), \dots, \alphasplitdot(\gammasplitdot(\sstate_n))\}
	    & \Lbag\text{appl. } \alphasplit\Rbag\\
	=\; & \{\sstate_0, \dots, \sstate_n\}
	    & \Lbag\text{hypothesis}\Rbag\\
	=\; & \Sset
	    & \Lbag\text{set compression}\Rbag~\qed \\
\end{flalign*}

\subsection{Proof of Monotonicity of \texorpdfstring{$\gammaA$}{Abstract Gamma} (\Cref{lemma:abs-gamma-monotone})}
\label{sec:proof-abs-gamma-monotone}

Let $\astate_1 = (\aval_1, \amem_1)$ and $\astate_2 = (\aval_2, \amem_2)$ with
$\astate_1 \sqsubseteq_{\Astate} \astate_2$.
By the component-wise definition of $\sqsubseteq_{\Astate}$,
this implies $\aval_1 \sqsubseteq_{\Aval} \aval_2$ and
$\amem_1 \sqsubseteq_{\Amem} \amem_2$.

\begin{flalign*}
		 & \astate_1 \sqsubseteq_{\Astate} \astate_2\\
\implies & \aval_1 \sqsubseteq_{\Aval} \aval_2 \wedge \amem_1 \sqsubseteq_{\Amem} \amem_2
		 & \Lbag \text{def. }\sqsubseteq_{\Astate}\Rbag\\
\implies & \gammaAV(\aval_1) \subseteq \gammaAV(\aval_2) \wedge \gammaAM(\amem_1) \subseteq \gammaAM(\amem_2)
		 & \Lbag\gammaAV, \gammaAM \text{ monot.}\Rbag \\
\implies & \gammaAV(\aval_1) \subseteq \gammaAV(\aval_1) \cup (\gammaAV(\aval_2) \setminus\gammaAV(\aval_1)) \\
         & \quad \wedge \gammaAM(\amem_1) \subseteq \gammaAM(\amem_1) \cup (\gammaAM(\amem_2) \setminus\gammaAM(\amem_1))
		 & \Lbag \text{def. }\subseteq\Rbag\\
\implies & \{(\sval, \smem) : (\sval_{id}, \sval_{mid}) \in \gammaAV(\aval_1) \\
		 &\quad\wedge (\smem, \gammaID, \gammaIDF) \in \gammaAM(\amem_1)\\
	 	 &\quad\wedge \sval = \sval_{id} \cupdot \rhobase(\sval_{mid}, \gammaID) \cupdot \rhofield(\sval_{mid}, \gammaIDF)\}\\
		 &\;\subseteq \{(\sval, \smem) : (\sval_{id}, \sval_{mid}) \in \gammaAV(\aval_1) \\
		 &\quad\wedge (\smem, \gammaID, \gammaIDF) \in \gammaAM(\amem_1)\\
		 &\quad\wedge \sval = \sval_{id} \cupdot \rhobase(\sval_{mid}, \gammaID) \cupdot \rhofield(\sval_{mid}, \gammaIDF)\}\\
		 &\quad\cup \{(\sval, \smem) : (\sval_{id}, \sval_{mid}) \in \gammaAV(\aval_2) \setminus \gammaAV(\aval_1) \\
		 &\quad\wedge (\smem, \gammaID, \gammaIDF) \in \gammaAM(\amem_2) \setminus \gammaAM(\amem_1)\\
		 &\quad\wedge \sval = \sval_{id} \cupdot \rhobase(\sval_{mid}, \gammaID) \cupdot \rhofield(\sval_{mid}, \gammaIDF)\}
		 &\Lbag \rhobase, \rhofield \text{ monot.}\Rbag\\
\implies & \{(\sval, \smem) : (\sval_{id}, \sval_{mid}) \in \gammaAV(\aval_1) \\
		 &\quad\wedge (\smem, \gammaID, \gammaIDF) \in \gammaAM(\amem_1)\\
	     & \quad\wedge \sval = \sval_{id} \cupdot \rhobase(\sval_{mid}, \gammaID) \cupdot \rhofield(\sval_{mid}, \gammaIDF)\}\\
		 &\;\subseteq \{(\sval, \smem) : (\sval_{id}, \sval_{mid}) \in \gammaAV(\aval_2) \\
		 &\quad\wedge (\smem, \gammaID, \gammaIDF) \in \gammaAM(\amem_2)\\
		 &\quad\wedge \sval = \sval_{id} \cupdot \rhobase(\sval_{mid}, \gammaID) \cupdot \rhofield(\sval_{mid}, \gammaIDF)\}
		 &\Lbag \gammaAV(\aval_1) \subseteq \gammaAV(\aval_2), \\
		 &&\gammaAM(\amem_1) \subseteq \gammaAM(\amem_2)\Rbag\\
\implies & \gammaA(\astate_1) \subseteq \gammaA(\astate_2)
		 &\Lbag \text{def. }\gammaA\Rbag\\
\end{flalign*}

\subsection{Proof of Split and Concrete Semantics Equivalence (\Cref{thm:split-concrete-semantics-equivalence})}
\label{sec:proof-split-concrete-semantics-equivalence}
\begin{flalign*}
	&\gammasplit(\ssem{\stmt}\sstate)\\
 =\;& \gammasplit(\alphasplit(\csem{\stmt} \gammasplit (\sstate)))
    & \Lbag\text{def }\ssem{\stmt}\Rbag \\
 =\;& \csem{\stmt} \gammasplit(\sstate)
	& \Lbag\gammasplit \circ \alphasplit = \idfun\Rbag\\
\end{flalign*}
\begin{flalign*}
	& \gammasplit(\ssem{\texttt{ASM}(\bexp)}\sstate)\\
 =\;& \gammasplit(\alphasplit(\csem{\texttt{ASM}(\bexp)} \gammasplit (\sstate)))
	& \Lbag\text{def }\ssem{\texttt{ASM}(\bexp)}\Rbag \\
 =\;& \csem{\texttt{ASM}(\bexp)} \gammasplit(\sstate)
	& \Lbag\gammasplit \circ \alphasplit = \idfun\Rbag\\
\end{flalign*}
\begin{flalign*}
	& \ssem{\exp}(\sstate)\\
 =\;& \csem{\exp}(\gammasplit(\sstate))
    & \Lbag\text{def }\ssem{\exp}\Rbag\\
\end{flalign*}

\subsection{Proof of Abstract Semantic Soundness}
\label{sec:proof-abstract-soundness}

Before proving soundness, we first show the correctness of function $\Rfun$
used in the abstract semantics of assignment. Let us define an auxiliary function
$\texttt{eval} : (\rhss \cup \caddrp \cup (\caddrp \cup \cid) \times \Sigma^*) \times \wp(\Sstate) \to \wp(\cval \cup \caddrp) \times \wp(\Sstate)$ that
evaluates a right-hand side expression, and address, or a pair of address and field name in a set of split states. This
function mimicks the concrete expression evaluation without leaving the split state:
the values it produces are concrete, but are paired with a side-effected state
that stores new relations between variables and addresses. $\texttt{eval}$ is defined as:
\begin{flalign*}
	\texttt{eval}(\aexp \mid \sexp \mid \bexp, \{\sstate^i\}_{i=1}^n) = & \{(v_i, \sstate^i)_{i=1}^n \mid \csem{\aexp \mid \sexp \mid \bexp} \circ \gammasplit(\sstate^i) = v_i\} \\
	\texttt{eval}(\emptyobj, \{\sstate^i\}_{i=1}^n) = & \{(r, \sstate_1^i)_{i=1}^n \mid r \in \caddrr \\
													& \land\; \sstate_1^i = (\nu^i_{ih}, \nu^i_f[r \mapsto \{\}], \mu^i_{ih}, \mu^i_f[r \mapsto \{\}])\} \\
	\texttt{eval}(\{f_k:\fexp_k\}_{k=1}^j, \{\sstate^i\}_{i=1}^n) = & \{(r, \sstate_1^i)_{i=1}^n \mid r \in \caddrr \\
																  & \land\; \csem{\{f_k:\fexp_k\}_{k=1}^j} \circ \gammasplit(\sstate^i) = \{f_k:v^{i}_k\}_{k=1}^j \\
																  & \land\; \sstate_1^i = (\nu^i_{ih}, \nu^i_f[r \mapsto \{f_k:v^{i}_k\}_{k=1}^j], \\
																  & \quad \mu^i_{ih}, \mu^i_f[r \mapsto \{\}])\} \\
	\texttt{eval}(\new \emptyobj, \{\sstate^i\}_{i=1}^n) =  & \{(h, \sstate_1^i)_{i=1}^n \mid h \in \caddrh \\
														 & \land\; \sstate_1^i = (\nu^i_{ih}, \nu^i_f[h \mapsto \{\}], \mu^i_{ih}, \mu^i_f[h \mapsto \{\}])\} \\
	\texttt{eval}(\new \{f_k:\fexp_k\}_{k=1}^j, \{\sstate^i\}_{i=1}^n) = & \{(h, \sstate_1^i)_{i=1}^n \mid h \in \caddrh \\
																	   & \land\; \csem{\{f_k:\fexp_k\}_{k=1}^j} \circ \gammasplit(\sstate^i) = \{f_k:v^{i}_k\}_{k=1}^j \\
																	   & \land\; \sstate_1^i = (\nu^i_{ih}, \nu^i_f[h \mapsto \{f_k:v^{i}_k\}_{k=1}^j], \\
																	   & \quad \mu^i_{ih}, \mu^i_f[h \mapsto \{\}])\} \\
	\texttt{eval}(\new \aexp \mid \sexp \mid \bexp, \{\sstate^i\}_{i=1}^n) = & \{(h, \sstate_1^i)_{i=1}^n \mid h \in \caddrh \\
																		   & \land\; \csem{\aexp \mid \sexp \mid \bexp} \circ \gammasplit(\sstate^i) = v_i\} \\
																		   & \land\; \sstate_1^i = (\nu^i_{ih}, \nu^i_f, \mu^i_{ih}[h \mapsto v_i], \mu^i_f)\} \\
	\texttt{eval}(\addrof\texttt{x}, \{\sstate^i\}_{i=1}^n)  = & \{(v_i, \sstate^i)_{i=1}^n \mid \csem{\addrof\texttt{x}} \circ \gammasplit(\sstate^i) = v_i\} \\
	\texttt{eval}(\deref\pexp, \{\sstate^i\}_{i=1}^n)  = & \{(v_2^i, \sstate_1^i)_{i=1}^n \mid \texttt{eval}(\pexp, \{\sstate^i\}_{i=1}^n) = \{(v_1^i, \sstate_1^i) \}_{i=1}^n\\
																 & \land\; ((v_1^i \in \dom{\nu_{1ih}^i} \land \nu_{1ih}^i(v_1^i) = v_2^i)\\
																 & \lor\; (v_1^i \in \dom{\mu_{1ih}^i} \land \mu_{1ih}^i(v_1^i) = v_2^i))\} \\
	\texttt{eval}(\pexp.f, \{\sstate^i\}_{i=1}^n)  = & \{(v_2^i, \sstate_1^i)_{i=1}^n \mid \texttt{eval}(\pexp, \{\sstate^i\}_{i=1}^n) = \{(v_1^i, \sstate_1^i) \}_{i=1}^n\\
																 & \land\; ((v_1^i \in \dom{\nu_{1f}^i} \land \nu_{1f}^i(v_1^i)(f) = v_2^i)\\
																 & \lor\; (v_1^i \in \dom{\mu_{1f}^i} \land \mu_{1f}^i(v_1^i)(f) = v_2^i))\} \\
	\texttt{eval}(\pexp[\sexp], \{\sstate^i\}_{i=1}^n)  = & \{(v_2^i, \sstate_1^i)_{i=1}^n \mid \texttt{eval}(\pexp, \{\sstate^i\}_{i=1}^n) = \{(v_1^i, \sstate_1^i) \}_{i=1}^n\\
															 & \land\; \csem{\sexp} \circ \gammasplit(\sstate_1^i) = f_i\\
																 & \land\; ((v_1^i \in \dom{\nu_{1f}^i} \land \nu_{1f}^i(v_1^i)(f_i) = v_2^i)\\
																 & \lor\; (v_1^i \in \dom{\mu_{1f}^i} \land \mu_{1f}^i(v_1^i)(f_i) = v_2^i))\} \\
\end{flalign*}

Intuitively, whenever no new bindings need to be introduced into the split
state, $\texttt{eval}$ delegates to the concrete semantics.
Instead, when evaluating an object literal or a new expression, $\texttt{eval}$
produces a fresh address and updates the split state to reflect the new
bindings between this address and the values of the fields (for object
literals) or the value of the initializer (for new expressions).

	Additionally, let us define $\tilde{\gamma} : \wp(\MemID) \times \Amem \to \wp(\caddrp)$ as:
	$$
	\tilde{\gamma}(I, \amem) = \bigcup_{i \in I} \left\{
		l ~\middle\vert~ \begin{array}{l}
			(\smem, \gammaID, \gammaIDF) \in \gammaAM(\amem) \land (i \in \dom{\gammaID} \\
			\;\land\; l \in \gammaID(i)) \;\lor\; (i \in \dom{\gammaIDF} \land (l, f) \in \gammaIDF(i))
		\end{array}
	\right\}
	$$
	Function $\tilde{\gamma}$ resolves all memory identifiers in $I$ to their
	corresponding concrete locations, based on the memory abstraction $\amem$.

\begin{lemma}[Correctness of $\Rfun$]
\label{lemma:Rfun-sound}
Function $\Rfun$ is sound, that is, for any $\rhs \in \rhss$ and $\astate \in
\Astate$, if $\Rfun(\rhs, \aval, \amem) = (I, \aval_1, \amem_1)$ and
$\texttt{eval}(\rhs, \gammaA(\astate)) = \{(v_i, \sstate_i)\}_{i=1}^n$, then:
\begin{enumerate}
	\item $\{\sstate_i\}_{i=1}^n \subseteq \gammaA((\aval_1, \amem_1))$;
	
	\item $\{v_i\}_{i=1}^n \setminus \caddrp \subseteq \{v' \mid \exists i \in I \cap \vexps : (v', \sstate') \in \mathtt{eval}(i, \gammaA(\aval_1, \amem_1)) \}$;
	\item $\{v_i\}_{i=1}^n \cap      \caddrp \subseteq \{v' \mid v' \in \tilde{\gamma}(I \setminus \vexps, \amem_1) \}$.
	
\end{enumerate}
\end{lemma}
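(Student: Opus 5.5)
We will prove the three claims simultaneously by structural induction on $\rhs$, following exactly the clauses of $\Rfun$. The invariant threaded through the induction is the triple of properties (1)--(3) for the intermediate result $(I, \aval_1, \amem_1)$ of each sub-call. Composite clauses ($\deref\pexp$, $\pexp.f$, $\pexp[\sexp]$, object literals) then reduce to the inductive hypothesis on the sub-expression, plus one framework requirement.

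\textbf{Base cases.} For $\texttt{x}$ and for value expressions $\aexp\mid\bexp\mid\sexp$, $\Rfun$ returns the state unchanged and $I=\{\rhs\}$. By definition $\texttt{eval}$ also leaves the split states unchanged, so (1) is immediate. For (2), a value $v_i$ is produced by $\csem{\rhs}\circ\gammasplit$, which is precisely $\texttt{eval}(\rhs,\cdot)$ on the same set. For (3), any address value held by $\texttt{x}$ lies in $\smem_{ih}(\texttt{x})$; since $\texttt{x}\in I\setminus\vexps$, we read $\tilde\gamma$ on a variable as the locations in its points-to set. This step needs a mild convention and is best fixed explicitly at the start. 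For $\addrof\texttt{x}$, the state is unchanged and the value is $a_x$. By \ref{cond:M4}, $\pointedbyfun(\texttt{x},\amem)$ is the identifier $i$ with $\mathsf{locs}(i)=\{a_x\}$, hence $a_x\in\tilde\gamma(I,\amem)$.

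\textbf{Allocation cases.} For $\emptyobj$, $\new\emptyobj$ and $\new\,\aexp|\bexp|\sexp$, the value is a fresh address ($r\in\caddrr$ or $h\in\caddrh$), and the split state gains one binding for it. By \ref{cond:M6}, $\freshSsplit$/$\freshHsplit$ returns an identifier $\ell\notin\memidfunAM(\amem)$ and a state $\amem_1\sqsupseteq\amem$ tracking it. We choose the witness $(\smem,\gammaID,\gammaIDF)\in\gammaAM(\amem_1)$ extending the old one with $\gammaID(\ell)=\{h\}$ (resp.\ $r$), using monotonicity of $\gammaAM$ (\Cref{lemma:abs-gamma-monotone}). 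This gives (3). For (1), in the $\new\,\aexp$ case we additionally use soundness of $\assignfunAV$: $\rhobase$ then places the value $v_i$ at $h$. For literals with fields, we iterate over $i=1..n$ with an inner induction. Each step uses \ref{cond:M5} to obtain field identifiers whose $\mathsf{locs}$ covers $(\ell,f_i)$. Then the soundness of $\applysubfun$ (via \ref{cond:M7}, \ref{cond:M8} and \cite[Lem.~C.1]{Ferrara16Framework}) and of $\assignfunAV$ (weak or strong, justified by \ref{cond:M7}) show that $\rhofield$ reproduces $\nu_f[\ell\mapsto\{f_k:v_k\}]$ inside $\gammaA(\aval_n,\amem_n)$. \ref{cond:C2} guarantees that these bindings do not collide.

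\textbf{Access cases.} For $\deref\pexp$, we apply the inductive hypothesis to $\pexp$. The pointer value $v_1^i$ then lies in $\tilde\gamma(I,\amem_1)$, so $v_1^i\in\mathsf{locs}(i)$ for some $i\in I$. By \ref{cond:M3}, $\pointstofun(I,\amem_1)$ covers the same locations, and the value read there is either an address, covered by (3), or a value stored under an identifier in $\aval_1$, covered by (2) through $\rhobase$. The state is unchanged beyond the inductive hypothesis, so (1) carries over. For $\pexp.f$ and $\pexp[\sexp]$, we use the inductive hypothesis, then \ref{cond:M5} for $\accessfun$, with the dynamic overload relying on $\evalfun$ over-approximating $f_i$. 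Finally, $\amem_1\sqsubseteq\amem_2$ together with monotonicity and the soundness of $\applysubfun$ gives (1) for $(\applysubfun(\sub,\aval_1,\amem_2),\amem_2)$.

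\textbf{Main obstacle.} We expect the hardest step to be (1) in the object-literal cases. There, the $\aval$ and $\amem$ components evolve in interleaved fashion, and we must exhibit, at each step, a single compatible witness triple $(\smem,\gammaID,\gammaIDF)$ and a value pair $(\sval_{id},\sval_{mid})$ whose $\rhobase/\rhofield$ resolution yields exactly the updated split state. The plan is to carry the witness explicitly through the inner induction, extending $\gammaIDF$ at each $\accessfun$ call. Disjointness of the resulting $\cupdot$ follows from \ref{cond:C1}--\ref{cond:C2} and the freshness in \ref{cond:M6}.
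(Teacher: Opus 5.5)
Your proposal is essentially the paper's proof. Both use structural induction on $\rhs$ along the clauses of $\Rfun$, and you discharge each case with the same ingredients: \ref{cond:M4} for $\addrof\texttt{x}$, \ref{cond:M6} for the allocations, \ref{cond:M5} with soundness of $\applysubfun$ and $\assignfunAV$ for literal fields, and \ref{cond:M3} resp.\ \ref{cond:M5}/$\evalfun$ for the access cases. Like the paper's ``w.l.o.g.'', you simply posit a witness in $\gammaAM(\amem_1)$ that binds the fresh $\ell$; monotonicity of $\gammaAM$ only keeps the old witnesses and does not by itself supply one.

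The one real divergence is the variable base case. The paper declares condition~3 vacuous because $x\in\aexps\subseteq\vexps$, whereas you notice that a pointer-valued $x$ yields addresses and patch this with a convention for $\tilde{\gamma}$ on variables. That convention is not licensed by the statement: $\tilde{\gamma}$ resolves only memory identifiers, and $x\in\vexps$, so $x\notin I\setminus\vexps$. Strictly, then, you prove a reformulated lemma. But it is exactly what the $\deref\pexp$ case needs when $\pexp=x$, where the paper's appeal to the inductive hypothesis is otherwise unsupported.
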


Informally, the soundness of $\Rfun$ states that, if the same expression
was to be evaluated on the concretized split state, (i) the resulting abstract state
correctly model all side effects, (ii) when evaluating the expression produces concrete values, they
are included in the ones of the evaluation of the rewritten expressions, and (iii)
when the expression produces addresses, they
are included in the ones of the evaluation of the rewritten expressions.

\begin{proof}
	
	The proof proceeds by induction on the structure of $\rhs$.

	\paragraph{Base Case 1: $\rhs = x \mid \aexp \mid \sexp \mid \bexp$.} In this case, we have
	$\Rfun(\rhs, \aval, \amem) = ( \{ \rhs \}, \aval, \amem)$, and
	$\texttt{eval}(\rhs, \gammaA(\astate)) = \{(v_i, \sstate_i)\}_{i=1}^n$
	where $v_i$ is the value of $\rhs$ in $\sstate_i$.
	As both $\Rfun$ and $\texttt{eval}$ are side-effect free,
	$\{\sstate_i\}_{i=1}^n \subseteq \gammaA((\aval, \amem))$
	trivially holds, satisfying condition 1.
	Moreover, since $\rhs \in \vexps$, its evaluation only produces
	concrete values: condition 3 is satisfied as the left-hand side is empty,
	and condition 2 is satisfied as the $\texttt{eval}$ call in the right-hand side
	is applied to the same arguments.

	\paragraph{Base Case 2: $\rhs = \mathtt{\&}\texttt{x}$.} Here,
	$\Rfun(\rhs, \aval, \amem) = ( \{ \pointedbyfun(\texttt{x}, \amem) \}, \aval, \amem)$, and
	$\texttt{eval}(\rhs, \gammaA(\astate)) = \{(a_x, \sstate_i)\}_{i=1}^n$
	for each $\sstate_i \in \gammaA(\astate)$.
	As in the previous case, both $\Rfun$ and $\texttt{eval}$ are side-effect free,
	and thus condition 1 trivially holds.
	Furthermore, condition 2 is satisfied as the left-hand side is empty.
	By the soundness requirement of $\pointedbyfun$, we have that
	$\mathsf{locs}(\pointedbyfun(\texttt{x}, \amem)) = \{ a_x \}$, and
	thus $\tilde{\gamma}(\pointedbyfun(\texttt{x}, \amem), \amem) = \{ a_x \}$.
	Hence, condition 3 is satisfied with an equality.

	\paragraph{Base Case 3: $\rhs = \emptyobj$.}
	For empty object allocations, we have
	$\Rfun(\rhs, \aval, \amem) = ( \{ \ell \}, \aval, \amem_1)$ with $(\ell, \amem_1) = \freshSsplit(\amem)$, and
	$\texttt{eval}(\rhs, \gammaA(\astate)) = \{(r, \sstate_1^i)_{i=1}^n \mid r \in \caddrr \}$ with
	$\gammaA(\astate) = \{ (\nu^i_{ih}, \nu^i_f, \mu^i_{ih}, \mu^i_f) \}_{i=1}^n$ and
	$\sstate_1^i = (\nu^i_{ih}, \nu^i_f[r \mapsto \{\}], \mu^i_{ih}, \mu^i_f[r \mapsto \{\}])$.
	By the soundness requirement of $\freshSsplit$, we have that $\amem \sqsubseteq_{\Amem} \amem_1$,
	that $\ell \not\in \memidfunAM(\amem)$, and that
	$\memidfunAM(\amem_1) = \memidfunAM(\amem) \cup \{\ell\}$.
	W.l.o.g., we assume that $\tilde{\gamma}(\{\ell\}, \amem_1) = \{ r \}$,
	as both addresses are fresh. Condition 1 is satisfied as each $\sstate_1^i$ corresponds to
	the respective $\sstate_i$ with the new binding between $r$ and $\{\}$:
	since $\sstate_i \in \gammaA(\astate)$ and $\astate \sqsubseteq_{\Astate} (\aval, \amem_1)$,
	$\sstate_1^i$ is included in $\gammaA((\aval, \amem_1))$ if the new binding is included in the concretization of $\amem_1$.
	Being part of $\amem_1$, $\ell$ will appear in the domain of all
	$\gammaIDF$ generated by $\gammaAM(\amem_1)$, and thus $r$
	will be included in the concretization of $\amem_1$ as required by condition 1.
	Condition 2 is satisfied as the left-hand side is empty.
	Instead, the left-hand side of condition 3 is the set $\{ r \}$,
	that corresponds to the result if $\tilde{\gamma}$ as per our previous assumption.

	\paragraph{Base Case 4: $\rhs = \mathtt{new}~\emptyobj$.}
	This case is analogous to the previous one, but uses a $\freshHsplit$ instead of $\freshSsplit$.

	\paragraph{Base Case 5: $\rhs = \{f_k:\fexp_k\}_{k=1}^j$.}
	Here,
	$\Rfun(\rhs) = (\{\ell\}, \aval_j, \amem_j)$ where $(\ell, \amem_0) = \freshSsplit(\amem)$, and
	for each $i$ from 1 to $j$, $(\MId_i, \sub_i, \amem_i) = \allowbreak\accessfun(\{\ell\}, \texttt{f}_i, \amem_{i-1})$,
	$\aval_{s_i} = \applysubfun(\sub_i, \aval_{i-1}, \amem_i)$,
	and $\aval_i = \bigsqcup_{\mId \in \MId_i} \assignfunAV(\mId, \texttt{v}_i, \aval_{s_i}, \lnot\isSumfun(\mId, \amem_i))$.
	Instead, on the split state,
	$\texttt{eval}(\rhs, \gammaA(\astate)) = \{(r, \sstate_1^i)_{i=1}^n \mid r \in \caddrr \}$ where
	$\gammaA(\astate) = \{ (\nu^i_{ih}, \nu^i_f, \mu^i_{ih}, \mu^i_f) \}_{i=1}^n$ and
	$\csem{\{f_k:\fexp_k\}_{k=1}^j} \circ \gammasplit(\sstate^i) = \{f_k:v^i_k\}_{k=1}^j$ and
	$\sstate_1^i = (\nu^i_{ih}, \nu^i_f[r \mapsto \{f_k:v^i_k\}_{k=1}^j], \mu^i_{ih}, \mu^i_f[r \mapsto \{\}])\}$.
	Following the same logic of base case 3, we assume that $\tilde{\gamma}(\{\ell\}, \amem_1) = \{ r \}$.
	Note that, as shown in base case 3, the abstract memory yielded by $\freshSsplit$ is structured to
	correctly over-approximate an empty object corresponding to the location $\ell$.
	At each iteration of the loop, we have that:
	\begin{itemize}
		\item by the soundness requirement of $\accessfun$, the function
		produces an abstract memory where the field $\texttt{f}_i$ of $\ell$ is
		tracked and associated with a memory identifier in the returned set $I_i$,
		together with the necessary substitutions;
		\item by the soundness requirement of $\applysubfun$
		(which holds by \cite[Lemma C.1]{Ferrara16Framework}, since \ref{cond:M7}
		holds and $\assignfunAV$ is
		sound), the value state $\aval_{s_i}$ correctly reflects the
		substitutions returned by $\accessfun$;
		\item by the soundness requirement of $\assignfunAV$, and
		the definition of $\sqcup$, the value state $\aval_i$ stores the
		correct over-approximation of the values of field $\texttt{f}_i$ of
		$\ell$.
	\end{itemize}
	Therefore, at each iteration, the obtained abstract state correctly
	abstracts the structure of the object being allocated and the values of its
	fields up to field $i$, with $(\aval_n, \amem_n)$ containing the whole
	object. Thus, condition 1 is satisfied as each $\sstate_1^i$ corresponds to
	the respective $\sstate_i$ with the new binding between $r$ and $\{f_k:v^i_k\}_{k=1}^j$:
	since $\sstate_i \in \gammaA(\astate)$ and $\astate \sqsubseteq_{\Astate} (\aval_n, \amem_n)$,
	$\sstate_1^i$ is included in $\gammaA((\aval_n, \amem_n))$ if the new binding is included in the concretization of $\amem_n$.
	Being part of $\amem_n$, $\ell$ will appear in the domain of all
	$\gammaIDF$ generated by $\gammaAM(\amem_n)$, and thus
	$r$ will be included in the concretization of $\amem_n$ as required by condition 1.
	Each field will be associated with a memory identifier in $\amem_n$ (soundness of $\accessfun$),
	and the values of the fields will be tracked by $\aval_n$ (soundness of $\assignfunAV$).
	Thus, condidion 1 is satisfied as the new bindings are included in the concretization of $\amem_n$ and $\aval_n$.
	Condition 2 is satisfied as the left-hand side is empty.
	Instead, the left-hand side of condition 3 is the set $\{ r \}$
	that corresponds to the result if $\tilde{\gamma}$ as per our previous assumption.

	\paragraph{Base Case 6: $\rhs = \mathtt{new}~\{f_k:\fexp_k\}_{k=1}^j$.}
	This case is analogous to the previous one, but uses a $\freshHsplit$ instead of $\freshSsplit$.

	\paragraph{Base Case 7: $\rhs = \mathtt{new}~\aexp \mid \sexp \mid \bexp$.}
	This case is analogous to the previous one, assigning the value of the initializer to the new location instead of the fields of an object.

	\paragraph{Inductive Case 1: $\rhs = \deref\pexp$.}
	Here, we have that
	$\Rfun(\rhs, \aval, \amem) = (\MId_1,\ \aval_1,\ \amem_1)$, where
	$(\MId, \aval_1, \amem_1) = \Rfun(\texttt{p}, \aval, \amem)$ and
	$\MId_1 = \pointstofun(\MId, \amem_{1})$.
	Instead, on the split state,
	$\texttt{eval}(\rhs, \gammaA(\astate)) = \{(v_2^i, \sstate_1^i)\}_{i=1}^n$ where
	$\texttt{eval}(\pexp, \gammaA(\astate)) = \{(v_1^i, \sstate_1^i) \}_{i=1}^n$ and
	$((v_1^i \in \dom{\nu_{1ih}^i} \land \nu_{1ih}^i(v_1^i) = v_2^i) \lor (v_1^i \in \dom{\mu_{1ih}^i} \land \mu_{1ih}^i(v_1^i) = v_2^i))\}$.
	We inductively assume that $\{ \sstate_1^i \}_{i=1}^n \subseteq \gammaA((\aval_1, \amem_1))$:
	this already satisfies condition 1 since neither $\Rfun$ nor $\texttt{eval}$ produce side
	effects on the state when evaluating $\deref\pexp$. For $\rhs$ to be valid, $\pexp$
	must evaluate to an address, and thus $v_1^i \in \caddrp$ for all $i$.
	By the inductive hypothesis, we have that $v_1^i \in \tilde{\gamma}(\MId, \amem_1)$ for all $i$.
	Combining the two, we have that the state in which $\pointstofun$ is applied is a sound over-approximation
	of the corresponding split states, and the identifiers it is applied to are a superset of the
	concrete addresses used in \texttt{eval}. Thanks to the soundness requirement of $\pointstofun$,
	we have that each element of $I_1$ is either (i) a variable (part of $\vexps$) that, when evaluated,
	will produce an over-approximation of the possible concrete values it can assume (satisfying condition 2),
	or (ii) a memory identifier that, when resolved through $\tilde{\gamma}$,
	will produce a set of addresses that includes all the possible concrete
	addresses produced by $\texttt{eval}$ on $\pexp$ (satisfying condition 3).

	\paragraph{Inductive Case 2: $\rhs = \pexp.f$.}
	In this case, we have that
	$\Rfun(\rhs, \aval, \amem) =\allowbreak(ac, \applysubfun(\sub, \aval_1, \amem_2), \amem_2)$,
	with $(ids, \aval_1, \amem_1) = \Rfun(\texttt{p}, \aval, \amem)$ and
	$(ac, \sub, \amem_2) \allowbreak= \accessfun(ids, f, \amem_1)$.
	On the split state,
	$\texttt{eval}(\rhs, \gammaA(\astate)) = \{(v_2^i, \sstate_1^i)_{i=1}^n \}$,
	where $\texttt{eval}(\pexp, \gammaA(\astate)) = \{(v_1^i, \sstate_1^i) \}_{i=1}^n$, and either
	$v_1^i \in \dom{\nu_{1f}^i} \land \nu_{1f}^i(v_1^i)(f) = v_2^i$ or
	$v_1^i \in \dom{\mu_{1f}^i} \land \mu_{1f}^i(v_1^i)(f) = v_2^i$ hold.
	We inductively assume that $\{ \sstate_1^i \}_{i=1}^n \subseteq \gammaA((\aval_1, \amem_1))$,
	and that $v_1^i \in \tilde{\gamma}(ids, \amem_1)$ for all $i$ (recall that $\pexp$ must
	evaluate to an address for $\rhs$ to be valid).
	If the field $f$ is already part of $\amem$ for every address in $ids$,
	$\accessfun$ simply resolves it and yields the corresponding memory identifiers,
	and $\amem_2 = \amem_1$. Otherwise, $\accessfun$ adds the field $f$ to
	the abstract memory for the relevant identifiers, yielding an updated state
	with the respective substitutions that are soundly applied to $\aval_1$.
	In the latter case, by the soundness requirement of $\accessfun$, $\amem_2$ is an
	over-approximation of the split memories where the new field has been added,
	while in the former case $\amem_2$ is sound by induction.
	Thus, condition 1 is satisfied. Instead, by the soundness requirement
	of $\accessfun$, the memory identifiers in $ac$ are either (i) variables (part of $\vexps$) that, when evaluated,
	will produce an over-approximation of the possible concrete values they can assume (satisfying condition 2),
	or (ii) memory identifiers that, when resolved through $\tilde{\gamma}$,
	will produce a set of addresses that includes all the possible concrete
	addresses produced by $\texttt{eval}$ on $\pexp.f$ (satisfying condition 3).

	\paragraph{Inductive Case 3: $\rhs = \pexp[\sexp]$.}
	This case is analogous to the previous one, but the field name is given by a lattice element obtained trough the $\evalfun$ function.
	\qed

\end{proof}

We can now prove Theorem~\ref{thm:abstract-soundness}.

\begin{proof}[Theorem~\ref{thm:abstract-soundness}]

We proceed by case analysis of (1) $\texttt{ASM}(\bexp)$, and (2) $\texttt{lhs
= rhs}$. To avoid unnecessary verbosity, we present only the main steps,
because a similar full soundness proof can be found in \cite[Theorem
C.1]{Ferrara16Framework}.

\paragraph{Case 1: $\texttt{ASM}(\bexp)$.}

We recall that $\assumefunAV$ and $\assumefunAM$ are required to be sound. Following
the definition of the abstract assume operation, executing $\texttt{ASM}(\bexp)$ produces the abstract
state $(\aval_1, \amem_1)$ obtained by applying the assume functions on both
the value and memory components. Since $\assumefunAV$ and $\assumefunAM$ are
sound, the resulting state is a correct
over-approximation of the concrete semantics. Therefore, soundness of
$\asem{\texttt{ASM}(\bexp)}$ holds trivially.

\paragraph{Case 2: $\texttt{lhs = rhs}$.}

We prove soundness by showing that each operation involved in the semantics is sound
and preserves over-approximation. First, the memory update produced by
$\assignfunAM$ is sound by assumption, since it correctly over-approximates the
concrete effect of the assignment on memory, yielding $(\sub_m, \amem_1) =
\assignfunAM(\lhs, \rhs, \amem)$. By \cite[Lemma C.1]{Ferrara16Framework},
since \ref{cond:M7} holds and $\assignfunAV$ is sound by assumption, $\applysubfun$ is sound.
For this reason, $\aval_m = \applysubfun(\sub_m, \aval, \amem_1)$
shares the same structure of the input $\aval$ modulo
the renamings provided in $\sub_m$. The functions $\Rfun$
applied to both $\rhs$ and $\lhs$ are correct by~\Cref{lemma:Rfun-sound},
meaning that the expressions returned by the calls to $\Rfun$ evaluate to the same values
and addresses of the original expressions. However, memory-related
expressions are replaced with their corresponding memory identifiers,
making it possible for the value abstraction to process them.
Finally, $\aval_2$ is defined as a join of multiple assignments: by assumption,
each individual assignment is sound, and $\sqcup$ computes an
over-approximation of all assignments. It follows that the resulting value state
over-approximates all possible assignments. Therefore, every
step in the semantics preserves soundness, and the resulting state $(\aval_2,
\amem_l)$ is a sound over-approximation of the concrete execution of $\lhs =
\rhs$. \qed
\end{proof}

\subsection{Proof of C1 for the Points-to Domain (\Cref{lem:pt-C1})}
\label{sec:proof-pt-C1}

$\memidfunAM(\amem) = \dom{\amem}$ by definition.
The separation condition in $\gammaAM$ sets
$\dom{\gammaID} = \mathsf{base}(\amem)$ and
$\dom{\gammaIDF} = \mathsf{field}(\amem)$.
Since $\MemID^{\mathsf{b}}$ and $\MemID^{\mathsf{f}}$ partition $\dom{\amem}$,
we have $\mathsf{base}(\amem) \cup \mathsf{field}(\amem) = \dom{\amem}$,
so $\dom{\gammaID} \cup \dom{\gammaIDF} = \memidfunAM(\amem)$.
The invariant holds for all operations: $\freshSsplit$, $\freshHsplit$,
and $\pointedbyfun$ add base identifiers; $\accessfun$ adds field identifiers;
no operation changes an identifier's classification. \qed

\subsection{Proof of C2 for the Points-to Domain (\Cref{lem:pt-C2})}
\label{sec:proof-pt-C2}

We must show that for every $\amem \in {\Amem}^{\mathsf{PT}}$,
$(\smem, \gammaID, \gammaIDF) \in \gammaAM(\amem)$, and distinct
$\mId_1, \mId_2 \in \dom{\gammaID} \cup \dom{\gammaIDF}$,
the locations $\mathsf{locs}(\mId_1)$ and $\mathsf{locs}(\mId_2)$ are disjoint;
and, for field identifiers sharing the same parent base identifier,
$\gammaIDF(\mId_1) \cap \gammaIDF(\mId_2) = \emptyset$ as sets of $(a, f)$ pairs.

We proceed by case analysis on the kinds of $\mId_1$ and $\mId_2$.

\paragraph{Case 1: Both Base ($\mId_1, \mId_2 \in \dom{\gammaID}$).}
Base identifiers are either variable identifiers $\mId_x$ or allocation-site
identifiers $\mId_\ell$.
The concretization $\gammaAM$ requires each base identifier to be mapped
to a disjoint subset of $\caddrh$:
distinct base identifiers correspond to distinct allocation sites or
distinct stack variables, which by the well-formedness of the split state
$\smem$ occupy disjoint address ranges.
Since $\mathsf{locs}(\mId_i) = \gammaID(\mId_i)$ for base identifiers,
$\mathsf{locs}(\mId_1) \cap \mathsf{locs}(\mId_2) = \emptyset$ follows directly.

\paragraph{Case 2: One Base, One Field ($\mId_1 \in \dom{\gammaID}$, $\mId_2 = \mId_{b, f} \in \dom{\gammaIDF}$).}
The field identifier $\mId_2$ has parent base $b$.

\begin{itemize}
  \item \emph{$b \neq \mId_1$}: then $\mathsf{locs}(\mId_2) = \gammaID(b)$, and
    $\mathsf{locs}(\mId_1) = \gammaID(\mId_1)$.
    By Case~1, $\gammaID(\mId_1) \cap \gammaID(b) = \emptyset$, so
    $\mathsf{locs}(\mId_1) \cap \mathsf{locs}(\mId_2) = \emptyset$.

  \item \emph{$b = \mId_1$}: then $\mathsf{locs}(\mId_2) = \gammaID(\mId_1) = \mathsf{locs}(\mId_1)$,
    so these two identifiers share the same locations.
    However, $\gammaID(\mId_1) \subseteq \caddrh$ indexes
    the \emph{base} cells of $\smem$, while $\gammaIDF(\mId_2)$ indexes
    \emph{field} cells of $\smem$.
    By the split-state well-formedness condition, the base-location component
    and the field-slot component of $\smem$ are stored in disjoint parts
    of the concrete store; hence the two identifiers refer to distinct memory
    cells.
    In this subcase the locs-disjointness requirement is vacuously
    superseded by the $(a, f)$-pair disjointness requirement for field
    identifiers, which is established in Case~3 below.
\end{itemize}

\paragraph{Case 3: Both Field ($\mId_1 = \mId_{b_1, f_1}$, $\mId_2 = \mId_{b_2, f_2}$).}

\begin{itemize}
  \item \emph{$b_1 \neq b_2$}: then $\mathsf{locs}(\mId_1) = \gammaID(b_1)$ and
    $\mathsf{locs}(\mId_2) = \gammaID(b_2)$, which are disjoint by Case~1.

  \item \emph{$b_1 = b_2 =: b$, $f_1 \neq f_2$}: both identifiers share the same
    $\mathsf{locs}$ value $\gammaID(b)$, so locs-disjointness fails.
    Nevertheless, the concretization requires that, for each $i$,
    $\gammaIDF(\mId_{b, f_i}) \subseteq \allowbreak(\caddrh \cup \caddrr) \times \{f_i\}$.
    Since $f_1 \neq f_2$, the two sets of $(a, f)$ pairs lie in disjoint
    fibers over the field dimension:
    \begin{align*}
      & \gammaIDF(\mId_{b, f_1}) \cap \gammaIDF(\mId_{b, f_2})
      \;\subseteq\;
      \bigl((\caddrh \cup \caddrr) \times \{f_1\}\bigr) \\
      & \quad\cap\;
      \bigl((\caddrh \cup \caddrr) \times \{f_2\}\bigr)
      \;=\; \emptyset.
    \end{align*}
    This establishes the $(a, f)$-pair disjointness required by C2 for this subcase.
\end{itemize}

In all cases, the required disjointness condition holds. \qed

\subsection{Proof of Monotonicity of \texorpdfstring{$\gammaAM$}{Abstract Memory Gamma} (\Cref{lem:gamma-am-pt-monotone})}
\label{sec:proof-pt-gamma-am-monotone}

Let $\amem_1 \sqsubseteq \amem_2$ and $(\smem, \gammaID, \gammaIDF) \in \gammaAM(\amem_1)$.
For every $\mId \in \dom{\amem_1}$, $\amem_1(\mId) \subseteq \amem_2(\mId)$, so
$\bigcup_{\mId' \in \amem_1(\mId)} \mathsf{locs}(\mId') \subseteq
 \bigcup_{\mId' \in \amem_2(\mId)} \mathsf{locs}(\mId')$.
The soundness condition for $\amem_2$ is therefore inherited from $\amem_1$.
Since $\dom{\amem_1} \subseteq \dom{\amem_2}$, the coverage condition C1 also
holds, so $(\smem, \gammaID, \gammaIDF) \in \gammaAM(\amem_2)$. \qed

\end{document}

\typeout{get arXiv to do 4 passes: Label(s) may have changed. Rerun}